\theoremstyle{plain}
\newtheorem{theorem}{Theorem}[section]
\newtheorem{lemma}[theorem]{Lemma}
\newtheorem{corollary}[theorem]{Corollary}
\newtheorem{proposition}[theorem]{Proposition}
\newtheorem{assumption}[theorem]{Assumption}
\theoremstyle{definition}
\newtheorem{definition}[theorem]{Definition}
\theoremstyle{remark}
\newtheorem{remark}[theorem]{Remark}
\newtheorem{example}[theorem]{Example}
\numberwithin{equation}{section}
\newcommand{\R}{\mathbb{R}}
\newcommand{\E}{\mathbb{E}}
\newcommand{\G}{{\mathcal G}}
\newcommand{\tr}{{\operatorname{Tr}}}
\newcommand{\cov}{{\operatorname{Cov}}}
\newcommand{\tmu}{\tilde{\mu}}
\newcommand{\tpi}{\tilde{\pi}}
\newcommand{\norm}[1]{\left\|#1\right\|}
\title{Bayesian optimal experimental design with Wasserstein information criteria}
\author[,1]{Tapio Helin\thanks{\href{mailto:tapio.helin@lut.fi}{tapio.helin@lut.fi}}}
\author[,2]{Youssef Marzouk\thanks{\href{mailto:ymarz@mit.edu}{ymarz@mit.edu}}}
\author[,3]{Jose Rodrigo Rojo-Garcia\thanks{\href{mailto:joserodrigo.garcia@wias-berlin.de}{joserodrigo.garcia@wias-berlin.de}}}
\affil[1]{LUT University, Lappeenranta, Finland}
\affil[2]{Massachusetts Institute of Technology, Cambridge, USA}
\affil[3]{Weierstrass Institute for Applied Analysis and Stochastics, Berlin, Germany}
\begin{document}

\maketitle

\begin{abstract}

Bayesian optimal experimental design (OED) provides a principled framework for selecting observations or experiments. 
We introduce new Bayesian design criteria based on the expected Wasserstein-$p$ distance between the prior and posterior distributions, termed Wasserstein information criteria. These criteria have many parallels with the widely used expected information gain (EIG) criterion, which instead relies on the Kullback--Leibler divergence. We show that the Wasserstein-$2$ criterion admits a closed-form solution in the linear-Gaussian setting, a property which can be used for more general approximation schemes, and contrast this solution with classical notions of Bayesian alphabetic optimality. Then we develop a stability analysis of the Wasserstein-$1$ criterion, wherein we bound errors induced by perturbations of the prior or likelihood. We partially extend this analysis to the Wasserstein-$2$ criterion. In particular, these results yield error rates for empirical approximations of the prior. We then illustrate the computability of the Wasserstein-$2$ criterion and demonstrate our approximation rates through simulations.
\end{abstract}

  \bigskip
  \noindent Keywords: Bayesian optimal experimental design, Wasserstein distance, Bayesian inference, inverse problems.

\section{Introduction}

The collection of high-quality data, whether in field studies or laboratory settings, is often constrained by factors such as cost, time, and resource availability. Designing experiments that are both highly informative and efficient is therefore a critical challenge in modern scientific and engineering research. \emph{Optimal experimental design} (OED) offers a systematic framework for this task, by formulating it as the maximization of an (expected) utility function that guides the choice of experiments. Traditionally, OED in large-scale mathematical models has required prohibitive computational effort. However, advances in both algorithms and computational hardware are steadily making design even in large-scale models more tractable, thus motivating interest in OED criteria that encode new experimental goals and in computational approximations that allow robust convergence guarantees.

In this study, we adopt a Bayesian approach to OED \cite{chaloner1995bayesian}, which seeks to maximize a \textit{utility function} that is averaged across the joint distribution of the data and the unknown model parameters. More precisely, suppose $X$ denotes our unknown model parameter, $Y$ denotes the observations, and $\theta$ is the design parameter. The expected utility $U$ is given by
\begin{equation}
    \label{eq:main_utility}
	U(\theta) = \E^{\nu_\theta} u(X,Y; \theta),
\end{equation}
where $u(x, y; \theta)$ denotes the utility of an experiment at conditions $\theta$ that yields observations $y$, when the true parameter value is $x$;
and $\nu_\theta$ is the joint prior distribution of the random variables $X$ and $Y$. Notice that $\nu_\theta$ depends on $\theta$. 
For an extensive recent review, see \cite{huan2024optimal}. 

Our approach here is guided especially by experimental design questions arising in modern inverse problems \cite{engl1996regularization}, which involve the imaging of high-dimensional objects through indirect observations. Over the past two decades, the Bayesian approach to inverse problems has garnered wide attention \cite{kaipio2006statistical, Stu10, DS17}. In Bayesian inverse problems, the likelihood function emerges from an observational model such as
\begin{equation}
	\label{eq:BIP}
	Y = \G(x; \theta) + \mathcal{E}
\end{equation}
where $Y$ and $x$ are observations and parameters, respectively, as before; $\mathcal{E}$ is the measurement noise; and the mapping $\G$ is induced by the complex mathematical model underpinning the inverse problem. For instance, $\mathcal{G}$ might represent the map from the coefficients to the solution of partial differential equation, with the latter observed pointwise at a finite number of locations.

For design problems arising from models such as \eqref{eq:BIP}, perhaps the most studied choice for the utility $u$ in \eqref{eq:main_utility} is the Kullback--Leibler (KL) divergence from the prior to the posterior distribution, $u(x,y; \theta) = u(y; \theta) =\mathcal{D}_{\text{KL}}( \mu_\theta^Y \| \mu)$, where $\mu$ denotes the prior distribution of the unknown $x$, and $\mu_\theta^Y$ the posterior distribution of $x$ conditioned on the observation $Y=y$.
The resulting expected utility $U$ is termed the expected information gain (EIG). The EIG is considered a `fully Bayesian' design objective \cite{ryan16}, as the underlying $u$ depends on the entire posterior density. This choice also has extensive justifications from an information theoretic perspective \cite{lindley1956measure,bernardo1979expected}.
%
Another popular utility is given by the negative squared distance (NSD) between $x$ and the posterior mean for a given observation $y$, $u(x, y ; \theta) = -\|x - \mathbb{E}^{\mu_\theta^Y}X \|^2$. Consequently, the expected utility corresponds to the evidence-averaged trace of the posterior covariance. This approach is not fully Bayesian in the same sense as EIG, as it only focuses on the minimization of posterior variance. (Note that in nonlinear settings, the EIG is \textit{not} only a functional of the posterior covariance.) Both of these utilities are well-studied and extend rigorously to an infinite-dimensional setting \cite{alexanderian2021optimal}, where the unknown $X$ can take values in a function space such as a Hilbert space.

Our work is motivated by the tension between the principled, information-theoretic foundation of EIG and the intuitive, estimator distance-based formulation of NSD. To highlight this tension through practical examples, consider the special case of a linear $\G$ in \eqref{eq:BIP} with Gaussian $\mathcal{E}$ and a Gaussian prior on $X$, leading to Gaussian posterior distribution. It is well-known that EIG and NSD then reduce to the classical D- and A-optimality criteria on the posterior covariance, respectively. More precisely, maximizing EIG corresponds to minimizing the log-determinant of the posterior covariance matrix, whereas maximizing expected NSD corresponds to minimizing the trace of the posterior covariance. As a result, NSD distributes the variance minimization more evenly across all posterior marginals, which can be advantageous in imaging applications. In contrast, EIG presents a more intricate challenge: the expected utility diverges to negative infinity if any one-dimensional marginal posterior distribution becomes overly concentrated (i.e., if a diagonal element of the posterior covariance matrix approaches zero). In principle, this can result in undesirable designs, where, in the extreme case, perfect reconstruction of a single pixel occurs at the cost of significant uncertainty elsewhere.

In this paper, we propose an alternative criterion for Bayesian optimal experimental design: maximizing the averaged Wasserstein-$p$ distance between the posterior and the prior, i.e., 
\begin{equation}
    U_p(\theta) = \E^{\nu_\theta} \, W_p^p(\mu, \mu^Y_\theta), \label{eq:Updef}
\end{equation} 
where $\mu$ and $\mu^Y_\theta$ are the prior and posterior probability distributions of $X$, respectively. 
The Wasserstein distance provides a geometric measure of discrepancy between probability distributions by quantifying the optimal transport cost required to transform one distribution into another. 
Consequently, the expected utility rewards \textit{increased} transport cost associated with updating the prior to the posterior, averaged over the prior marginal of $Y$.
Moreover, in contrast with the KL divergence, the Wasserstein distance remain well-defined even for distributions with non-overlapping support, making it particularly useful for OED with a variety of surrogate or empirical modeling approaches.

\subsection{Our contribution}

Broadly, we will show that the expected Wasserstein utility \eqref{eq:Updef}, termed the `W-optimality' criterion, preserves many desirable features of EIG and NSD, but also combines the best of both worlds.
More specifically, we establish the following results:
\begin{itemize}
    \item The expected Wasserstein utility $U_p$ is a proper Bayesian design criterion, aligned with \cite{ryan16}, taking the full posterior distribution as an input. It also extends to the setting where $X$ takes values in an infinite-dimensional Hilbert space, as proved in Theorem \ref{thm:well-posedness}. In Section \ref{subsec:W_information_criterion}, we discuss the validity of $U_p$ as an information criterion in the sense proposed by Ginebra \cite{ginebra2007}.

    \item In the linear--Gaussian case, the $U_2$ optimality criterion has an explicit formula given in Theorem \ref{thm:linear_case}, involving the prior and posterior covariances. This formula bears similarity to the Bayesian A-optimality criterion but distributes the variance objective across posterior marginals, weighted by the prior. In Section \ref{subsec:W2_Gaussian}, we discuss further connections that arise when considering weighted NSD and weighted Wasserstein distances.
    
    \item We prove that the $W_1$-optimality criterion is stable to likelihood and prior perturbations. In particular, prior stability demonstrates a major advantage of the Wasserstein utility, as it enables quantification of errors when applying \textit{empirical} approximations. To our knowledge, similar results are not available for the EIG or NSD.
 
    \item  Borrowing ideas from the optimal transport literature, we demonstrate computability of the W-optimality criterion and, in Section \ref{sec:algorithms}, propose a numerical scheme to evaluate the $U_2$ criterion. Furthermore, we demonstrate the prior stability results obtained for $U_1$ through one-dimensional examples, where the numerical convergence aligns with the expected theoretical rates when applying empirical prior approximations.
\end{itemize}

We also mention that as a side product of our stability theory, we prove new posterior stability results for Bayesian inversion in Wasserstein-$1$ distance both for likelihood (Theorem \ref{thm:U1_like_stability} claim $(i)$) and prior (Theorem \ref{thm:U1_prior_stability} claim $(i)$) perturbations. Our results improve the state of the art by quantifying the perturbation over the evidence expectation, which is not available in previous results such as \cite{sprungk2020local, garbuno2023bayesian}.

\subsection{Literature overview}

Bayesian optimal experimental design has an extensive literature and a rich history. For comprehensive recent overviews, we refer the reader to \cite{huan2024optimal, rainforth2024modern, ryan16}. A broad discussion on different utilities is given in the classical reference \cite{chaloner1995bayesian}. Moreover, a general notion of valid information measures is proposed in \cite{ginebra2007}.

The stability analysis provided in this paper is aligned with earlier work by the authors in \cite{duong2023stability} in the context of EIG and likelihood perturbations. Stability of EIG has also been studied from the point of view of variational approximations to the underlying distributions, motivated by computation; specifically, Foster and others explore variational approximations \cite{foster2019variational, foster2020unified} to overcome nested integration challenges related to estimation of the EIG criterion. 

Inverse problems constitute a class of high-dimensional inference problems where complex mathematical models such as partial differential equations relate unknown parameters to observable data. The need for scalability across various discretization levels in inverse problems has motivated work extending traditional Bayesian experimental design criteria to infinite-dimensional settings. Indeed, EIG and NSD criteria have well-defined corresponding formulations in the nonparametric setting, reviewed, e.g., in \cite{alexanderian2021optimal}.
The development of numerical BOED methods for inverse problems has gained substantial attention during the last decade or so. Among recent advances, we mention work accelerating standard nested Monte Carlo algorithms for EIG \cite{kaarnioja2024quasimontecarlobayesiandesign, beck2020_mulvl, beck2018fast, goda2022}, the use of learning-based surrogates \cite{koval2024tractable, cui2025subspace, wu2023large, o2024derivative, go2025sequential, dong2025variational, siddiqui2024deep, li2024expectedinformationgainestimation}, and other approximation schemes \cite{hoang2025scalable, alexanderian2024optimal, lartaud2024sequential, helin2022edge, burger2021sequentially}. There is also ongoing interest in non-Bayesian OED formulations, as in  \cite{aarset2024global, chenglie2024, encinar2024optimal}. 

Similar to OED, the literature around Wasserstein distances and their fundamental connection to optimal transport theory is extensive. For a comprehensive treatment of the topic, we refer to Villani's seminal work \cite{villani2008optimal}. Computational aspects of optimal transport, including efficient algorithms for computing Wasserstein metrics, are thoroughly discussed in \cite{peyre2020computational}. Furthermore, Wasserstein distances have emerged as powerful tools across various statistical applications, including hypothesis testing, density estimation, and Bayesian inference, as systematically reviewed in \cite{panaretos2019statistical}.

The paper \cite{kerrigan2025geometric} appeared after the first version of this manuscript was posted on the arXiv and also considers Wasserstein objectives for OED. Specifically, the authors propose using an optimal transport distance between $\nu_\theta$ and the product of its $X$ and $Y$ marginals, $\nu_{X} \otimes \nu_{Y,\theta}$, as a design criterion. A modified criterion proposed in \cite{kerrigan2025geometric}, called the `target transport distance,' is precisely our Wasserstein information criterion in the case of $L^p$ cost. The authors argue that such criteria provide useful ways of encoding scale and other aspects of geometry in a design problem.


This paper is organized as follows. In Section \ref{sec:Wasserstein} we discuss the construction and basic properties of the Wasserstein distance based information criterion. In particular, Section \ref{subsec:W2_Gaussian} considers the Wasserstein-$2$ based utility with Gaussian posterior and prior distributions, where explicit formulas can be established. Section \ref{sec:stability} is devoted to the stability analysis of the expected utility. In Section \ref{sec:algorithms} we demonstrate our theoretical findings for $p=1$ through simulations and propose a numerical scheme for approximating the Wasserstein-$2$ criterion. Finally, Section \ref{sec:conclusions} presents our conclusions and outlines directions for future work.

\section{Wasserstein distance and information criteria}
\label{sec:Wasserstein}

This section introduces the Wasserstein information criterion and examines its mathematical well-posedness. We then analyze the special case of Gaussian distributions, which yields closed-form expressions for the expected utility and, consequently, significantly simplifies computational implementation. 

\subsection{Wasserstein information criterion}
\label{subsec:W_information_criterion}

In what follows, ${\mathcal P}({\mathcal X})$ denotes the Borel probability measures on ${\mathcal X}$ and $M_p(\mu) = \int_{\mathcal X} \norm{x}^p_{\mathcal X} \mu(dx)$ denotes the $p$-th moment of a measure $\mu \in {\mathcal P}({\mathcal X})$. We write $\mathcal{P}_p({\mathcal X}) \subset {\mathcal P}({\mathcal X})$ for the subset of probability measures with finite $p$-th moment.

\begin{definition}
    Given two probability measures $\mu_{1},\mu_{2}\in \mathcal{P}(\mathcal{X})$. For $p\in [1,\infty)$ the Wasserstein distance is defined as
    \begin{equation}
    \label{Wasserstein_distance_def}
    W_p(\mu_{1},\mu_{2})=\left(\inf _{\gamma \in \Gamma(\mu_{1},\mu_{2})} \int_{\mathcal{X} \times \mathcal{X}}\|x-w\|^p \mathrm{~d} \gamma(x, w)\right)^{1 / p},
    \end{equation}
    where $\Gamma(\mu_{1},\mu_{2})$ is the set of all couplings between $\mu_{1}$ and $\mu_{2}$.
    
\end{definition}

Some basic properties of the Wasserstein distance are listed in the next proposition.
\begin{proposition}[\cite{villani2008optimal}]
    \label{prop:general_properties_Wp}
    Let $q \geq p \geq 1$, and $\mu_1, \mu_2, \mu_3 \in {\mathcal P}(X)$. The following statements hold:
    \begin{itemize}
        \item[(1)] $W_p$ is a metric in $\mathcal{P}(\mathcal{X})$.        
        \item[(2)] $W_{p}(\mu_1,\mu_2) \leq W_{q}(\mu_1,\mu_2)$.  
        \item[(3)] $W_{p}^{p}(\mu_1,\mu_2) \leq 2^{p-1} (M_{p}(\mu_1)+M_{p}(\mu_2))$.
    \end{itemize}  
\end{proposition}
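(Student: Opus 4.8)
The plan is to prove Proposition~\ref{prop:general_properties_Wp} by establishing each of the three claims separately, all of which are standard but instructive, and each of which follows from manipulating the defining infimum in \eqref{Wasserstein_distance_def}.

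\textbf{Claim (1): $W_p$ is a metric.} First I would note that $W_p(\mu_1,\mu_2) \geq 0$ is immediate, and that $W_p(\mu,\mu) = 0$ follows by taking the diagonal coupling $\gamma = (\mathrm{id},\mathrm{id})_\#\mu$. Symmetry is clear since $\gamma \mapsto \gamma^\top$ (swapping coordinates) is a bijection between $\Gamma(\mu_1,\mu_2)$ and $\Gamma(\mu_2,\mu_1)$ leaving the cost invariant. For $W_p(\mu_1,\mu_2) = 0 \implies \mu_1 = \mu_2$, I would argue that a minimizing sequence of couplings is tight (its marginals are fixed, hence tight, hence the sequence is tight by Prokhorov), extract a weakly convergent subsequence whose limit $\gamma$ is supported on the diagonal, and conclude $\mu_1 = \mu_2$ by pushing forward. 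The main work is the triangle inequality, for which I would use the gluing lemma: given optimal (or near-optimal) couplings $\gamma_{12} \in \Gamma(\mu_1,\mu_2)$ and $\gamma_{23} \in \Gamma(\mu_2,\mu_3)$, glue them along their common $\mu_2$-marginal to obtain a measure on $\mathcal{X}^3$ with the prescribed pairwise marginals, let $\gamma_{13}$ be its $(1,3)$-marginal, and then apply the Minkowski inequality in $L^p(\mathcal{X}^3)$ to $\|x_1 - x_3\| \leq \|x_1 - x_2\| + \|x_2 - x_3\|$. This gives $W_p(\mu_1,\mu_3) \leq W_p(\mu_1,\mu_2) + W_p(\mu_2,\mu_3)$.

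\textbf{Claim (2): monotonicity in $p$.} For a fixed coupling $\gamma$, Jensen's inequality (or the power-mean inequality) applied to the convex function $t \mapsto t^{q/p}$ gives $\left(\int \|x-w\|^p\, d\gamma\right)^{1/p} \leq \left(\int \|x-w\|^q\, d\gamma\right)^{1/q}$. Taking the infimum over $\gamma \in \Gamma(\mu_1,\mu_2)$ on the right-hand side (noting the left-hand side's infimum is $W_p$) yields $W_p(\mu_1,\mu_2) \leq W_q(\mu_1,\mu_2)$; a small point to be careful about is that one infimizes both sides over the same set of couplings.

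\textbf{Claim (3): moment bound.} I would take the independent coupling $\gamma = \mu_1 \otimes \mu_2$, which is admissible, so that $W_p^p(\mu_1,\mu_2) \leq \int_{\mathcal{X}\times\mathcal{X}} \|x - w\|^p\, \mu_1(dx)\,\mu_2(dw)$. Then the elementary convexity inequality $\|x-w\|^p \leq (\|x\| + \|w\|)^p \leq 2^{p-1}(\|x\|^p + \|w\|^p)$ and Fubini give $W_p^p(\mu_1,\mu_2) \leq 2^{p-1}(M_p(\mu_1) + M_p(\mu_2))$, as claimed.

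The only genuine obstacle is the triangle inequality in Claim (1), specifically the measure-theoretic justification of the gluing lemma (existence of a measure on $\mathcal{X}^3$ with prescribed adjacent marginals via disintegration of $\gamma_{12}$ and $\gamma_{23}$ with respect to $\mu_2$); the remaining arguments are short convexity/coupling estimates. Since all of this is classical, I would simply cite Villani~\cite{villani2008optimal} for the gluing lemma and the tightness/compactness facts, and present the $L^p$-Minkowski step explicitly.
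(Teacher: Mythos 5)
Your proof is correct and is precisely the standard argument from Villani that the paper cites for this proposition; the paper itself offers no proof beyond the reference, so there is nothing to contrast. The only cosmetic point worth noting is that on all of $\mathcal{P}(\mathcal{X})$ the quantity $W_p$ may equal $+\infty$, so claim (1) is really about an extended metric---a caveat the paper itself acknowledges immediately after the statement by observing that (3) guarantees finiteness on $\mathcal{P}_p(\mathcal{X})$.
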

We note that property (3) indicates that $W_p$ is a finite distance in ${\mathcal P}_p(X)$.
Next, the following well-known theorem will be the basis for part of our numerical implementations in Section \ref{sec:algorithms}.
\begin{theorem}[{\cite[Brenier's Theorem]{villani2008optimal}}]
    \label{Optimal coupling}
    Let $\mu_1,\mu_2\in\mathcal{P}_{p}(\mathcal{X})$ be probability measures. 
    There exists an optimal coupling $\gamma^{*}\in\Gamma(\mu_1,\mu_2)$, such that
    \begin{equation}         
        W_p^p(\mu_1,\mu_2) = \int_{\mathcal{X} \times \mathcal{X}}\|x_1-x_2\|^p \gamma^{*}(dx_1,dx_2).
    \end{equation}
    Moreover, if $\mathcal{X} = \mathbb{R}^n$, $p = 2$ and $\mu_1$ is absolutely continuous respect to the Lebesgue measure, then there exists a unique (up to a constant) convex and almost everywhere differentiable potential function $\varphi$ such that $T(x) = \nabla \varphi(x)$ and
    \begin{equation}
        \label{eq:optimal_coupling}
        W_2^2(\mu_1, \mu_2) = \int_{\R^n} \norm{x-T(x)}^2 \mu_1(dx).
    \end{equation}
\end{theorem}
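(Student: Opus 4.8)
The plan is to prove the two assertions in turn: first the existence of an optimal coupling for general $p$ via the direct method of the calculus of variations, and then the Brenier structure in the quadratic Euclidean case via Kantorovich duality. Throughout the first part I would work under the (implicit) standing assumption that $\mathcal{X}$ is a Polish space, so that by Ulam's theorem every Borel probability measure is tight.

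For the existence of $\gamma^{*}$, note that $\Gamma(\mu_1,\mu_2)$ is nonempty since it contains the product $\mu_1\otimes\mu_2$, and by Proposition \ref{prop:general_properties_Wp}(3) the cost $\int\norm{x-w}^p\,d(\mu_1\otimes\mu_2)\le 2^{p-1}(M_p(\mu_1)+M_p(\mu_2))$ is finite, so the infimum defining $W_p^p$ is finite. Next I would show that $\Gamma(\mu_1,\mu_2)$ is tight: given $\varepsilon>0$, choose compacts $K_1,K_2\subset\mathcal{X}$ with $\mu_i(\mathcal{X}\setminus K_i)<\varepsilon/2$; then every $\gamma\in\Gamma(\mu_1,\mu_2)$ assigns mass $<\varepsilon$ to the complement of $K_1\times K_2$. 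By Prokhorov's theorem $\Gamma(\mu_1,\mu_2)$ is relatively compact for weak convergence, and it is weakly closed because the marginal constraints pass to weak limits (test against bounded continuous functions of a single variable). Finally, since $(x,w)\mapsto\norm{x-w}^p$ is nonnegative and lower semicontinuous, the map $\gamma\mapsto\int\norm{x-w}^p\,d\gamma$ is weakly lower semicontinuous — write it as the supremum over $N$ of the bounded continuous costs $\min(\norm{x-w}^p,N)$. Taking a minimizing sequence, extracting a weakly convergent subsequence with limit $\gamma^{*}\in\Gamma(\mu_1,\mu_2)$, and applying lower semicontinuity yields $\int\norm{x-w}^p\,d\gamma^{*}\le W_p^p(\mu_1,\mu_2)$; admissibility of $\gamma^{*}$ gives equality.

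For the Brenier part, specialize to $\mathcal{X}=\R^n$, $p=2$, $\mu_1\ll\mathrm{Leb}$. I would invoke Kantorovich duality, $W_2^2(\mu_1,\mu_2)=\sup\{\int\phi\,d\mu_1+\int\psi\,d\mu_2\}$ over $\phi\in L^1(\mu_1),\psi\in L^1(\mu_2)$ with $\phi(x)+\psi(y)\le\norm{x-y}^2$, the supremum being attained. Expanding $\norm{x-y}^2=\norm{x}^2+\norm{y}^2-2\langle x,y\rangle$ and setting $\varphi(x)=\tfrac12(\norm{x}^2-\phi(x))$, $\chi(y)=\tfrac12(\norm{y}^2-\psi(y))$, the constraint becomes $\varphi(x)+\chi(y)\ge\langle x,y\rangle$, and at optimality one may replace the dual pair by the associated Legendre--Fenchel conjugates, so that $\varphi=\chi^{*}$ and $\chi=\varphi^{*}$; in particular $\varphi$ is convex and lower semicontinuous, hence locally Lipschitz on the interior of its domain and, by Rademacher's theorem, differentiable Lebesgue-a.e., therefore $\mu_1$-a.e. since $\mu_1\ll\mathrm{Leb}$. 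Complementary slackness for the optimal coupling $\gamma^{*}$ and the optimal dual pair — valid because the nonnegative integrand $\norm{x-y}^2-\phi(x)-\psi(y)$ has zero $\gamma^{*}$-integral — gives $\varphi(x)+\chi(y)=\langle x,y\rangle$ for $\gamma^{*}$-a.e. $(x,y)$, i.e. $y\in\partial\varphi(x)$; at points of differentiability this forces $y=\nabla\varphi(x)$. Hence $y=\nabla\varphi(x)$ for $\gamma^{*}$-a.e. $(x,y)$, which is precisely the assertion that $\gamma^{*}$ is the coupling induced by the map $T=\nabla\varphi$, yielding \eqref{eq:optimal_coupling}. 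Uniqueness of $\varphi$ up to an additive constant follows by the standard argument: two distinct convex potentials would give two optimal maps, whose induced couplings average to an optimal coupling that, by what was just shown, is itself induced by a single map; this forces the two gradients to coincide $\mu_1$-a.e., and then, using $\mu_1\ll\mathrm{Leb}$, the two convex functions to differ by a constant.

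The main obstacle is the Brenier step: the reduction — via Kantorovich duality and $c$-conjugacy — to a genuinely \emph{convex} potential, together with the measure-theoretic upgrade from ``$y\in\partial\varphi(x)$ for $\gamma^{*}$-a.e. $(x,y)$'' to ``$\gamma^{*}$ is a graph''. Absolute continuity of $\mu_1$ is exactly what makes this work, since it guarantees that the non-differentiability set of $\varphi$ is $\mu_1$-null. By contrast, the first part is a routine application of Prokhorov's theorem and weak lower semicontinuity, the only point to watch being that $\mathcal{X}$ must be Polish so that the marginals are tight.
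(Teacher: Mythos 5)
This is a result the paper does not prove: it is imported verbatim from Villani \cite{villani2008optimal} as Brenier's theorem, so there is no in-paper argument to compare against. Your sketch is the standard textbook proof and is correct in outline: the existence part (tightness of $\Gamma(\mu_1,\mu_2)$, Prokhorov, weak closedness of the marginal constraints, lower semicontinuity of the cost via truncation) is routine and complete, and the Brenier part follows the classical route through Kantorovich duality, reduction of the quadratic cost to the constraint $\varphi(x)+\chi(y)\ge\langle x,y\rangle$, double Legendre--Fenchel conjugation, Rademacher's theorem combined with $\mu_1\ll\mathrm{Leb}$, and complementary slackness to upgrade $y\in\partial\varphi(x)$ to $y=\nabla\varphi(x)$ $\gamma^*$-a.e. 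Three points are asserted rather than proved and carry the real technical weight if one wanted a self-contained argument: (a) attainment of the Kantorovich dual by an integrable conjugate pair $(\varphi,\varphi^*)$ for the quadratic cost under finite second moments; (b) the fact that $\mu_1$ charges the interior of $\mathrm{dom}\,\varphi$ with full mass (so that Rademacher applies where it is needed; this uses that the boundary of a convex set is Lebesgue-null); and (c) in the uniqueness step, passing from ``$\nabla\varphi_1=\nabla\varphi_2$ $\mu_1$-a.e.'' to ``$\varphi_1-\varphi_2$ is constant'' requires a connectedness assumption on (the interior of the region charged by) $\mu_1$ --- without it only the map $T=\nabla\varphi$ is unique $\mu_1$-a.e., and the potential is determined up to a constant only on each connected component. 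That last caveat is a known imprecision in the way the theorem is commonly stated, including in the paper, and does not affect how the result is used here, since only $T$ enters \eqref{eq:optimal_coupling} and the numerics of Section \ref{sec:algorithms}.
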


We formulate our Bayesian inference framework as follows: Let $x$ denote the unknown parameter of interest, taking values in the separable Hilbert space ${\mathcal X}$. We endow $x$ with a prior distribution $\mu \in {\mathcal P}_p({\mathcal X})$. The parameter $x$ is observed through measurements $Y \in \mathbb{R}^d$, which follow a conditional probability density $\pi(\cdot | x; \theta)$, with $\theta$ representing the experimental design variable. The marginal density of $Y$ is denoted by $\pi(\cdot; \theta)$.

Now we are prepared to define our information criteria utilizing Wasserstein distance. 

\begin{definition}
\label{def:wp_utility}
Let $\mu \in {\mathcal P}_p(X)$. The expected Wasserstein-$p$ utility, $p\in [1,\infty)$, of the design $\theta$ is defined as
\begin{equation}
	\label{eq:wp_utility}
    U_p(\theta) := \E^{\nu_\theta}W_{p}^{p}(\mu,\mu^Y(\cdot; \theta)).
    \end{equation}
\end{definition}
Clearly, as the Wasserstein distance in \eqref{eq:wp_utility} does not depend on variable $x$, the expectation reduces to an expectation over the marginal density $\pi(\cdot; \theta)$. In what follows, for notational simplicity, we suppress the explicit dependence on the design variable $\theta$ in expressions where this dependence remains constant throughout the analysis and the result applies uniformly for all designs.

\begin{theorem}[Well-posedness]
\label{thm:well-posedness}
Let $\mu \in {\mathcal P}_p(X)$. Then the expected Wasserstein-$p$ utility in \eqref{eq:wp_utility} for $p \in [1,\infty)$ is finite.
\end{theorem}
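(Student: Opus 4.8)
The plan is to bound $W_p^p(\mu,\mu^y)$ above by an integrable function of $y$ and then integrate against the evidence $\pi(y;\theta)$. The natural starting point is property (3) of Proposition \ref{prop:general_properties_Wp}, which gives
\begin{equation*}
W_p^p(\mu,\mu^y) \leq 2^{p-1}\bigl(M_p(\mu) + M_p(\mu^y)\bigr).
\end{equation*}
Since $M_p(\mu)$ is finite by the assumption $\mu \in {\mathcal P}_p({\mathcal X})$ and does not depend on $y$, the term $2^{p-1}M_p(\mu)$ integrates to a finite constant against the probability measure $\pi(y;\theta)\,dy$. The entire burden therefore falls on showing that
\begin{equation*}
\E^{\pi(y;\theta)} M_p(\mu^y) = \int_{\mathbb{R}^d}\!\!\int_{\mathcal X} \norm{x}_{\mathcal X}^p\, \mu^y(dx)\,\pi(y;\theta)\,dy
\end{equation*}
is finite.

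The key step is to recognize that this double integral is exactly $M_p(\mu)$ by the tower property of conditional expectation. Concretely, the posterior $\mu^y$ and the evidence $\pi(y;\theta)$ arise from disintegrating the joint law $\nu$ of $(X,Y)$ with respect to $Y$; integrating the posterior moment against the evidence simply marginalizes out $Y$ and recovers the prior. Thus $\E^{\pi(y;\theta)}M_p(\mu^y) = \E^\mu\norm{X}_{\mathcal X}^p = M_p(\mu) < \infty$. Combining,
\begin{equation*}
U_p(\theta) = \E^{\pi(y;\theta)}W_p^p(\mu,\mu^y) \leq 2^{p-1}\bigl(M_p(\mu) + M_p(\mu)\bigr) = 2^p M_p(\mu) < \infty,
\end{equation*}
which proves the claim, with an explicit finite bound uniform in $\theta$.

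I would present the disintegration step carefully, since this is where the only real subtlety lies: one must confirm that Bayes' formula genuinely produces a valid disintegration of $\nu$, i.e.\ that $y \mapsto \mu^y$ is a regular conditional distribution and that $\pi(y;\theta)\,dy$ is the true marginal of $Y$ under $\nu$. Given the standing assumptions that the likelihood $\pi(\cdot\mid x;\theta)$ is a genuine conditional density and $\mu$ is a Borel probability measure on a separable Hilbert space, such a disintegration exists (regular conditional probabilities exist on Polish spaces), and Tonelli's theorem justifies exchanging the order of integration for the nonnegative integrand $\norm{x}_{\mathcal X}^p$. The main obstacle is thus purely measure-theoretic bookkeeping rather than any hard estimate; once the tower property is invoked, finiteness is immediate and the argument is complete.
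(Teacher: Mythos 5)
Your proposal is correct and follows exactly the paper's argument: bound $W_p^p(\mu,\mu^y)$ by $2^{p-1}(M_p(\mu)+M_p(\mu^y))$ via Proposition \ref{prop:general_properties_Wp}(3), then use the tower property to identify $\E^{\pi(y;\theta)}M_p(\mu^y)=M_p(\mu)$, yielding the same bound $2^p M_p(\mu)$. The additional measure-theoretic care you flag around the disintegration is sound but the paper treats it as implicit.
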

\begin{proof}
The claim follows directly from the upper bound 
	$U_p \leq 2^{p-1}\left(M_p(\mu) + \E^{\pi(y)}M_p(\mu^y)\right) = 2^p M_p(\mu) < \infty$.
\end{proof}

Ginebra \cite{ginebra2007} proposes a general formalism for what comprises as a valid measure of information in a statistical experiment and, consequently, a valid optimality design criteria. In this formalism, an information measure must satisfy a minimal set of requirements: $(i)$ it is real-valued; $(ii)$ it returns zero for a `totally non-informative experiment,' where $Y$ is independent of $X$; and $(iii)$ it satisfies sufficiency ordering \cite{blackwell1951comparison,blackwell1953equivalent,lecam1964sufficiency}.

Lehmann and Casella \cite{lehmann2006theory} phrase statistical sufficiency as follows: We say $Y|x,\theta_1$ is sufficient for (or `always at least as informative as') $Y|x,\theta_2$ (with the same parameter value $x$) if there exists a random variable $\eta$ with a known probability distribution and a function $W$ such that $W(Y,\eta)|x,\theta_1$ has the same distribution as $Y|x,\theta_2$ for all $x$. 

A sufficiency ordering then implies that if an experiment with design $\theta_1$ is sufficient for an experiment with design $\theta_2$, it follows that $U_p(\theta_2) \leq U_p(\theta_1)$. In other words, condition (iii) states that a valid information criterion must preserve this partial ordering of probability measures, ensuring that more informative experimental settings receive higher utility scores. Many classical criteria, including EIG, satisfy this generalized notion of information measure. 

The Wasserstein-$p$ utility clearly satisfies first two conditions, in particular, as under a non-informative observation we have $\mu^y = \mu$. The third condition 
requires connecting sufficiency ordering to convex ordering by the Blackwell--Sherman--Stein theorem \cite{blackwell1951comparison,blackwell1953equivalent,lecam1964sufficiency}. A special implication of this theory is given by Ginebra \cite{ginebra2007}.

\begin{proposition}[{\cite[Prop. 3.2]{ginebra2007}}]
Suppose ${\mathcal X} = (x_1, ..., x_k)$ is finite. Then experiment $\theta_1$ is sufficient for $\theta_2$ if and only if for a given strictly positive prior distribution $\mu$ on ${\mathcal X}$, we have
\begin{equation*}
    \E^{Y|\theta_1} \phi(\mu^y(\cdot; \theta_1))
    \geq \E^{Y|\theta_2} \phi(\mu^y(\cdot; \theta_2))
\end{equation*}
for every convex function $\phi$ on the simplex of $\R^k$, where $\mu^y(\cdot; \theta)$ is also understood as an element of this simplex.
\end{proposition}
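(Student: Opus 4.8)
The plan is to recognise this statement as a repackaging of the Blackwell--Sherman--Stein comparison theorem, applied to the \emph{standard measure} of an experiment. For each design $\theta$, write $\Delta \subset \R^k$ for the probability simplex and let $\Sigma_\theta \in {\mathcal P}(\Delta)$ be the pushforward of the evidence $\pi(\cdot;\theta)$ under the posterior map $y \mapsto \mu^y(\cdot;\theta)$; then $\E^{Y|\theta}\phi(\mu^y(\cdot;\theta)) = \int_\Delta \phi\, d\Sigma_\theta$, so the displayed inequality says exactly that $\Sigma_{\theta_2}$ precedes $\Sigma_{\theta_1}$ in the convex (Choquet) order. The first step is therefore purely bookkeeping: translate both sides of the equivalence into statements about the two standard measures $\Sigma_{\theta_1}$ and $\Sigma_{\theta_2}$.

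Next I would record the one probabilistic fact that makes the comparison theorem applicable: by the tower property, for every $\theta$ the barycenter of $\Sigma_\theta$ equals the prior, since $\mu_i = \E[\mathbf 1_{\{X=x_i\}}] = \E\big[\E[\mathbf 1_{\{X=x_i\}}\mid Y]\big] = \E[\mu^Y_i]$. Hence $\Sigma_{\theta_1}$ and $\Sigma_{\theta_2}$ are probability measures on the compact convex set $\Delta$ with common barycenter $\mu$, which is precisely the setting in which convex ordering is equivalent to the existence of a dilation (Strassen's theorem / Cartier--Fell--Meyer).

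For the forward implication I would argue as follows. If $\theta_1$ is sufficient for $\theta_2$, there is a randomisation $W$ and an auxiliary variable $\eta$ with $\eta \perp X$ such that $W(Y_1,\eta)\mid x$ has the law of $Y_2\mid x$ for all $x$. Since $\eta$ carries no information about $X$, in the natural joint model the $\theta_2$-posterior is a conditional expectation of the $\theta_1$-posterior given a coarser $\sigma$-algebra; thus $\Sigma_{\theta_2}$ is the law of $\E[S\mid \mathcal{F}]$ with $S \sim \Sigma_{\theta_1}$, and Jensen's inequality for conditional expectations gives $\int \phi\, d\Sigma_{\theta_2} \le \int\phi\,d\Sigma_{\theta_1}$ for every convex $\phi$. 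For the converse, the convex-order inequality together with the common barycenter yields, by Strassen's theorem, a mean-preserving Markov kernel exhibiting $\Sigma_{\theta_1}$ as a dilation of $\Sigma_{\theta_2}$, and this dilation then has to be lifted back to a randomisation between the sample spaces of the two experiments. Strict positivity of $\mu$ enters here: it makes $y \mapsto \mu^y$ a sufficient statistic whose law $\Sigma_\theta$ characterises the experiment up to sufficiency equivalence, so one may replace each experiment by its standard version on $\Delta$ and read the required $W$ off from the dilation kernel.

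I expect the main obstacle to be precisely this last lifting step in the ``if'' direction --- producing the randomisation $W$ from the abstract dilation and checking that it reproduces $Y_2\mid x$ simultaneously for \emph{all} $x$. This is the nontrivial half of Blackwell--Sherman--Stein, and it is where finiteness of ${\mathcal X}$ and strict positivity of $\mu$ do the real work (through the underlying Hahn--Banach separation and compactness arguments); everything else is bookkeeping or a one-line application of Jensen's inequality.
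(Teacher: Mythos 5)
The paper does not prove this proposition at all: it is imported verbatim as Proposition 3.2 of Ginebra \cite{ginebra2007}, so there is no in-paper argument to compare yours against. Judged on its own terms, your sketch correctly identifies the statement as the Blackwell--Sherman--Stein theorem phrased via standard measures: the translation of both sides into the convex (Choquet) order of the pushforwards $\Sigma_{\theta_i}$ of the evidence under $y \mapsto \mu^y(\cdot;\theta_i)$ is exactly right, the common-barycenter observation $\E[\mu^Y]=\mu$ is correct, and the forward direction (sufficiency $\Rightarrow$ the posterior under $\theta_2$ is a conditional expectation of the posterior under $\theta_1$, then Jensen) is complete and sound.

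The genuine gap is the one you yourself flag: in the converse direction you invoke Strassen's theorem to obtain a mean-preserving dilation between $\Sigma_{\theta_2}$ and $\Sigma_{\theta_1}$, but the step from that abstract kernel on the simplex back to a concrete randomisation $W(Y_1,\eta)$ whose conditional law matches $Y_2\mid x$ \emph{simultaneously for every} $x$ is asserted, not carried out. That step is the substance of the theorem, and ``read the required $W$ off from the dilation kernel'' hides the two facts that make it work in the finite case: (a) with $\mu$ strictly positive the posterior vector determines the likelihood ratios, so the standard measure $\Sigma_\theta$ determines the experiment up to equivalence and each experiment may be replaced by its standard version on $\Delta$; and (b) the dilation kernel, conjugated through these standard representations, yields a Markov kernel between the original sample spaces for which one must still verify the identity of conditional laws coordinate-by-coordinate in $x$. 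As written, your converse is a citation to Blackwell--Sherman--Stein rather than a proof of it --- which, to be fair, is precisely the status the proposition has in the paper itself.
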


Since the mapping $\mu \mapsto W_p(\mu, \nu)^p$ is convex (see e.g., \cite[Thm. 4.8]{villani2008optimal}), we obtain as a direct consequence that the utility function $U_p$ satisfies the sufficiency ordering condition for finite parameter spaces $\mathcal{X}$. Ginebra \cite[p. 178]{ginebra2007} outlines a framework for extending this result to countable and continuous parameter spaces by utilizing Le Cam's theory of statistical experiments \cite{lecam2012asymptotic} and generalizations of convex ordering to stochastic processes in Bassan and Scarsini \cite{bassan1991convex}.
However, a rigorous proof of this extension remains beyond the scope of the current paper.

\subsection{Wasserstein-$2$ utility and Gaussian posterior}
\label{subsec:W2_Gaussian}

The presence of Gaussian distributions significantly simplifies the expression of $U_2$-utility, as it is well-known that the Wasserstein-2 distance between two Gaussian measures can be computed explicitly \cite{gelbrich1990formula}.
Namely, suppose $\mu_1$ and $\mu_2$ are Gaussian measures on a separable Hilbert space ${\mathcal X}$, with $\mu_j$ having mean $m_j \in {\mathcal X}$ and covariance operator $C_j$, for $j = 1, 2$.
Then it holds that
\begin{equation}
\label{eq:W2_for_Gaussians}
W_2^2\left(\mu_1,\mu_2\right) = \left\|m_1-m_2\right\|^{2}+\tr_{\mathcal X}\left(C_1+C_2-2\left(C_1^{1 / 2} C_2 C_1^{1/2}\right)^{1 / 2}\right). 
\end{equation}
Consider now the inverse problem \eqref{eq:BIP} with a linear forward operator $\G : {\mathcal X} \to \R^d$ and a centred Gaussian prior $\mu_0$ with covariance $C_0$.
It is well-known (see e.g. \cite{Stu10}) that the posterior distribution $\mu^y$ is Gaussian with mean $x_{\text{post}}(y)$ and covariance $C_{\text{post}}$ satisfying \begin{equation}
\label{Gaussian_Posterior}
x_{\text{post}}(y) =C_{\text{post}} \G^* \Gamma^{-1} y \quad \text{and}  \quad C_{\text{post}}=C_0 - C_0 \G^* \left(\Gamma + \G C_0 \G^*\right)^{-1} \G C_0. 
\end{equation}
This enables us to prove the following useful identity.

\begin{theorem}
\label{thm:linear_case}
Consider the inverse problem \eqref{eq:BIP} with $\epsilon \sim {\mathcal N}(0,\Gamma)$. Moreover, suppose that $\G$ is linear and the prior $\mu_0$ is a centred Gaussian with covariance $C_0$,
we have that the expected utility $U_2$ in  satisfies
\begin{equation*}
	U_2=2 \tr_{\mathcal X}\left(C_0\right)-2 \tr_{\mathcal X}\left(\left(C_0^{\frac 12} C_{\text{post}} C_0^{\frac 12}\right)^{\frac 12}\right).
\end{equation*}
\end{theorem}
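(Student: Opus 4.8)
The plan is to apply the closed-form expression \eqref{eq:W2_for_Gaussians} to the Gaussian pair $(\mu_0,\mu^y)$ and then integrate over the evidence. Since $\mu_0$ is centred with covariance $C_0$ and, by \eqref{Gaussian_Posterior}, $\mu^y$ is Gaussian with mean $m_{post}(y) = C_{post}\G^*\Gamma^{-1}y$ and covariance $C_{post}$ that does \emph{not} depend on $y$, formula \eqref{eq:W2_for_Gaussians} gives
\[
W_2^2(\mu_0,\mu^y) = \norm{m_{post}(y)}^2 + \tr_{\mathcal X}\!\left(C_0 + C_{post} - 2\bigl(C_0^{1/2}C_{post}C_0^{1/2}\bigr)^{1/2}\right).
\]
Only the first term depends on $y$, so after taking $\E^{\pi(y)}$ the whole computation reduces to evaluating $\E^{\pi(y)}\norm{m_{post}(y)}^2$.

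For this term I would work on the Bayesian joint distribution $\nu$ of $(X,Y)$ and use the tower property. Since $m_{post}(Y) = \E^\nu[X \mid Y]$ and the prior is centred, $\E^\nu[m_{post}(Y)] = \E^\nu[X] = 0$, hence $\E^{\pi(y)}\norm{m_{post}(y)}^2 = \tr_{\mathcal X}\cov\bigl(\E^\nu[X\mid Y]\bigr)$. Applying the law of total covariance for ${\mathcal X}$-valued random elements, together with the fact that $\cov(X\mid Y) = C_{post}$ is constant in $y$, yields $\cov\bigl(\E^\nu[X\mid Y]\bigr) = C_0 - C_{post}$; taking traces gives $\E^{\pi(y)}\norm{m_{post}(y)}^2 = \tr_{\mathcal X}(C_0) - \tr_{\mathcal X}(C_{post})$.

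Substituting this back into the expectation of the Gelbrich identity cancels the two $\tr_{\mathcal X}(C_{post})$ contributions and leaves
\[
U_2 = \bigl(\tr_{\mathcal X}(C_0) - \tr_{\mathcal X}(C_{post})\bigr) + \tr_{\mathcal X}(C_0) + \tr_{\mathcal X}(C_{post}) - 2\tr_{\mathcal X}\!\left(\bigl(C_0^{1/2}C_{post}C_0^{1/2}\bigr)^{1/2}\right) = 2\tr_{\mathcal X}(C_0) - 2\tr_{\mathcal X}\!\left(\bigl(C_0^{1/2}C_{post}C_0^{1/2}\bigr)^{1/2}\right),
\]
which is the claimed formula.

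The only care required is the infinite-dimensional bookkeeping. From \eqref{Gaussian_Posterior} one has $0 \le C_{post} \le C_0$ in the Loewner order (the subtracted term is of the form $A(\Gamma + \G C_0\G^*)^{-1}A^*$ with $A = C_0\G^*$, hence positive semidefinite), so $C_{post}$ is trace class; moreover $C_0^{1/2}C_{post}C_0^{1/2} \le C_0^2$, and operator monotonicity of the square root gives $(C_0^{1/2}C_{post}C_0^{1/2})^{1/2} \le C_0$, so this operator is trace class as well and all traces above are finite. Finiteness of $\E^{\pi(y)}\norm{m_{post}(y)}^2$ follows from $M_2(\mu_0) = \tr_{\mathcal X}(C_0) < \infty$ (equivalently, a priori from Theorem~\ref{thm:well-posedness}). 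The main — though still routine — obstacle is invoking the law of total covariance in its Hilbert-space form and checking that the Gaussian Wasserstein formula \eqref{eq:W2_for_Gaussians} applies on the separable Hilbert space ${\mathcal X}$; beyond that, the proof is a direct substitution.
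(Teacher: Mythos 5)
Your proposal is correct, and it reaches the paper's formula by a genuinely different route at the one non-trivial step. Both you and the paper start from the Gelbrich formula \eqref{eq:W2_for_Gaussians} and reduce everything to evaluating $\E^{\pi(y)}\norm{x_{post}(y)}^2$. The paper then verifies the operator identity $C_{post}\G^*\Gamma^{-1}\cov(y)\Gamma^{-1}\G C_{post} = C_0 - C_{post}$ by a direct algebraic computation: it substitutes $B = \Gamma^{-1/2}\G C_0^{1/2}$, writes $C_{post} = C_0^{1/2}(I - B^*(I+BB^*)^{-1}B)C_0^{1/2}$ and $\cov(y) = \Gamma^{1/2}(I+BB^*)\Gamma^{1/2}$, and simplifies the resulting product of resolvents. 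You instead recognize $m_{post}(Y)$ as $\E^\nu[X\mid Y]$ and invoke the law of total covariance, using that $\cov(X\mid Y)=C_{post}$ is deterministic, to get $\cov(\E^\nu[X\mid Y]) = C_0 - C_{post}$ directly; since $m_{post}(Y) = C_{post}\G^*\Gamma^{-1}Y$, this is exactly the same operator identity, just derived probabilistically rather than algebraically. Your route is shorter and more conceptual (it explains \emph{why} the identity holds: it is the Gaussian variance decomposition), at the price of having to justify the conditional-expectation and total-covariance machinery for Hilbert-space-valued Gaussians, which you correctly flag; the paper's route is entirely self-contained operator algebra and needs no such appeal. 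Your remarks on trace-class bookkeeping (e.g.\ $0 \le C_{post} \le C_0$ and operator monotonicity of the square root) are sound and actually supply details the paper leaves implicit.
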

\begin{proof}
Due to identity \eqref{eq:W2_for_Gaussians}, we have
\begin{equation}
	\label{eq:linear_aux1}
U_2=\E^{\pi(y)}\norm{x_{\text{post}}(y)}^2+\tr_{\mathcal X}\left(C_0+C_{\text{post}}-2\left(C_0^{1 / 2} C_{\text{post}} C_0^{1 / 2}\right)^{1 / 2}\right).
\end{equation}
The evidence distribution $\pi(y)$ is a centred Gaussian with $\cov(y) = \Gamma + \G C_0 \G^*$ and, therefore, combining with \eqref{Gaussian_Posterior} we obtain
\begin{equation}
	\label{eq:linear_aux2}
\E^{\pi(y)}\left\|x_{\text{post}}(y)\right\|^2 = \E^{\pi(y)}\left\|C_{\text{post}} \G^* \Gamma^{-1} y \right\|^2
= \tr_{\mathcal X}\left(C_{\text{post}} \G^* \Gamma^{-1} \cov(y)  \Gamma^{-1}\G C_{\text{post}}\right).
\end{equation}
For convenience, let us abbreviate $B=\Gamma^{-\frac 12} \G C_0^{\frac 12}$ and
\begin{equation*}
	T = I - B^* (I+B B^*)^{-1} B,
\end{equation*}
which yields the expressions
\begin{equation*}
	\cov(y) = \Gamma^{\frac 12} (I+BB^*) \Gamma^{\frac 12} \quad \text{and} \quad
	C_{\text{post}} = C_0^{\frac 12} T C_0^{\frac 12}.
\end{equation*}
This yields us
\begin{align}
	\label{eq:linear_aux3}
	C_{\text{post}} & \G^* \Gamma^{-1} \cov(y)  \Gamma^{-1}\G C_{\text{post}} \nonumber\\
	& = C_0^{\frac 12} TB^*(I+BB^*) BT C_0^{\frac 12} \nonumber\\
	& = C_0^{\frac 12} (I - B^* (I+B B^*)^{-1} B)B^*(I+BB^*) B(I - B^* (I+B B^*)^{-1} B) C_0^{\frac 12} \nonumber\\
	& = C_0^{\frac 12} B^* (I- (I+B B^*)^{-1}BB^*) (I+BB^*) (I-BB^*(I+B B^*)^{-1}) BC_0^{\frac 12} \nonumber\\
	& = 	C_0^{\frac 12} B^*(I+BB^*)^{-1}(I+BB^*)(I+BB^*)^{-1} B C_0^{\frac 12}  \nonumber\\
	& = C_0^{\frac 12} B^*(I+BB^*)^{-1}B C_0^{\frac 12} \nonumber\\
	& = C_0 - C_{\text{post}}.
\end{align}
In consequence, combining identities \eqref{eq:linear_aux1}, \eqref{eq:linear_aux2} and \eqref{eq:linear_aux3}, we obtain the result.
\end{proof}

Theorem \ref{thm:linear_case} establishes that optimizing the $U_2$-utility criterion is mathematically equivalent to minimizing the geometric mean between posterior and prior covariance operators. \\

{\bf Weighted A-optimality.} When the primary objective of an experiment is to obtain a point estimate of the parameters, a canonical utility for the design task is given by the averaged NSD with respect to the specific estimator \cite{chaloner1995bayesian}. When the posterior mean is utilized, the task is then to maximize
\begin{equation}
    U_A = - \E^\nu \norm{A(x- x_{\text{post}}(y))}^2,
\end{equation}
where the linear bounded operator $A : {\mathcal X} \to {\mathcal X}$ serves as a weighting function and $x_{\text{post}}$ is the posterior mean. It is well-known that for the posterior emerging in \eqref{Gaussian_Posterior} it holds that 
$U_A = - \tr_{\mathcal X} (A C_{\text{post}} A^*)$.
Therefore, for the weight induced by the prior $A = C_0^{\frac 12}$, we observe that the utility satisfies
\begin{equation}
    U_{C_0^{1/2}} = - \tr_{\mathcal X} (C_0^{\frac 12} C_{\text{post}} C_0^{\frac 12}).
\end{equation}
This yields a clear connection to the Wasserstein utility $W_2$, where the maximization task is reduced to 
\begin{equation*}
    U_2 = \text{const} - 2 \tr_{\mathcal X} \left(C_0^{\frac 12} C_{\text{post}} C_0^{\frac 12}\right)^{\frac 12}.
\end{equation*}
In the finite-dimensional setting ${\mathcal X}=\R^n$, the two utilities are determined by the generalized eigenvalue problem
\begin{equation}
    \label{eq:gen_eigen_prob}
    C_{\text{post}} w_j = \lambda_j C_0^{-1} w_j
\end{equation}
for $j\geq 1$. Indeed, for any eigenpair $(\lambda_j, z_j)$ of the matrix $C_0^{\frac 12} C_{\text{post}} C_0^{\frac 12}$ we have that $(\lambda_j, w_j)$ with $w_j = C_0^{-\frac 12}z_j$ satisfies \eqref{eq:gen_eigen_prob}. Therefore, we obtain
\begin{equation}
    \label{eq:weightedAopt}
    U_2 = \text{const} - 2 \sum \sqrt{\lambda_j} \quad \text{and} \quad
    U_{C_0^{1/2}} = -\sum \lambda_j.
\end{equation}
Equation \eqref{eq:weightedAopt} reveals a compelling parallel: while $U_{C_0^{1/2}}$ minimizes the sum of marginal variances, $U_2$ minimizes the sum of marginal standard deviations of the prior-weighted posterior covariance $C_0^{\frac{1}{2}} C_{\text{post}} C_0^{\frac{1}{2}}$.\\

{\bf Weighted Wasserstein-$2$ distance.} It is also interesting to consider implications of a weighted transport cost in \eqref{Wasserstein_distance_def}. To that end, let us define
\begin{equation*}
        W_{2,B}(\mu_{1},\mu_{2})=\left(\inf _{\gamma \in \Gamma(\mu_{1},\mu_{2})} \int_{\mathcal{X} \times \mathcal{X}}\|B(x-w)\|^2 \mathrm{~d} \gamma(x, w)\right)^{1 / 2}
\end{equation*}
for a finite-dimensional domain ${\mathcal X} = \R^n$ and an invertible matrix $B\in \R^{n\times n}$. For Gaussian measures $\mu_1$ and $\mu_2$, we can deduce, analogously to the standard case (see e.g., \cite{givens1984class}), that the infimum in the weighted Wasserstein-$2$ distance is attained by a Gaussian coupling $\gamma_* \in \Gamma(\mu_1,\mu_2)$.
Therefore, we have
\begin{equation*}
    W_{2,B}^2(\mu_1, \mu_2) = \E^{\gamma_*}\norm{B(X-W)}^2 = \E^{\tilde{\gamma}_*}\|\widetilde{X}- \widetilde{W}\|^2,
\end{equation*}
where the second equality follows from a change of variables, with coupling $\tilde{\gamma}_*$ having marginals $\widetilde{X} \sim B_{\sharp}\mu_1$ and $\widetilde{W} \sim B_{\sharp}\mu_2$. 

Following the steps in Theorem \ref{thm:linear_case}, we observe
\begin{equation*}
    U_{2,B} := \E^{\pi(y)} W_{2,B}^2(\mu,\mu^y) 
    = 2 \tr\left(BC_0B^\top\right)-2 \tr\left(\left((BC_0B^\top)^{\frac 12} BC_{\text{post}}B^\top (BC_0B^\top)^{\frac 12}\right)^{\frac 12}\right)
\end{equation*}
and, consequently, for the choice $B=C_0^{-\frac 12}$, we have
\begin{equation*}
    U_{2,C_0^{-1/2}} = \text{const} - 2 \tr \left(C_0^{-\frac 12} C_{\text{post}} C_0^{-\frac 12}\right)^{\frac 12}
\end{equation*}
Similar to the previous example, the maximization of $U_{2,C_0^{-1/2}}$ holds similarity with A-optimality criterion $U_{C_0^{-1/2}} = - \tr (C_0^{-1/2} C_{\text{post}} C_0^{-1/2})$ and the difference can be identified as the minimization of the sum of marginal variances versus the sum of marginal standard deviations of the \emph{prior precision-weighted} posterior covariance.

\section{Stability bounds with Gaussian likelihood}
\label{sec:stability}

In this section, we consider the stability of the expected utility \eqref{eq:wp_utility} with respect to perturbation of the likelihood and prior distributions. We focus on the specific case with Gaussian likelihood that emerges from the observational model \eqref{eq:BIP} with Gaussian noise distribution $\epsilon \sim {\mathcal N}(0,\Gamma)$. Notice that for convenience we require $\Gamma$ to be invertible, i.e., the distribution of $\epsilon$ cannot be degenerate in any subspace of $\R^d$. This setup gives rise to the likelihood energy functional
\begin{equation}
	\label{eq:Phigauss}
	\Phi(x, y) = \frac 12 \norm{\G(x) - y}_\Gamma^2,
\end{equation}
where the weighted norm is defined as $\norm{\cdot}_\Gamma = \norm{\Gamma^{-\frac 12}\cdot}$. 

Before proceeding, let us state the assumptions that will be required throughout this section.

\begin{assumption}
\label{ass:prior_perturbation}
The following conditions hold for the mapping $\G : {\mathcal X} \to \R^d$ and the Borel probability measure $\mu$ on ${\mathcal X}$:
\begin{itemize}
	\item[(i)](Lipschitz) There exists $L_1>0$ such that
	\begin{equation*}
		\norm{\G(x) - \G(x')}_\Gamma \leq L_1 \norm{x-x'}
	\end{equation*}
	for all $x,x' \in {\mathcal X}$.
	\item[(ii)](sub-Gaussian prior) There exists $L_2>0$ such that
	\begin{equation*}
		\E^\mu \exp\left(L_2 \norm{x}^2\right) < \infty.
	\end{equation*}
	\item[(iii)]($\G$ is proper) There exists $R,L_3>0$ such that
	$\mu(B(0,R))>0$ and $\sup_{x\in B(0,R)} \norm{\G(x)}_\Gamma < L_3.$
\end{itemize}
\end{assumption}

A central assumption in the following statements is that the constants $L_1$ and $L_2$ satisfy a bound of the form $L_1^2 < C L_2$, where $C$ is a universal constant. To provide context for this assumption, we briefly present two illustrative examples.

\begin{example}
Let $\mu$ be a probability measure on ${\mathcal X}$ such that $\mu(B(0,R_0))=1$ for some $0<R_0<\infty$. Then condition $(ii)$ in Assumption \ref{ass:prior_perturbation} holds for any $L_2>0$. In such a case, for results that follow, the requirement of Lipschitz continuity of $\G$ could be restricted to $B(0,R_0)$. For instance, the uniform measure \cite{Stu10} satisfies such a condition.
\end{example}

\begin{example}
Suppose $\epsilon$ has zero-mean Gaussian statistics with $\Gamma = \delta^2 \Gamma_0$, where $\delta>0$ represents the noise level. Then the condition $(i)$ in Assumption \ref{ass:prior_perturbation} is equivalent to
\begin{equation*}
    \norm{\G(x) - \G(x')}_{\Gamma_0} \leq \delta L_1 \norm{x-x'}
\end{equation*}
for all $x,x'\in {\mathcal X}$. Now consider the effect of decreasing the noise level. For a given mapping $\G$, condition (i) clearly implies stronger contractive as $\delta$ decreases, with the smallest admissible $L_1$ therefore scaling proportionally to $1/\delta$. As a result, the condition $L_1^2 < C L_2$ will be violated for $\delta$ small enough.
\end{example}

{\bf Likelihood perturbations.} Consider two forward mappings $\G$ and $\G_*$ giving rise to corresponding likelihood distributions through observational model \eqref{eq:BIP}. Moreover, let us denote the corresponding posterior distributions by $\mu^y$ and $\mu^y_*$ and utilities by $U_1$ and $U^*_1$, respectively. We have the following result:

\begin{theorem}
\label{thm:U1_like_stability}
Suppose $\G$ and $\G_*$ satisfy Assumption \ref{ass:prior_perturbation} with probability measure $\mu$ on ${\mathcal X}$ and with same constants $L_1, L_2, L_3$ and $R$. Moreover, we assume that $L_1^2< \frac{\sqrt 2 - 1}{2} L_2$. Then there exist constants $K_1$ and $K_2$ depending on $L_1, L_2, L_3, R, \Gamma$ and $d$ such that the following claims hold:
\begin{itemize}
    \item[(i)] The evidence averaged posterior perturbation is bounded by
\begin{equation}
    \E^{\pi(y)} W_1(\mu^y, \mu_*^y) \leq K_1 \left(\E^\mu \norm{\G(x) - \G_*(x)}_\Gamma^2\right)^{\frac 12}.
\end{equation}
\item[(ii)] The perturbation of the expected $U_1$-utility satisfies
\begin{equation*}
	|U_1 - U_1^*| \leq K_2 \left(\E^\mu \norm{\G(x) - \G_*(x)}_\Gamma^2\right)^{\frac 12}.
\end{equation*}
\end{itemize}
\end{theorem}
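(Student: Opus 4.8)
The plan is to reduce everything to claim $(i)$ and a few standard estimates on the evidence density. First I would establish claim $(i)$. The Wasserstein-$1$ distance between the two posteriors $\mu^y$ and $\mu^y_*$ can be bounded via the Kantorovich--Rubinstein dual representation, i.e. testing against $1$-Lipschitz functions $f$. Since both posteriors are reweightings of the common prior $\mu$ by the likelihood densities $\exp(-\Phi(x,y))$ and $\exp(-\Phi_*(x,y))$ (with the respective normalizing evidence constants $\pi(y)$ and $\pi_*(y)$), the difference $\int f\,d\mu^y - \int f\,d\mu^y_*$ splits into a term controlled by $|\exp(-\Phi) - \exp(-\Phi_*)|$ and a term controlled by $|\pi(y)^{-1} - \pi_*(y)^{-1}|$. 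Using $|e^{-a} - e^{-b}| \le |a-b|$ for $a,b\ge 0$ together with the Lipschitz quadratic form of $\Phi$ in \eqref{eq:Phigauss}, one gets a pointwise-in-$x$ bound by (a polynomial in $\norm{x}$ and $\norm{y}$) times $\norm{\G(x)-\G_*(x)}_\Gamma$. The sub-Gaussian Assumption \ref{ass:prior_perturbation}(ii) makes all the resulting $\mu$-moments finite, and the properness condition (iii) gives a lower bound on $\pi(y)$ and $\pi_*(y)$ (at least $c\,e^{-c'\norm{y}^2}$ type) so that $\pi(y)^{-1}$ is integrable against the evidence. Then I would take $\E^{\pi(y)}$ of the resulting bound and apply Cauchy--Schwarz in the $x$-integral to pull out $(\E^\mu\norm{\G(x)-\G_*(x)}_\Gamma^2)^{1/2}$, absorbing the polynomial moment factors into the constant $K_1$; the condition $L_1^2 < \tfrac{\sqrt 2 - 1}{2}L_2$ is exactly what is needed so that the Gaussian likelihood times the sub-Gaussian prior still has the exponential moments that make the $y$-integral of all these weights converge.

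Given claim $(i)$, claim $(ii)$ follows from a triangle-inequality argument on the utility. Write
\begin{equation*}
    |U_1 - U_1^*| = \left| \E^{\pi(y)} W_1(\mu,\mu^y) - \E^{\pi_*(y)} W_1(\mu,\mu^y_*) \right|.
\end{equation*}
I would insert and subtract $\E^{\pi(y)} W_1(\mu,\mu^y_*)$. The first difference, $\E^{\pi(y)}|W_1(\mu,\mu^y) - W_1(\mu,\mu^y_*)|$, is bounded by $\E^{\pi(y)} W_1(\mu^y,\mu^y_*)$ via the reverse triangle inequality for the metric $W_1$, and this is exactly claim $(i)$. The second difference, $|\E^{\pi(y)} W_1(\mu,\mu^y_*) - \E^{\pi_*(y)} W_1(\mu,\mu^y_*)| = |\int W_1(\mu,\mu^y_*)(\pi(y)-\pi_*(y))\,dy|$, requires controlling the perturbation of the evidence densities in an $L^1$-type sense weighted by $W_1(\mu,\mu^y_*)$. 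Here $W_1(\mu,\mu^y_*) \le (M_1(\mu) + M_1(\mu^y_*))$ grows at most polynomially in $\norm{y}$ after integrating out $x$, while $|\pi(y)-\pi_*(y)| \le \int |e^{-\Phi} - e^{-\Phi_*}|\,d\mu \le \int \norm{\G(x)-y}_\Gamma \norm{\G(x)-\G_*(x)}_\Gamma\,d\mu(x)$, up to harmless constants; Cauchy--Schwarz again extracts $(\E^\mu\norm{\G-\G_*}_\Gamma^2)^{1/2}$ and the remaining $y$-integral converges by the same sub-Gaussian bookkeeping. Collecting the two pieces gives the constant $K_2$.

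The main obstacle is the careful bookkeeping in claim $(i)$: one must show that all the weight functions appearing — products of a Gaussian likelihood factor, an inverse-evidence factor $\pi(y)^{-1}$, and polynomial moments of $\norm{x}$ and $\norm{y}$ — are jointly integrable against $\mu(dx)\,\pi(y)\,dy$, and it is precisely the smallness condition $L_1^2 < \tfrac{\sqrt2-1}{2}L_2$ that guarantees the Gaussian tail of the likelihood does not overwhelm the sub-Gaussian tail of the prior when one expands $\norm{\G(x)-y}_\Gamma^2 \le 2L_1^2\norm{x}^2 + \dots$ inside the exponential. I would isolate this as a technical lemma: under Assumption \ref{ass:prior_perturbation} and the stated bound on $L_1^2$, there is a finite constant so that $\E^{\pi(y)} \big[ \pi(y)^{-1} (1+\norm{y})^k \,\E^\mu[(1+\norm{x})^\ell e^{-\Phi(x,y)}] \big] < \infty$ for the relevant $k,\ell$, and also the lower bound on $\pi(y)$ coming from (iii). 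Everything else is triangle inequalities, the Kantorovich--Rubinstein duality, $|e^{-a}-e^{-b}|\le|a-b|$, and Cauchy--Schwarz.
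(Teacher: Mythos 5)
Your overall architecture matches the paper's: claim $(i)$ via Kantorovich--Rubinstein duality with the posterior difference split into a likelihood-perturbation term and a normalization-constant term, each closed by Cauchy--Schwarz against $(\E^\mu\norm{\G-\G_*}_\Gamma^2)^{1/2}$; claim $(ii)$ by inserting $\E^{\pi(y)}W_1(\mu,\mu_*^y)$, using the reverse triangle inequality for $W_1$ together with $(i)$, and separately controlling the evidence perturbation. That is exactly the paper's decomposition.

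There is, however, a genuine gap in the key pointwise estimate. You propose to bound $|\exp(-\Phi)-\exp(-\Phi_*)|$ by $|\Phi-\Phi_*|$ via $|e^{-a}-e^{-b}|\le|a-b|$, which yields only a polynomially growing (in $\norm{y}_\Gamma$) majorant with no decay in $y$. With such a bound the required $y$-integrals diverge: for instance $\E^{\pi(y)}I_1(y)$ reduces (since $\pi(y)\propto Z(y)$ and the $Z(y)^{-1}$ cancels) to $\int_{\R^d}(\text{poly in }\norm{y}_\Gamma)\,dy$ times the forward-map perturbation, and likewise $\int W_1(\mu,\mu_*^y)\,|\pi(y)-\pi_*(y)|\,dy$ in claim $(ii)$ cannot converge if $|\pi(y)-\pi_*(y)|$ is only bounded by a polynomial in $\norm{y}_\Gamma$. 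The paper instead uses the refined inequality $|e^{-a}-e^{-b}|\le e^{-\min(a,b)}|a-b|$ and lower-bounds $\min\bigl\{\tfrac12\norm{\G(x)-y}_\Gamma^2,\tfrac12\norm{\G_*(x)-y}_\Gamma^2\bigr\}$ via the parallelogram identity and Young's inequality (Lemma \ref{lem:basic_quad_zzy}), retaining a decaying factor $\exp(\tfrac{\tau-1}{2}\norm{y}_\Gamma^2)$ at the price of a factor $\exp(\tfrac{1}{2\tau}\norm{\G(x)}_\Gamma^2)\lesssim\exp(\kappa\norm{x}^2)$; it is exactly this trade-off, balanced against the lower bound on $Z(y)$ from Lemma \ref{lem:Z_lbound}, that consumes the hypothesis $L_1^2<\tfrac{\sqrt2-1}{2}L_2$. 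Your closing ``technical lemma'' (with the factor $e^{-\Phi(x,y)}$ inside the inner expectation) implicitly presumes this retained decay, but it is inconsistent with the inequality you actually invoke; similarly, your assertion that $M_1(\mu_*^y)$ grows at most polynomially in $\norm{y}$ is not justified without this bookkeeping, since $1/Z_*(y)$ can grow like $\exp(\kappa_1\norm{y}_\Gamma^2)$. The repair is precisely the content of the paper's Lemmas \ref{lem:basic_quad_zzy}, \ref{lem:Z_lbound} and \ref{lem:stability:likelihood_est2}, and Corollary \ref{cor:like_stab}.
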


The proof of Theorem \ref{thm:U1_like_stability} is covered in Section \ref{subsec:like_perturbation}. Notice that the result coincides (up to a constant) with similar likelihood stability result obtained for the EIG in \cite[Thm. 4.4]{duong2023stability}.\\

{\bf Prior perturbations.} Consider now two different prior distributions $\mu$ and $\tmu$ giving rise to corresponding posterior distributions when merged with the likelihood obtained through observational model \eqref{eq:BIP}. Moreover, let us denote the corresponding posterior distributions by $\mu^y$ and $\tmu^y$ and utilities by $U_1$ and $\widetilde{U}_1$, respectively. We have the following result:

\begin{theorem}
\label{thm:U1_prior_stability}
Suppose ${\mathcal G}$ satisfies Assumption \ref{ass:prior_perturbation} with two probability measures $\mu$ and $\tmu$ on ${\mathcal X}$, and $L_1^2< \frac{\sqrt 3 -1}2 L_2$. Then there exist constants $K_1$ and $K_2$ depending on $L_1, L_2, L_3, R, \Gamma$ and $d$ such that the following claims hold:
\begin{itemize}
    \item[(i)] The evidence averaged posterior perturbation is bounded by
    \begin{equation*}
	\E^{\pi(y)} W_1(\mu^y, \tmu^y) \leq K_1 W_2(\mu, \tmu),
\end{equation*}
    \item[(ii)] The perturbation of the expected $U_1$-utility satisfies
    \begin{equation*}
	|U_1 - \widetilde U_1| \leq W_1(\mu,\tmu) + K_2 W_2(\mu, \tmu).
\end{equation*}
\end{itemize}
\end{theorem}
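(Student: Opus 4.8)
The plan is to establish the averaged posterior bound $(i)$ first and then deduce the utility bound $(ii)$ from it, following the strategy already used for likelihood perturbations in Theorem \ref{thm:U1_like_stability} (and in \cite{duong2023stability} for the EIG). Write $Z(y)=\int e^{-\Phi(x,y)}\mu(dx)$ and $\tilde Z(y)=\int e^{-\Phi(x,y)}\tmu(dx)$ for the normalising constants, so that $\mu^y$ and $\tmu^y$ have $\mu$- and $\tmu$-densities $e^{-\Phi(\cdot,y)}/Z(y)$ and $e^{-\Phi(\cdot,y)}/\tilde Z(y)$ respectively, and recall that the evidence density is $\pi(y)=c_d Z(y)$ with $c_d=(2\pi)^{-d/2}(\det\Gamma)^{-1/2}$ the Gaussian normalisation, so that $c_d\int_{\R^d} e^{-\Phi(x,y)}\,dy=1$ for every $x$. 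For $(i)$, fix $y$, let $\gamma$ be an optimal $W_2$-coupling of $\mu$ and $\tmu$, and use Kantorovich--Rubinstein duality: for every $1$-Lipschitz $f$ with $f(0)=0$ (so $|f(x)|\le\norm{x}$), telescoping through $\gamma$ splits $\int f\,d\mu^y-\int f\,d\tmu^y$ into $\mathrm{(a)}=\frac{1}{Z(y)}\int(f(x)-f(x'))e^{-\Phi(x,y)}\,d\gamma$, $\mathrm{(b)}=\frac{1}{Z(y)}\int f(x')(e^{-\Phi(x,y)}-e^{-\Phi(x',y)})\,d\gamma$, and $\mathrm{(c)}=\frac{\tilde Z(y)-Z(y)}{Z(y)}\int f\,d\tmu^y$. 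Each term will be bounded by a quantity independent of $f$, so taking the supremum bounds $W_1(\mu^y,\tmu^y)$ pointwise; multiplying by $Z(y)$ and integrating $c_d\int(\cdot)\,dy$ then produces $\E^{\pi(y)}W_1(\mu^y,\tmu^y)$. Term $\mathrm{(a)}$ contributes $c_d\int\!\!\int\norm{x-x'}e^{-\Phi(x,y)}\,d\gamma\,dy=\int\norm{x-x'}\,d\gamma\le W_2(\mu,\tmu)$. For $\mathrm{(b)}$ I use $|e^{-a}-e^{-b}|\le|a-b|(e^{-a}+e^{-b})$ together with the factorisation $\Phi(x,y)-\Phi(x',y)=\tfrac12(\norm{\G(x)-y}_\Gamma-\norm{\G(x')-y}_\Gamma)(\norm{\G(x)-y}_\Gamma+\norm{\G(x')-y}_\Gamma)$ and Assumption \ref{ass:prior_perturbation}(i); the $y$-integrals then become Gaussian first moments, leaving a polynomial in $\norm{x},\norm{x'}$ against $\gamma$, from which a Cauchy--Schwarz step extracts $W_2(\mu,\tmu)$ against finite polynomial moments of $\mu,\tmu$ (finite by Assumption \ref{ass:prior_perturbation}(ii)).

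The delicate term is $\mathrm{(c)}$: here $\int f\,d\tmu^y$ is controlled only by $M_1(\tmu^y)=\tilde Z(y)^{-1}\int\norm{x}e^{-\Phi(x,y)}\tmu(dx)$, whose factor $\tilde Z(y)^{-1}$ is \emph{not} cancelled by the evidence weight $\pi(y)\propto Z(y)$. This is exactly where Assumption \ref{ass:prior_perturbation}(iii) enters: since $\sup_{B(0,R)}\Phi(\cdot,y)\le\tfrac12(L_3+\norm{y}_\Gamma)^2$, we get $\tilde Z(y)\ge\tmu(B(0,R))\exp(-\tfrac12(L_3+\norm{y}_\Gamma)^2)$, so $\tilde Z(y)^{-1}$ grows at most like $\exp(\tfrac12\norm{y}_\Gamma^2)$. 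Bounding $|\tilde Z(y)-Z(y)|$ as in $\mathrm{(b)}$, the contribution of $\mathrm{(c)}$ to $Z(y)W_1(\mu^y,\tmu^y)$ is a product of two $e^{-\Phi}$-type exponentials against one $\tilde Z(y)^{-1}$; completing the square in the $y$-integration shows that the $\norm{y}_\Gamma^2$-coefficient of the exponent is $-\tfrac12-\tfrac12+\tfrac12=-\tfrac12<0$, so the $y$-integral converges and leaves a residual factor of the form $\exp(c L_1^2(\norm{x}^2+\norm{x'}^2+\norm{x''}^2))$ coming from the shifts. A final Cauchy--Schwarz in $\gamma(dx,dx')\,\tmu(dx'')$ extracts $W_2(\mu,\tmu)$, and the residual exponential moments are finite precisely under a hypothesis of the form $L_1^2<\mathrm{const}\cdot L_2$; I expect the sharp constant $\tfrac{\sqrt3-1}{2}$ to emerge from optimising these completing-the-square and Cauchy--Schwarz steps. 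Collecting $\mathrm{(a)},\mathrm{(b)},\mathrm{(c)}$ gives $(i)$ with $K_1$ depending on $L_1,L_2,L_3,R,\Gamma,d$.

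For $(ii)$, decompose $U_1-\widetilde{U}_1=\E^{\pi(y)}\!\big[W_1(\mu,\mu^y)-W_1(\tmu,\tmu^y)\big]+\int_{\R^d} W_1(\tmu,\tmu^y)\,(\pi(y)-\tpi(y))\,dy=:A+B$. By two applications of the triangle inequality for the metric $W_1$, $|W_1(\mu,\mu^y)-W_1(\tmu,\tmu^y)|\le W_1(\mu,\tmu)+W_1(\mu^y,\tmu^y)$, hence $|A|\le W_1(\mu,\tmu)+\E^{\pi(y)}W_1(\mu^y,\tmu^y)\le W_1(\mu,\tmu)+K_1W_2(\mu,\tmu)$ by $(i)$ — this already accounts for the $W_1(\mu,\tmu)$ summand in the claim. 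For $B$, write $\pi(y)-\tpi(y)=c_d(Z(y)-\tilde Z(y))$ and bound $|Z(y)-\tilde Z(y)|$ exactly as the evidence difference above, while bounding $W_1(\tmu,\tmu^y)$ by duality by $\tilde Z(y)^{-1}\int\norm{x'}\int|e^{-\Phi(x'',y)}-e^{-\Phi(x',y)}|\,\tmu(dx'')\tmu(dx')$; the product is again two $e^{-\Phi}$-type factors against one $\tilde Z(y)^{-1}$, so the very same completing-the-square and Cauchy--Schwarz argument used for $\mathrm{(c)}$ gives $|B|\le\mathrm{const}\cdot W_2(\mu,\tmu)$. Adding the two bounds yields $|U_1-\widetilde{U}_1|\le W_1(\mu,\tmu)+K_2W_2(\mu,\tmu)$ with $K_2=K_1+\mathrm{const}$.

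The main obstacle is the term $\mathrm{(c)}$ and its twin inside $B$: one must extract the sharp $\exp(\tfrac12\norm{y}_\Gamma^2)$ growth of $\tilde Z(y)^{-1}$ from properness, carry out the Gaussian $y$-integration by completing the square, and then bookkeep the resulting $\exp(c L_1^2\norm{x}^2)$ factors carefully enough that the sub-Gaussian integrability exponent $L_2$ still dominates — which is exactly what the hypothesis $L_1^2<\tfrac{\sqrt3-1}{2}L_2$ guarantees. The appearance of $\sqrt3$ here, versus the $\sqrt2$ of the likelihood-perturbation Theorem \ref{thm:U1_like_stability}, reflects the extra transport step incurred because now it is the two priors, rather than the two likelihoods, that differ.
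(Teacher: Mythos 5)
Your proposal follows essentially the same route as the paper: the decomposition of $W_1(\mu^y,\tmu^y)$ via Kantorovich duality and a $W_2$-optimal coupling into a transport term, a likelihood-difference term, and a normalisation term $\tfrac{\tilde Z-Z}{Z}\int f\,d\tmu^y$ is exactly the paper's $I_1+I_2$ split; the properness-based lower bound on $\tilde Z(y)$, the Cauchy--Schwarz extraction of $W_2(\mu,\tmu)$, and the three-term telescope for claim $(ii)$ all coincide with the paper's argument (Lemmas \ref{lem:stability:likelihood_est}--\ref{lem:Z_lbound}, Corollary \ref{cor:prior_stab_aux1}). Your handling of terms $\mathrm{(a)}$ and $\mathrm{(b)}$ by integrating in $y$ first against the exact Gaussian normalisation is a mild simplification of the paper's pointwise envelopes $\phi_\kappa(y)\psi_\kappa(x)$, and is fine. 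The only deferred step is the constant-tracking in term $\mathrm{(c)}$: a naive completion of the square there can land on a threshold slightly more restrictive than $L_1^2<\tfrac{\sqrt3-1}{2}L_2$ unless the Young/Cauchy--Schwarz parameters are tuned as the paper does through the function $g(t)$ in Corollary \ref{cor:prior_stab_aux1}, so that verification should be carried out rather than asserted.
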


The proof of Theorem \ref{thm:U1_prior_stability} is postponed to Section \ref{subsec:prior_perturbation}.

\begin{remark}
The 'loss' in the Wasserstein distance bounds from $W_1$ to $W_2$ in Theorem \ref{thm:U1_prior_stability} arises from the application of Cauchy-Schwarz inequalities in the respective proofs. Notably one can replace the Cauchy-Schwarz argument with a H\"older type inequality and improve the bounding distance to $W_p$ for $1<p<2$. However, this comes at the cost of requiring stronger moment bounds in Assumption \ref{ass:prior_perturbation} with potential blow-up as $p$ approaches $1$. We leave further exploration of this improvement to future studies.
\end{remark}

Similar proof technique utilized for Theorem \ref{thm:U1_prior_stability} can be also tested with the $U_2$-utility. However, posterior stability in $(i)$ relies strongly on the Kantorovich duality formulation of the $W_1$-distance. Moreover, to the authors best knowledge, there are no stability results for $W_2^2(\mu^y, \tmu^y)$ available in the literature even for fixed observational data $y$. While it is beyond the scope of this paper to provide such a general estimate, we can state the following upper bound.

\begin{theorem}
\label{thm:stability:main}
Suppose ${\mathcal G}$ satisfies Assumption \ref{ass:prior_perturbation} with two probability measures $\mu$ and $\tmu$ on ${\mathcal X}$, and $L_1^2< \frac{\sqrt 3 -1}2 L_2$. Then there exist constants $K_1$ and $K_2$ depending on depending on $L_1, L_2, L_3, R, \Gamma$ and $d$ such that the following inequality holds    
\begin{equation}
	|U_2 - \widetilde U_2| \leq  K_1 W_2(\mu,\tmu) + K_2 \sqrt{\E^{\pi(y)} W_2^2(\mu^y, \tmu^y)}.
\end{equation}
\end{theorem}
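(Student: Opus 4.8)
The plan is to bound the difference of expected utilities by a triangle-inequality argument that mirrors the proof of Theorem~\ref{thm:U1_prior_stability}(ii), but using the $W_2$ metric in place of $W_1$. Write
\[
|U_2 - \widetilde U_2| = \left| \E^{\pi(y)} W_2^2(\mu,\mu^y) - \E^{\widetilde\pi(y)} W_2^2(\tmu,\tmu^y) \right|,
\]
and split this into two contributions: one coming from the mismatch between the two prior measures $\mu$ and $\tmu$ appearing as the \emph{first} argument of $W_2^2$, and one coming from the mismatch between the posteriors $\mu^y$ and $\tmu^y$ (together with the different evidence measures $\pi(y)$ and $\widetilde\pi(y)$ under which the expectations are taken). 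The first I would control directly: since $W_2^2(\cdot,\nu)$ differs by at most $\big(W_2(\mu,\nu)+W_2(\tmu,\nu)\big) W_2(\mu,\tmu)$ (the "reverse triangle" estimate, using $|a^2-b^2|=|a-b||a+b|$ with $a=W_2(\mu,\mu^y)$, etc.), and all the $W_2$-diameters here are uniformly bounded by moment bounds guaranteed by Assumption~\ref{ass:prior_perturbation}(ii) via Proposition~\ref{prop:general_properties_Wp}(3), this term is $\lesssim W_2(\mu,\tmu)$, feeding the first term $K_1 W_2(\mu,\tmu)$ on the right-hand side.

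For the second contribution I would first reduce the two different evidence expectations to a common one. The standard device (already implicit in the proof of Theorem~\ref{thm:U1_prior_stability}) is to write $\E^{\widetilde\pi(y)}[\,\cdot\,] = \E^{\pi(y)}[\tfrac{\widetilde\pi(y)}{\pi(y)}\,\cdot\,]$ and control $\E^{\pi(y)}|\widetilde\pi(y)/\pi(y) - 1|$ by the total-variation or $W_2$-distance of the priors; the Gaussian-likelihood structure \eqref{eq:Phigauss} together with the Lipschitz and sub-Gaussian assumptions, and the constraint $L_1^2 < \tfrac{\sqrt3-1}{2}L_2$, is exactly what makes the requisite exponential-moment integrals finite (this is where that numerical threshold is used, as in the earlier theorems). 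After this change of measure, the remaining term is $\E^{\pi(y)}\big| W_2^2(\tmu,\mu^y) - W_2^2(\tmu,\tmu^y)\big|$, and applying $|a^2-b^2|\le|a-b|(a+b)$ once more with $a=W_2(\tmu,\mu^y)$, $b=W_2(\tmu,\tmu^y)$, together with the triangle inequality $|a-b|\le W_2(\mu^y,\tmu^y)$ and uniform boundedness of $a+b$, gives
\[
\E^{\pi(y)}\big| W_2^2(\tmu,\mu^y) - W_2^2(\tmu,\tmu^y)\big| \;\lesssim\; \E^{\pi(y)} W_2(\mu^y,\tmu^y) \;\le\; \sqrt{\E^{\pi(y)} W_2^2(\mu^y,\tmu^y)}
\]
by Cauchy--Schwarz, which is precisely the second term $K_2\sqrt{\E^{\pi(y)}W_2^2(\mu^y,\tmu^y)}$.

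The key obstacle, and the reason this theorem is weaker (and stated as an upper bound rather than a genuine stability estimate) than Theorem~\ref{thm:U1_prior_stability}, is that there is no analogue of Kantorovich--Rubinstein duality for $W_2$: in the $W_1$ case one bounds $W_1(\mu^y,\tmu^y)$ directly in terms of $W_2(\mu,\tmu)$ by testing against Lipschitz functions and exploiting the explicit likelihood ratio, but the quadratic transport cost does not linearize this way, so the term $\E^{\pi(y)}W_2^2(\mu^y,\tmu^y)$ cannot be further reduced to $W_2(\mu,\tmu)$ with the present tools and must be left on the right-hand side. Everything else is bookkeeping: tracking that all $W_2$-diameters and the evidence-ratio moments are finite under Assumption~\ref{ass:prior_perturbation}, and collecting the constants $K_1,K_2$ in terms of $L_1,L_2,L_3,R,\Gamma,d$. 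I would also take care that the change-of-measure step is symmetric enough to handle $|U_2-\widetilde U_2|$ (not just one direction), which follows by swapping the roles of $\mu$ and $\tmu$ and using that the hypotheses are symmetric in the two priors.
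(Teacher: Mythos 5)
Your proposal is correct and follows essentially the same route as the paper: a three-way triangle-inequality decomposition (prior argument, posterior argument, evidence measure), the factorization $|a^2-b^2|=|a-b|(a+b)$ combined with $|W_2(\nu,\mu^y)-W_2(\nu,\tmu^y)|\le W_2(\mu^y,\tmu^y)$ and Cauchy--Schwarz to produce the $\sqrt{\E^{\pi(y)}W_2^2(\mu^y,\tmu^y)}$ term, and a coupling/likelihood-Lipschitz estimate of the evidence perturbation yielding the $W_2(\mu,\tmu)$ term. The only imprecision is the claim that the $W_2$-diameters are \emph{uniformly} bounded in $y$: $M_2(\mu^y)$ and $M_2(\tmu^y)$ are not uniformly bounded, and one must instead bound their evidence averages (via Jensen and the cross-moment bound of Proposition \ref{prop:stab_moment_bound}), which is exactly the bookkeeping the paper carries out for the constants $\widetilde K_1$ and $K_2$.
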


In what follows, we break down the proofs for Theorems \ref{thm:U1_like_stability}, \ref{thm:U1_prior_stability} and \eqref{thm:stability:main} into three parts. In Section \ref{subsec:basic_properties} we record some basic inequalities related to the likelihood induced by \eqref{eq:Phigauss} in concert with mappings $\G$ and probability measure $\mu$ satisfying Assumption \ref{ass:prior_perturbation}. Section \ref{subsec:like_perturbation} presents the proof for the likelihood perturbation case, while Section \ref{subsec:prior_perturbation} outlines how the result is obtained for the prior perturbation case.

\subsection{Basic properties}
\label{subsec:basic_properties}

Below, we occasionally apply the Assumption \ref{ass:prior_perturbation} in the form of following Corollary:
\begin{corollary}
\label{cor:direct_cor_assump}
Let ${\mathcal G}$ satisfy Assumption \ref{ass:prior_perturbation} for a probability measure $\mu$ on ${\mathcal X}$.
It holds that
\begin{itemize}
    \item[(i)] for any $x\in {\mathcal X}$ we have
    \begin{equation}
        \label{eq:G_bound}
        \norm{{\mathcal G}(x)}_\Gamma \leq L_1 \norm{x} + L_1R + L_3 \quad \text{and}
    \end{equation}
    \item[(ii)] for any $p\geq 0$ and $L_2'<L_2$ we have
    \begin{equation}
        \E^\mu \left(\norm{x}^p \exp \left(L_2' \norm{x}^2\right)\right) < \infty.
    \end{equation}
\end{itemize}
\end{corollary}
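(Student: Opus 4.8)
The plan is to prove the two claims separately, each following directly from the relevant parts of Assumption \ref{ass:prior_perturbation} together with one elementary estimate.

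For claim $(i)$ I would exploit the properness condition $(iii)$ to anchor the bound at a fixed reference point. Since $\mu(B(0,R))>0$, the ball $B(0,R)$ is non-empty, so I may fix some $x_0 \in B(0,R)$ with $\norm{\G(x_0)}_\Gamma < L_3$. For an arbitrary $x \in {\mathcal X}$, the triangle inequality in the weighted norm gives
\begin{equation*}
    \norm{\G(x)}_\Gamma \leq \norm{\G(x) - \G(x_0)}_\Gamma + \norm{\G(x_0)}_\Gamma.
\end{equation*}
Applying the Lipschitz bound $(i)$ to the first term together with $\norm{x-x_0} \leq \norm{x} + \norm{x_0} \leq \norm{x} + R$ (using $x_0 \in B(0,R)$) then yields the claimed estimate $\norm{\G(x)}_\Gamma \leq L_1 \norm{x} + L_1 R + L_3$.

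For claim $(ii)$ the key observation is that any polynomial factor can be absorbed into a Gaussian exponential at an arbitrarily small rate. Concretely, setting $\epsilon = L_2 - L_2' > 0$, the scalar function $t \mapsto t^p \exp(-\epsilon t^2)$ is bounded on $[0,\infty)$, so the finite constant $C_\epsilon = \sup_{t\geq 0} t^p \exp(-\epsilon t^2)$ furnishes the pointwise bound $\norm{x}^p \leq C_\epsilon \exp(\epsilon \norm{x}^2)$. Multiplying by $\exp(L_2' \norm{x}^2)$ and using $\epsilon + L_2' = L_2$, I obtain
\begin{equation*}
    \norm{x}^p \exp\left(L_2' \norm{x}^2\right) \leq C_\epsilon \exp\left(L_2 \norm{x}^2\right).
\end{equation*}
Integrating against $\mu$ and invoking the sub-Gaussian condition $(ii)$ then gives the finiteness of $\E^\mu\left(\norm{x}^p \exp\left(L_2' \norm{x}^2\right)\right)$.

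Both arguments are elementary, and I do not anticipate a genuine obstacle. The only mild subtlety is ensuring that the reference point $x_0$ in claim $(i)$ actually exists, which is precisely why the properness condition $(iii)$ demands $\mu(B(0,R))>0$ rather than merely asserting a bound on $\G$ over the ball. The corollary thus serves mainly as a convenient repackaging of Assumption \ref{ass:prior_perturbation} for repeated use in the subsequent stability estimates.
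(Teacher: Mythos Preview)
Your proof is correct and essentially matches the paper's for claim $(i)$: both anchor at a reference point in $B(0,R)$ and combine the triangle inequality with the Lipschitz bound (the paper writes it as an infimum over $x'$, but this is cosmetic). For claim $(ii)$ the paper simply invokes H\"older's inequality, whereas you use the pointwise absorption $\norm{x}^p \leq C_\epsilon \exp(\epsilon\norm{x}^2)$; both are elementary and valid, with your argument arguably the more direct since it avoids separately checking that polynomial moments of $\mu$ are finite.
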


\begin{proof}
For the first claim, we have by Assumption \ref{ass:prior_perturbation} $(i)$ that
\begin{equation*}
	\norm{\G(x)}_\Gamma \leq \norm{\G(x) - \G(x_0)}_\Gamma + \norm{\G(x_0)}_\Gamma
    \leq L_1 \norm{x}+L_1\norm{x_0} + \norm{\G(x_0)}_\Gamma
\end{equation*}
for any $x_0\in{\mathcal X}$. Since $x_0$ is arbitrary, we have
\begin{equation*}
    \norm{\G(x)}_\Gamma \leq L_1 \norm{x} + \inf_{x'\in{\mathcal X}} \left(L_1\norm{x'} + \norm{\G(x')}_\Gamma\right)
\end{equation*}
Due to condition (iii), the infimum on the right-hand side can be bounded by $L_1R+L_3$. 

The second claim follows directly by the H\"older inequality.
\end{proof}

Let us next record some basic properties related to the negative log-likelihood $\Phi$ in \eqref{eq:Phigauss}.

\begin{lemma}
\label{lem:basic_quad_zzy}
Let $y, z_1, z_2 \in \R^d$. We have for any $\tau>0$ that
\begin{multline}
    \label{eq:like_basic_ineq}
    \left|\exp\left(-\frac 12 \norm{z_1 - y}_\Gamma^2\right) -\exp\left(-\frac 12 \norm{z_2 - y}_\Gamma^2\right)\right| \\
    \leq \frac 12 \left(\norm{z_1}_\Gamma + \norm{z_2}_\Gamma + 2\norm{y}_\Gamma \right)  \exp\left(\frac{\tau-1}{2} \norm{y}_\Gamma^2 + \frac 1{2\tau}\norm{z_1}_\Gamma^2 + \frac 1{2\tau}\norm{z_2}_\Gamma^2\right)  \norm{z_1-z_2}_\Gamma
\end{multline}
\end{lemma}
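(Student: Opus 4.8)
The plan is to reduce \eqref{eq:like_basic_ineq} to two elementary one-variable facts: a quantitative Lipschitz estimate for $t \mapsto e^{-t}$ that \emph{retains} the exponential decay, and a Young-type lower bound for $\norm{z_i - y}_\Gamma^2$. Throughout I would write $\langle u, v\rangle_\Gamma := \langle \Gamma^{-1/2} u, \Gamma^{-1/2} v\rangle$ for the weighted inner product inducing $\norm{\cdot}_\Gamma$, and set $a := \tfrac12\norm{z_1 - y}_\Gamma^2 \geq 0$ and $b := \tfrac12\norm{z_2 - y}_\Gamma^2 \geq 0$.

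First I would observe that $\left|e^{-a} - e^{-b}\right| = \left|\int_b^a e^{-t}\,dt\right| \leq e^{-\min(a,b)}\,|a - b|$, since $e^{-t} \leq e^{-\min(a,b)}$ on the interval of integration. This splits the task into bounding $|a-b|$ and $e^{-\min(a,b)}$ separately. For the first factor I would use the difference-of-squares identity $\norm{z_1-y}_\Gamma^2 - \norm{z_2-y}_\Gamma^2 = \langle (z_1 - y) + (z_2 - y),\, z_1 - z_2\rangle_\Gamma$ together with Cauchy--Schwarz and the triangle inequality, obtaining $|a - b| \leq \tfrac12\big(\norm{z_1}_\Gamma + \norm{z_2}_\Gamma + 2\norm{y}_\Gamma\big)\norm{z_1 - z_2}_\Gamma$, which is exactly the polynomial prefactor on the right-hand side of \eqref{eq:like_basic_ineq}.

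For the second factor I would note $e^{-\min(a,b)} = \max\{e^{-a}, e^{-b}\}$, so it suffices to bound each of $-a$ and $-b$ above by the exponent in \eqref{eq:like_basic_ineq}. Expanding $\norm{z_i - y}_\Gamma^2 = \norm{z_i}_\Gamma^2 - 2\langle z_i, y\rangle_\Gamma + \norm{y}_\Gamma^2$ and applying Young's inequality in the form $|2\langle z_i, y\rangle_\Gamma| \leq \tau^{-1}\norm{z_i}_\Gamma^2 + \tau\norm{y}_\Gamma^2$ gives $\norm{z_i - y}_\Gamma^2 \geq (1 - \tau^{-1})\norm{z_i}_\Gamma^2 + (1 - \tau)\norm{y}_\Gamma^2$, hence $-\tfrac12\norm{z_i - y}_\Gamma^2 \leq \tfrac{1-\tau}{2\tau}\norm{z_i}_\Gamma^2 + \tfrac{\tau - 1}{2}\norm{y}_\Gamma^2 \leq \tfrac1{2\tau}\norm{z_i}_\Gamma^2 + \tfrac{\tau-1}{2}\norm{y}_\Gamma^2$, using $1 - \tau \leq 1$ for $\tau > 0$. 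Adding the nonnegative term $\tfrac1{2\tau}\norm{z_j}_\Gamma^2$ for the other index shows that both $-a$ and $-b$ are at most $\tfrac{\tau-1}{2}\norm{y}_\Gamma^2 + \tfrac1{2\tau}\norm{z_1}_\Gamma^2 + \tfrac1{2\tau}\norm{z_2}_\Gamma^2$; exponentiating and multiplying the two factors then yields \eqref{eq:like_basic_ineq}.

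There is no deep obstacle here — the estimate is short — but the one point requiring care is to \emph{keep} the factor $e^{-\min(a,b)}$ coming from the first step rather than crudely bounding it by $1$. The right-hand side of \eqref{eq:like_basic_ineq} is strictly sharper than the ``$\tau$-free'' estimate in the regime where $\norm{y}_\Gamma$ is large and $\tau$ small, and it is precisely the free parameter $\tau$ in Young's inequality that interpolates between the Gaussian decay of the likelihood and the polynomial growth in $\norm{z_i}_\Gamma$ that will be needed when this lemma is integrated against the prior in Section \ref{subsec:like_perturbation}.
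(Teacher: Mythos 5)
Your proposal is correct and follows essentially the same route as the paper's proof: the elementary bound $|e^{-a}-e^{-b}|\leq e^{-\min(a,b)}|a-b|$, a difference-of-squares/Cauchy--Schwarz estimate for $|a-b|$ giving the polynomial prefactor, and Young's inequality with parameter $\tau$ to control the exponential factor. The only (cosmetic) difference is in the last step: you bound $e^{-\min(a,b)}=\max\{e^{-a},e^{-b}\}$ by estimating $-a$ and $-b$ separately, whereas the paper goes through the identity $\min\{a,b\}=(a+b)/2-|a-b|/2$ and the parallelogram law; your version is, if anything, the more direct of the two.
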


\begin{proof}
We first observe that for $a,b\geq 0$ it follows that 
\begin{equation}
    \label{eq:like_basic_aux1}
    |e^{-a}-e^{-b}| \leq \exp\left(-\min(a,b)\right) |a-b|.
\end{equation}
Now, noting that $\min\{a,b\} = (a+b)/2 - |a-b|/2$, we have
\begin{align}
    \label{eq:like_basic_aux2}
	\min\big\{-\frac 12 \norm{z_1 - y}_\Gamma^2, & -\frac 12 \norm{z_2 - y}_\Gamma^2\big\} \\
	& = \frac 12 \norm{y}_\Gamma^2 + \min\left\{\frac 12\norm{z_1}_\Gamma^2 - \langle z_1,y\rangle_\Gamma, \frac 12\norm{z_2}_\Gamma^2 - \langle z_2,y\rangle_\Gamma  \right\} \nonumber\\
	& = \frac 12 \norm{y}_\Gamma^2 + \frac 14 \left(\norm{z_1}_\Gamma^2 + \norm{z_2}_\Gamma^2 - 2 \langle z_1 + z_2, y\rangle_\Gamma\right) \nonumber\\
	& \quad - \frac 14 \left|\norm{z_1}_\Gamma^2 - \norm{z_2}_\Gamma^2 - 2 \langle z_1 - z_2, y\rangle_\Gamma\right| \nonumber\\
	& \geq \frac 12 \norm{y}_\Gamma^2 + \frac 14 \norm{z_1}_\Gamma^2 + \frac 14 \norm{z_2}_\Gamma^2 - \frac 12 \left(\frac{\norm{z_1+z_2}_\Gamma^2}{2\tau} + \frac{\tau\norm{y}_\Gamma^2}{2}\right) \nonumber\\
	& \quad -\frac 14 \left|\norm{z_1}_\Gamma^2 - \norm{z_2}_\Gamma^2\right|
	- \frac 12 \left(\frac{\norm{z_1-z_2}_\Gamma^2}{2\tau} + \frac{\tau\norm{y}_\Gamma^2}{2}\right) \nonumber\\
	& \geq \frac{1-\tau}{2} \norm{y}_\Gamma^2 - \frac 1{2\tau}\norm{z_1}_\Gamma^2 - \frac 1{2\tau}\norm{z_2}_\Gamma^2
\end{align}
where we applied the parallelogram identity and the generalized Young's inequality. 
Furthermore, we have
\begin{equation}
\label{eq:like_basic_aux3}
\left|-\frac 12 \norm{z_1 - y}_\Gamma^2 + \frac 12 \norm{z_2 - y}_\Gamma^2\right| 
	\leq  \frac 12 \norm{z_1-z_2}_\Gamma\left(\norm{z_1}_\Gamma + \norm{z_2}_\Gamma + 2\norm{y}_\Gamma \right)    
\end{equation}
Now combining inequalities \eqref{eq:like_basic_aux1}, \eqref{eq:like_basic_aux2} and \eqref{eq:like_basic_aux3} yields the claim.
\end{proof}

\begin{lemma}
\label{lem:Phi_basic}
Let ${\mathcal G}$ satisfy Assumption \ref{ass:prior_perturbation} for a probability measure $\mu$ on ${\mathcal X}$ an suppose $\Phi$ is given by \eqref{eq:Phigauss}. Then it holds that
\begin{equation}
\label{eq:young_joint}
-\frac{1+\tau}{2\tau} \norm{\G(x)}_\Gamma^2 -\frac{1+\tau}2 \norm{y}_\Gamma^2 \leq -\Phi(x;y) \leq -\frac{1-\tau}2 \norm{y}_\Gamma^2 + \frac{1-\tau}{\tau}L_1^2 \norm{x}^2 + C
\end{equation}
for any $\tau>0$, where the constant $C>0$ depends on $\tau$, $R$ and $L_3$. Moreover, $\Phi$ is locally Lipschitz according to
\begin{equation}
    \label{eq:Phi_stab_xxdash}
	|\Phi(x,y) - \Phi(x',y)| \leq L(x,x'; y) \norm{x-x'}
\end{equation}
for all $x,x'\in {\mathcal X}$ and $y\in \R^d$, where $L(x,x'; y) = \frac{L_1}{2} (L_1\norm{x} + L_1\norm{x'} + 2\norm{y}_\Gamma + C).$
Suppose that $\Phi_*$ also satisfies Assumption \ref{ass:prior_perturbation}. Then it holds that
\begin{equation}
	\label{eq:Phi_stab_GGstar}
	|\Phi(x,y) - \Phi_*(x,y)| \leq L_*(x,y) \norm{\G(x) - \G_*(x)}_\Gamma
\end{equation}
for all $x\in X$ and $y\in \R^d$, where $L_*(x,y) = L_1 \norm{x} + \norm{y}_\Gamma + C$. The constant $C$ in the expressions of $L$ and $L_*$ depends on $L_1, L_3$ and $R$.
\end{lemma}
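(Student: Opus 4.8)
The plan is to prove the four assertions in sequence, each following from a short computation once the right elementary inequality is invoked. For the two-sided bound \eqref{eq:young_joint}, I would start from $\Phi(x,y) = \tfrac12\|\G(x)-y\|_\Gamma^2$ and expand the square: $\Phi(x,y) = \tfrac12\|\G(x)\|_\Gamma^2 - \langle \G(x),y\rangle_\Gamma + \tfrac12\|y\|_\Gamma^2$. The cross term $\langle\G(x),y\rangle_\Gamma$ is then controlled in both directions by the generalized Young's inequality $|\langle a,b\rangle| \le \tfrac{1}{2\tau}\|a\|^2 + \tfrac{\tau}{2}\|b\|^2$. For the lower bound on $-\Phi$ (upper bound on $\Phi$) one must in addition convert $\|\G(x)\|_\Gamma^2$ into a bound involving $\|x\|^2$; here I would square the estimate \eqref{eq:G_bound} from Corollary \ref{cor:direct_cor_assump}, namely $\|\G(x)\|_\Gamma \le L_1\|x\| + L_1R + L_3$, and absorb the cross terms $\|x\|\cdot(L_1R+L_3)$ into $\|x\|^2$ plus a constant via another Young step; this produces a term $\tfrac{1-\tau}{\tau}L_1^2\|x\|^2$ up to adjusting the constant $C$, which is allowed since $C$ is only required to depend on $\tau, R, L_3$. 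Matching the exact coefficients in the statement is just bookkeeping of which $\tau$ is plugged where.

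For the local Lipschitz bound \eqref{eq:Phi_stab_xxdash}, I would write $\Phi(x,y)-\Phi(x',y) = \tfrac12\big(\|\G(x)-y\|_\Gamma^2 - \|\G(x')-y\|_\Gamma^2\big)$ and factor the difference of squares as $\tfrac12\langle \G(x)-\G(x'),\, \G(x)+\G(x')-2y\rangle_\Gamma$. Cauchy--Schwarz gives $|\Phi(x,y)-\Phi(x',y)| \le \tfrac12\|\G(x)-\G(x')\|_\Gamma\big(\|\G(x)\|_\Gamma + \|\G(x')\|_\Gamma + 2\|y\|_\Gamma\big)$; then apply the Lipschitz bound (i) of Assumption \ref{ass:prior_perturbation} to the first factor, yielding $\|\G(x)-\G(x')\|_\Gamma \le L_1\|x-x'\|$, and apply \eqref{eq:G_bound} to each of $\|\G(x)\|_\Gamma, \|\G(x')\|_\Gamma$ inside the bracket. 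Collecting constants gives exactly $L(x,x';y) = \tfrac{L_1}{2}(L_1\|x\| + L_1\|x'\| + 2\|y\|_\Gamma + C)$ with $C$ depending on $L_1, L_3, R$ (it carries the $2(L_1R+L_3)$). The final inequality \eqref{eq:Phi_stab_GGstar} is essentially the same manipulation but now comparing $\G$ and $\G_*$ at the same point $x$: $\Phi(x,y)-\Phi_*(x,y) = \tfrac12\langle \G(x)-\G_*(x),\, \G(x)+\G_*(x)-2y\rangle_\Gamma$, so Cauchy--Schwarz plus \eqref{eq:G_bound} applied to both $\G$ and $\G_*$ (which share the same constants) gives the stated $L_*(x,y) = L_1\|x\| + \|y\|_\Gamma + C$.

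I do not anticipate a genuine obstacle here — every step is an elementary inequality — so the "hard part" is purely cosmetic: choosing the auxiliary Young parameter(s) consistently so that the coefficients $\tfrac{1+\tau}{2\tau}$, $\tfrac{1+\tau}{2}$, $\tfrac{1-\tau}{2}$, $\tfrac{1-\tau}{\tau}L_1^2$ in \eqref{eq:young_joint} come out exactly as written, and tracking which constants are permitted to depend on which parameters in each of the three claimed Lipschitz moduli. One small point worth stating carefully is that \eqref{eq:G_bound} requires $\G$ (resp.\ $\G_*$) to satisfy all three parts of Assumption \ref{ass:prior_perturbation}, which is exactly the hypothesis of the lemma, so invoking Corollary \ref{cor:direct_cor_assump} is legitimate throughout.
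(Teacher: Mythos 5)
Your proposal is correct and follows essentially the same route as the paper: expand the square and apply the generalized Young's inequality in both directions for \eqref{eq:young_joint} (with \eqref{eq:G_bound} squared and a further Young step to absorb the cross term into $\norm{x}^2$ plus a constant), and obtain both Lipschitz bounds from the difference-of-squares factorization plus Cauchy--Schwarz, which is exactly the inequality \eqref{eq:like_basic_aux3} the paper invokes together with \eqref{eq:G_bound}. No gaps.
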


\begin{proof}
By generalized Young's inequality $\langle z,w\rangle \leq \frac{\norm{z}^2}{2\tau} + \frac{\tau}2 \norm{w}^2$ for $\tau>0$, we observe that 
\begin{eqnarray}
	\label{eq:young_neg}
	-\frac 12 \norm{\G(x) - y}_\Gamma^2
	& = & -\frac 12 \norm{y}_\Gamma^2-\frac 12 \norm{\G(x)}_\Gamma^2 + \langle \G(x), y\rangle_\Gamma \nonumber\\
	& \geq & -\frac{1+\tau}{2\tau} \norm{\G(x)}_\Gamma^2 -\frac{1+\tau}2 \norm{y}_\Gamma^2.
\end{eqnarray}
Reversing the Young's inequality we obtain
\begin{eqnarray}
	\label{eq:young_pos}
	-\frac 12 \norm{\G(x) - y}_\Gamma^2
	& = & -\frac 12 \norm{y}_\Gamma^2-\frac 12 \norm{\G(x)}_\Gamma^2 + \langle \G(x), y\rangle_\Gamma \nonumber\\
	& \leq & -\frac{1-\tau}2 \norm{y}_\Gamma^2+\frac{1-\tau}{2\tau} \norm{\G(x)}_\Gamma^2 \nonumber\\
	& \leq & -\frac{1-\tau}2 \norm{y}_\Gamma^2 + \frac{1-\tau}{2\tau}L_1^2 \norm{x}^2 + C.
\end{eqnarray}

Next, the Lipschitz bounds for $\Phi$ in \eqref{eq:Phi_stab_xxdash} and \eqref{eq:Phi_stab_GGstar}
follow by applying \eqref{eq:G_bound} with the inequality \eqref{eq:like_basic_aux3}.
This yields the result.
\end{proof}

The next lemma will be crucial as it provides asymptotic lower and upper bounds for the normalization constant 
\begin{equation}
    \label{eq:Z}
    Z(y) = \E^\mu \exp(-\Phi(x,y))
\end{equation}
with $\Phi$ is given in \eqref{eq:Phigauss}, in the setting specified by Assumption \ref{ass:prior_perturbation}.
Let us also remark that in this case, the normalization constant $Z(y)$ and the evidence density $\pi(y)$ coincide up to universal constant, i.e. $Z(y) = C\pi(y)$, where $C$ depends on $\Gamma$ and the dimension $d$.

\begin{lemma}
\label{lem:Z_lbound}
Let ${\mathcal G}$ satisfy Assumption \ref{ass:prior_perturbation} for a probability measure $\mu$ on ${\mathcal X}$ an suppose $\Phi$ is given by \eqref{eq:Phigauss}. 
For any $\kappa_1>\frac 12$, there exists finite constants $C, C'>0$ such that
\begin{equation}
	\label{eq:Z_equiv}
 C \exp\left(-\kappa_1 \norm{y}_\Gamma^2\right) \leq Z(y) \leq C' \exp\left(-\frac 12\frac {L_2}{L_1^2 + L_2}\norm{y}_\Gamma^2\right)
\end{equation}
for any $y\in\R^d$, where $Z$ is given by \eqref{eq:Z}. The constant $C$ depends on $\kappa_1, L_3, R$ and $d$.
\end{lemma}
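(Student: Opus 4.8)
The plan is to establish the two inequalities by entirely separate arguments: the lower bound exploits condition $(iii)$ (properness of $\G$) of Assumption \ref{ass:prior_perturbation}, and the upper bound exploits condition $(ii)$ (sub-Gaussianity of the prior). For the lower bound I would discard all the mass of $\mu$ outside the ball $B(0,R)$ furnished by condition $(iii)$ and integrate only there. On $B(0,R)$ we have $\norm{\G(x)}_\Gamma < L_3$, so by the triangle inequality $\Phi(x,y) = \tfrac12\norm{\G(x)-y}_\Gamma^2 \le \tfrac12(L_3 + \norm{y}_\Gamma)^2$; applying Young's inequality in the form $(L_3+\norm{y}_\Gamma)^2 \le (1+\varepsilon)\norm{y}_\Gamma^2 + (1+\varepsilon^{-1})L_3^2$ and choosing $\varepsilon = 2\kappa_1 - 1$, which is positive precisely because $\kappa_1 > \tfrac12$, yields $\Phi(x,y) \le \kappa_1\norm{y}_\Gamma^2 + c(\kappa_1)L_3^2$ uniformly on $B(0,R)$. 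Hence
\[
  Z(y) \ge \int_{B(0,R)} e^{-\Phi(x,y)}\,\mu(dx) \ge \mu(B(0,R))\,e^{-c(\kappa_1)L_3^2}\, e^{-\kappa_1\norm{y}_\Gamma^2},
\]
which is the claimed lower bound, with $C = \mu(B(0,R))\,e^{-c(\kappa_1)L_3^2} > 0$ by condition $(iii)$.

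For the upper bound I would invoke the right-hand inequality of \eqref{eq:young_joint} in Lemma \ref{lem:Phi_basic}, namely $-\Phi(x,y) \le -\tfrac{1-\tau}{2}\norm{y}_\Gamma^2 + \tfrac{1-\tau}{\tau}L_1^2\norm{x}^2 + C$ valid for every $\tau > 0$, and then optimize the choice of $\tau$. The decisive point is to take $\tau = \tfrac{L_1^2}{L_1^2+L_2} \in (0,1)$, which makes the coefficient of $\norm{x}^2$ equal to exactly $L_2$ while the coefficient of $\norm{y}_\Gamma^2$ becomes exactly $-\tfrac12\tfrac{L_2}{L_1^2+L_2}$. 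Integrating against $\mu$ and pulling out the $y$-dependent factor gives
\[
  Z(y) \le e^{C}\, e^{-\frac12\frac{L_2}{L_1^2+L_2}\norm{y}_\Gamma^2}\, \E^\mu \exp\!\left(L_2\norm{x}^2\right),
\]
and the remaining expectation is finite by condition $(ii)$ of Assumption \ref{ass:prior_perturbation}, so we may take $C' = e^{C}\,\E^\mu\exp(L_2\norm{x}^2) < \infty$. Tracking how $C$ and $C'$ depend on $\kappa_1, L_1, L_2, L_3, R$ (and, through $\Gamma$ and the dimension $d$, on the weighted norm) is routine bookkeeping.

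I expect the only subtle point to be in the upper bound: the optimal $\tau$ lands the exponent of $\norm{x}^2$ exactly at the threshold value $L_2$, so finiteness of $Z(y)\,e^{\frac12\frac{L_2}{L_1^2+L_2}\norm{y}_\Gamma^2}$ hinges on using Assumption \ref{ass:prior_perturbation}$(ii)$ at its boundary (where it is still assumed finite) rather than in its strict interior. Any $\tau$ bounded away from $\tfrac{L_1^2}{L_1^2+L_2}$ either makes the prior integral diverge or degrades the decay rate in $\norm{y}_\Gamma^2$. If one is content with a rate slightly worse than $\tfrac12\tfrac{L_2}{L_1^2+L_2}$, the cleaner route is to pick $\tau$ marginally above the threshold and apply Corollary \ref{cor:direct_cor_assump}$(ii)$ with $L_2' < L_2$; the sharp constant, however, requires the boundary version. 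The lower bound, by contrast, is essentially a one-line localization once condition $(iii)$ is available.
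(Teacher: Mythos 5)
Your proof is correct and follows essentially the same route as the paper: the lower bound localizes the integral to $B(0,R)$ and applies Young's inequality with a parameter tuned so that the $\norm{y}_\Gamma^2$ coefficient equals $\kappa_1$, and the upper bound uses the right-hand inequality of \eqref{eq:young_joint} with exactly $\tau = L_1^2/(L_1^2+L_2)$ so that the prior integral is controlled by $\E^\mu\exp(L_2\norm{x}^2)$, i.e.\ Assumption \ref{ass:prior_perturbation}$(ii)$ invoked at its boundary value. Your remark about the sharp constant requiring the boundary version of condition $(ii)$ is precisely what the paper's proof does.
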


\begin{proof}
Applying the lower bound of \eqref{eq:young_joint}, we obtain
\begin{eqnarray*}
	Z(y)
	& \geq & \E^\mu \exp\left(-\frac{1+\tau}{2\tau} \norm{\G(x)}_\Gamma^2\right) \cdot \exp\left(-\frac{1+\tau}2 \norm{y}_\Gamma^2 \right) \\
	& \geq &  \exp\left(-\frac{1+\tau}{2\tau}L_3\right) \mu(B(0,R)) \exp\left(-\frac {1+\tau}2\norm{y}_\Gamma^2\right),
\end{eqnarray*}
Setting $\kappa_1 = (1+\tau)/2>1/2$ yields the lower bound in \eqref{eq:Z_equiv}. 

Similarly, applying the upper bound of \eqref{eq:young_joint} and setting $\tau =\frac{L_1^2}{L_1^2 + L_2}$, we observe
\begin{eqnarray*}
	Z(y) 
	& \leq & \E^\mu \exp\left(\frac {1-\tau}{\tau} L_1^2\norm{x}^2 \right) \cdot \exp\left(-\frac {1-\tau}2\norm{y}_\Gamma^2\right) \\
	& \leq & \E^\mu \exp\left(L_2 \norm{x}^2 \right) \cdot \exp\left(-\frac 12\frac {L_2}{L_1^2 + L_2}\norm{y}_\Gamma^2\right),
\end{eqnarray*}
The upper bound is obtained due to condition $(ii)$ in Assumption \ref{ass:prior_perturbation}.
\end{proof}

\subsection{Proofs for likelihood perturbation}
\label{subsec:like_perturbation}

Throughout this section, we assume that $\G$ and $\G_*$ satisfy Assumption \ref{ass:prior_perturbation} for a probability measure $\mu$ on ${\mathcal X}$ and $\Phi$ and $\Phi_*$ are the corresponding log-likelihoods given by \eqref{eq:Phigauss} for $\G$ and $\G_*$, respectively.
We abbreviate $$\Delta \G = \left(\E^\mu \norm{\G(x) - \G_*(x)}_\Gamma^2\right)^{\frac 12}$$ for convenience as this term will dominate the approximation error in multiple claims.

The next two lemmas characterize the pointwise perturbation of the likelihood energy as well as the normalization constant.

\begin{lemma}
\label{lem:stability:likelihood_est2} 
For any $\kappa>0$, there exists a constant $C>0$ depending on $\kappa$ and $L_1$ such that
\begin{equation*}
	\left|\exp(-\Phi(x,y)) - \exp(-\Phi_*(x,y))\right|\leq C \phi^*_\kappa(y) \psi^*_\kappa(x) \norm{\G(x) - \G_*(x)}_\Gamma
\end{equation*}
for any $x\in X$ and $y\in\R^d$, where
\begin{equation}
	\label{eq:phistarkappa}
	\phi^*_\kappa(y) = \exp\left(\left(\frac{L_1^2}\kappa - \frac 12\right) \norm{y}_\Gamma^2\right)(1+\norm{y}_\Gamma)
\end{equation}
and
\begin{equation}
	\label{eq:psistarkappa}
	\psi^*_\kappa(x) = \exp\left(\kappa\norm{x}^2 \right)(1+\norm{x}).
\end{equation}
\end{lemma}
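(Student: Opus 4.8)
The plan is to apply Lemma~\ref{lem:basic_quad_zzy} directly with $z_1 = \G(x)$ and $z_2 = \G_*(x)$. Since $\Phi(x,y) = \tfrac12\norm{\G(x)-y}_\Gamma^2$ and $\Phi_*(x,y) = \tfrac12\norm{\G_*(x)-y}_\Gamma^2$, the left-hand side of the lemma becomes exactly $\bigl|\exp(-\Phi(x,y)) - \exp(-\Phi_*(x,y))\bigr|$ and the factor $\norm{z_1-z_2}_\Gamma$ becomes $\norm{\G(x)-\G_*(x)}_\Gamma$. This yields, for every $\tau>0$, the bound
\[
  \tfrac12\bigl(\norm{\G(x)}_\Gamma + \norm{\G_*(x)}_\Gamma + 2\norm{y}_\Gamma\bigr)\,
  \exp\!\Bigl(\tfrac{\tau-1}{2}\norm{y}_\Gamma^2 + \tfrac1{2\tau}\norm{\G(x)}_\Gamma^2 + \tfrac1{2\tau}\norm{\G_*(x)}_\Gamma^2\Bigr)\,
  \norm{\G(x)-\G_*(x)}_\Gamma .
\]
It then remains to (a) control the linear prefactor and the quadratic terms in the exponent using the growth estimate $\norm{\G(x)}_\Gamma \le L_1\norm{x} + L_1R + L_3$ of Corollary~\ref{cor:direct_cor_assump}(i) (and the identical bound for $\G_*$), and (b) calibrate $\tau$ so that the exponents match those prescribed in $\phi^*_\kappa$ and $\psi^*_\kappa$.

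For step (a), I would use $(a+b)^2 \le 2a^2+2b^2$ to get $\norm{\G(x)}_\Gamma^2 \le 2L_1^2\norm{x}^2 + 2(L_1R+L_3)^2$, hence $\tfrac1{2\tau}\bigl(\norm{\G(x)}_\Gamma^2 + \norm{\G_*(x)}_\Gamma^2\bigr) \le \tfrac{2L_1^2}{\tau}\norm{x}^2 + \tfrac{2(L_1R+L_3)^2}{\tau}$; the second term is $x$- and $y$-independent and becomes a harmless multiplicative constant after exponentiation. The prefactor is bounded by $L_1\norm{x} + \norm{y}_\Gamma + L_1R + L_3$, which in turn is at most $C(1+\norm{x})(1+\norm{y}_\Gamma)$ for a constant depending only on $L_1,L_3,R$; these two linear factors are absorbed into the $(1+\norm{x})$ and $(1+\norm{y}_\Gamma)$ appearing in $\psi^*_\kappa$ and $\phi^*_\kappa$.

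For step (b), the key observation is that the choice $\tau = 2L_1^2/\kappa$ makes the coefficient of $\norm{x}^2$ equal to $\kappa$, while simultaneously $\tfrac{\tau-1}{2} = \tfrac{L_1^2}{\kappa} - \tfrac12$, which is precisely the coefficient appearing in the exponent of $\phi^*_\kappa$. Collecting all constants—which depend on $\kappa$ and the constants of Assumption~\ref{ass:prior_perturbation}—then gives the claimed inequality. There is no genuine obstacle here: the statement is essentially a bookkeeping consequence of Lemma~\ref{lem:basic_quad_zzy}. The only point that needs a little care is the \emph{simultaneous} calibration of the single parameter $\tau$ so that both the $y$-exponent and the $x$-exponent land on the prescribed forms, and checking that the residual additive term in the exponent is independent of $x$ and $y$ so that it may safely be pulled out as a constant.
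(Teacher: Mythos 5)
Your proposal is correct and follows essentially the same route as the paper: apply Lemma~\ref{lem:basic_quad_zzy} with $z_1=\G(x)$, $z_2=\G_*(x)$, bound $\norm{\G(x)}_\Gamma$ via Corollary~\ref{cor:direct_cor_assump}(i), and set $\tau = 2L_1^2/\kappa$ so that the $\norm{x}^2$-coefficient becomes $\kappa$ and $\tfrac{\tau-1}{2}=\tfrac{L_1^2}{\kappa}-\tfrac12$ simultaneously. The absorption of the linear prefactor into $(1+\norm{x})(1+\norm{y}_\Gamma)$ is exactly the paper's $a+b+1\le(a+1)(b+1)$ step, so there is nothing to add.
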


\begin{proof}
Applying Lemma \ref{lem:basic_quad_zzy} with $z_1 = \G(x)$ and $z_2 = \G_*(x)$ we obtain
\begin{multline*}
    \frac{\left|\exp(-\Phi(x,y)) - \exp(-\Phi_*(x,y))\right|}{\norm{\G(x)-\G_*(x)}_\Gamma} \\
    \leq \frac 12 \left(\norm{\G(x)}_\Gamma + \norm{\G_*(x)}_\Gamma + 2\norm{y}_\Gamma \right)  \exp\left(\frac{\tau-1}{2} \norm{y}_\Gamma^2 + \frac 1{2\tau}\norm{\G(x)}_\Gamma^2 + \frac 1{2\tau}\norm{\G_*(x)}_\Gamma^2\right)  \\
    \leq C \left(L_1 \norm{x} + \norm{y}_\Gamma + C\right)
    \exp\left(\frac{\tau-1}{2} \norm{y}_\Gamma^2 + \frac{2L_1^2}{\tau}\norm{x}^2\right)
\end{multline*}
Choosing $\tau = 2L_1^2/\kappa$ and observing that for $a,b\geq 0$ we have $a+b+1 \leq (a+1)(b+1)$, yields
\begin{equation*}
    \frac{\left|\exp(-\Phi(x,y)) - \exp(-\Phi_*(x,y))\right|}{\norm{\G(x)-\G_*(x)}_\Gamma}
    \leq C(\norm{x}+1)(\norm{y}_\Gamma +1) \exp\left(\left(\frac{L_1^2}\kappa - \frac 12\right)\norm{y}_\Gamma^2 + \kappa \norm{x}^2\right) 
\end{equation*}
This concludes the proof.
\end{proof}

\begin{lemma}
\label{lem:evidence_pert_2}
Let $Z$ and $Z_*$ be the normalization constants defined by \eqref{eq:Z} for $\Phi$ and $\Phi_*$, respectively.
For any $L_2'<L_2$, we have that 
\begin{equation*}
	|Z(y) - Z_*(y)| \leq C \phi_{L_2'/2}^*(y) \Delta \G,
\end{equation*}
where the finite constant $C$ is dependent on $L_2'$ and $L_2$.
\end{lemma}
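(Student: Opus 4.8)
The plan is to reduce everything to Lemma~\ref{lem:stability:likelihood_est2} and a single Cauchy--Schwarz step. First I would note that by definition \eqref{eq:Z} and Jensen's inequality,
\begin{equation*}
    |Z(y) - Z_*(y)| = \left|\E^\mu\!\left(\exp(-\Phi(x,y)) - \exp(-\Phi_*(x,y))\right)\right| \leq \E^\mu\left|\exp(-\Phi(x,y)) - \exp(-\Phi_*(x,y))\right|.
\end{equation*}
Then I would invoke Lemma~\ref{lem:stability:likelihood_est2} with a free parameter $\kappa>0$, pull the $y$-dependent factor $\phi^*_\kappa(y)$ out of the expectation (it does not depend on $x$), and obtain
\begin{equation*}
    |Z(y) - Z_*(y)| \leq C\, \phi^*_\kappa(y)\, \E^\mu\!\left(\psi^*_\kappa(x)\, \norm{\G(x) - \G_*(x)}_\Gamma\right).
\end{equation*}

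Next I would apply the Cauchy--Schwarz inequality to the remaining expectation to split off the quantity $\Delta\G$:
\begin{equation*}
    \E^\mu\!\left(\psi^*_\kappa(x)\, \norm{\G(x) - \G_*(x)}_\Gamma\right) \leq \left(\E^\mu \psi^*_\kappa(x)^2\right)^{\frac 12} \left(\E^\mu \norm{\G(x) - \G_*(x)}_\Gamma^2\right)^{\frac 12} = \left(\E^\mu \psi^*_\kappa(x)^2\right)^{\frac 12} \Delta\G.
\end{equation*}
The key point is then the choice $\kappa = L_2'/2$: this makes $\phi^*_\kappa(y)$ coincide with the claimed $\phi^*_{L_2'/2}(y)$, and since $\psi^*_\kappa(x)^2 = (1+\norm{x})^2 \exp(2\kappa \norm{x}^2) = (1+\norm{x})^2 \exp(L_2' \norm{x}^2)$ with $L_2' < L_2$, Corollary~\ref{cor:direct_cor_assump}~$(ii)$ guarantees that $\E^\mu \psi^*_{L_2'/2}(x)^2$ is finite. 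Absorbing this finite moment (together with the constant from Lemma~\ref{lem:stability:likelihood_est2}) into a single constant $C$ depending on $L_2'$ and $L_2$ yields the claim.

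I do not expect any serious obstacle here; the only thing to be careful about is the interplay between the parameter $\kappa$ appearing inside $\psi^*_\kappa$ and the admissible exponent in the sub-Gaussian bound of Assumption~\ref{ass:prior_perturbation}~$(ii)$. Taking $2\kappa = L_2'$ strictly below $L_2$ is exactly what keeps the second moment finite, which is why the statement is phrased for arbitrary $L_2' < L_2$ rather than $L_2$ itself. Everything else is a bookkeeping of constants.
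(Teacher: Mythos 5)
Your proposal is correct and follows essentially the same route as the paper's proof: both bound $|Z(y)-Z_*(y)|$ by the integrated pointwise estimate of Lemma~\ref{lem:stability:likelihood_est2}, apply Cauchy--Schwarz to extract $\Delta\G$, and choose $\kappa = L_2'/2$ so that Corollary~\ref{cor:direct_cor_assump}~$(ii)$ guarantees finiteness of $\E^\mu \psi^*_{L_2'/2}(x)^2$. No gaps.
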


\begin{proof}
We have that
\begin{eqnarray*}
    |Z(y) - Z_*(y)| & = & \left|\int \left(\exp(-\Phi(x,y))-\exp(-\Phi_*(x,y))\right) \mu(dx)\right| \\
    & \leq & C \int \phi^*_\kappa(y) \psi^*_\kappa(x) \norm{\G(x)-\G_*(x)}_\Gamma \mu(dx) \\
    & \leq & C \phi^*_\kappa(y) \left(\E^\mu \psi^*_\kappa(x)^2\right)^{\frac 12}
    \Delta \G,
\end{eqnarray*}
where we applied the Cauchy-Schwarz inequality.
Now we observe by Corollary \ref{cor:direct_cor_assump} $(ii)$ that $\E^\mu \psi^*_\kappa(x)^2$ is finite for $\kappa = L_2'/2$ with $L_2'<L_2$, which yields the result.
\end{proof}

The following corollary combines the preceding results in a form that will be directly applicable in the proof of Theorem \ref{thm:U1_like_stability}.

\begin{corollary}
\label{cor:like_stab}
Let $L_1^2< \frac{\sqrt 2 - 1}{2} L_2$. Then there exists $L_2'<L_2$ and $\kappa_2>0$ such that
\begin{equation}
    \left|\frac 1{Z(y)}-\frac 1{Z_*(y)}\right| \exp(-\Phi(x,y))
    \leq C \frac{\exp(L_2'\norm{x}^2) \exp(-\kappa_2\norm{y}_\Gamma^2) (1+\norm{y}_\Gamma)}{\max\{Z(y), Z_*(y)\}}\Delta \G,
\end{equation}
where the constant $C>0$ depends on $L_2, L_2', L_3, \kappa_2$ and $d$.
\end{corollary}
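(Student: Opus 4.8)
The plan is to start from the elementary identity
\[
\left|\frac 1{Z(y)}-\frac 1{Z_*(y)}\right|\exp(-\Phi(x,y))
=\frac{1}{\max\{Z(y),Z_*(y)\}}\cdot\frac{|Z(y)-Z_*(y)|}{\min\{Z(y),Z_*(y)\}}\,\exp(-\Phi(x,y)),
\]
which follows from $Z(y)Z_*(y)=\max\{Z(y),Z_*(y)\}\cdot\min\{Z(y),Z_*(y)\}$, so that it suffices to bound the last two factors by $C\exp(L_2'\norm{x}^2)\exp(-\kappa_2\norm{y}_\Gamma^2)(1+\norm{y}_\Gamma)\,\Delta\G$. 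For the numerator I would invoke Lemma \ref{lem:evidence_pert_2}, which for any $L_2'<L_2$ gives $|Z(y)-Z_*(y)|\le C\,\phi^*_{L_2'/2}(y)\,\Delta\G$ with $\phi^*_{L_2'/2}(y)=\exp\!\bigl((\tfrac{2L_1^2}{L_2'}-\tfrac12)\norm{y}_\Gamma^2\bigr)(1+\norm{y}_\Gamma)$; crucially, all dependence on $\norm{x}$ has already been integrated out at this step. For the denominator I would apply the lower bound of Lemma \ref{lem:Z_lbound} to both $Z$ and $Z_*$, obtaining $\min\{Z(y),Z_*(y)\}\ge C\exp(-\kappa_1\norm{y}_\Gamma^2)$ for any $\kappa_1>\tfrac12$. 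Finally, for $\exp(-\Phi(x,y))$ I would use the upper bound in \eqref{eq:young_joint} with a free parameter $\tau\in(0,1)$, giving $\exp(-\Phi(x,y))\le C\exp\!\bigl(-\tfrac{1-\tau}{2}\norm{y}_\Gamma^2+\tfrac{1-\tau}{\tau}L_1^2\norm{x}^2\bigr)$.

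Multiplying these three bounds, the only $\norm{x}^2$-growth is the factor $\exp(\tfrac{1-\tau}{\tau}L_1^2\norm{x}^2)$, which I need to dominate by $\exp(L_2'\norm{x}^2)$; that is, I need $\tfrac{1-\tau}{\tau}L_1^2\le L_2'$. The $\norm{y}_\Gamma^2$-growth collects to the single coefficient
\[
\kappa_1+\Bigl(\frac{2L_1^2}{L_2'}-\frac12\Bigr)-\frac{1-\tau}{2},
\]
which I need to be strictly negative, and then $\kappa_2$ is defined as its modulus. The one surviving polynomial prefactor is the $(1+\norm{y}_\Gamma)$ carried by $\phi^*_{L_2'/2}$, and $C$ absorbs all the $\tau$-, $\kappa_1$-, $L_2'$-, $L_3$- and $d$-dependent constants coming from the three bounds. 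So the whole proof reduces to exhibiting admissible $L_2'<L_2$, $\tau\in(0,1)$ and $\kappa_1>\tfrac12$.

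The only genuinely delicate point — and the place where the hypothesis $L_1^2<\tfrac{\sqrt2-1}{2}L_2$ is actually used — is the feasibility of this two-parameter choice. The constraint $\tfrac{1-\tau}{\tau}L_1^2\le L_2'$ is equivalent to $\tau\ge\tfrac{L_1^2}{L_1^2+L_2'}$, while strict negativity of the $\norm{y}_\Gamma^2$-coefficient above, after sending $\kappa_1\downarrow\tfrac12$, reduces to $\tau<1-\tfrac{4L_1^2}{L_2'}$. The interval $\bigl[\tfrac{L_1^2}{L_1^2+L_2'},\,1-\tfrac{4L_1^2}{L_2'}\bigr)$ is nonempty precisely when $(L_2')^2-4L_1^2L_2'-4L_1^4>0$, i.e.\ precisely when $L_1^2<\tfrac{\sqrt2-1}{2}L_2'$ (the relevant root of this quadratic in $L_2'$ is $2(1+\sqrt2)L_1^2$, and $\tfrac{1}{2(1+\sqrt2)}=\tfrac{\sqrt2-1}{2}$). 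Since the hypothesis gives $L_1^2<\tfrac{\sqrt2-1}{2}L_2$, I can first pick $L_2'<L_2$ close enough to $L_2$ that $L_1^2<\tfrac{\sqrt2-1}{2}L_2'$ still holds, then pick $\tau$ in this nonempty half-open interval, and finally pick $\kappa_1$ slightly above $\tfrac12$ so that the coefficient above stays strictly negative; the resulting positive $\kappa_2$ and the constant $C$ then complete the argument.
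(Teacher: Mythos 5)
Your proposal is correct and follows essentially the same route as the paper: Lemma \ref{lem:evidence_pert_2} for the numerator, the lower bound of Lemma \ref{lem:Z_lbound} for the denominator, and the upper bound in \eqref{eq:young_joint} for $\exp(-\Phi(x,y))$, with the hypothesis $L_1^2<\tfrac{\sqrt2-1}{2}L_2$ entering only through the feasibility of the exponent bookkeeping. The only cosmetic difference is that the paper fixes $\tau=L_1^2/(L_1^2+L_2')$ at the outset and checks negativity of $\kappa_1-\tfrac12+f(L_1^2/L_2')$ with $f(t)=2t-\tfrac{1}{2(t+1)}$, whereas you keep $\tau$ free and recover the identical threshold from the quadratic $(L_2')^2-4L_1^2L_2'-4L_1^4>0$.
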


\begin{proof}
By applying Lemmas \ref{lem:Phi_basic} (upper bound with $\tau = L_1^2/(L_1^2 + L_2')$), \ref{lem:Z_lbound} (lower bound) and \ref{lem:evidence_pert_2} with $L_2'< L_2$, we obtain
\begin{multline}
    \frac{\left|Z(y)-Z_*(y)\right|}{Z(y)} \exp(-\Phi(x,y)) \\
    \leq C \exp(L_2' \norm{x}^2)\exp\left(\left(\kappa_1 + \frac{2L_1^2}{L_2'}-\frac 12 -\frac 12 \frac {L_2'}{L_1^2 + L_2'}\right)\norm{y}_\Gamma^2\right)(1+\norm{y}_\Gamma) \Delta \G.
\end{multline}
Let us now deduce that admissible parameters $L_2'$ and $\kappa_2$ exist. Observe first that $\kappa_1 - 1/2>0$ can be chosen arbitrarily small and denote
\begin{equation}
    f(t) = 2t - \frac 12 \frac{1}{t+1}.
\end{equation}
In other words, we need to find $L_2'<L_2$ and $\kappa_1>1/2$ so that
\begin{equation}
    -\kappa_2 = \kappa_1 - \frac 12 + f\left(\frac{L_1^2}{L_2'}\right)<0.
\end{equation}
Clearly, this is guaranteed if $f(L_1^2/L_2') < 0$. Now observe that $f$ is increasing for $t>0$ and satisfies $f(0) = -1/2$ and $f((\sqrt 2-1)/2) = 0$. Therefore, $L_1^2/L_2' < (\sqrt 2-1)/2$ can be satisfied if the same inequality holds for $L_2$ as in the assumption.

Since an identical argument applies for $\frac{\left|Z(y)-Z_*(y)\right|}{Z_*(y)} \exp(-\Phi(x,y))$, we obtain the claim.
\end{proof}

Before proceeding, we must first establish that the evidence-averaged moments of the posteriors are bounded in the following way.

\begin{proposition}
\label{prop:like_stab_moment_bound}
Suppose $\G$ and $\G_*$ satisfy Assumption \ref{ass:prior_perturbation} with probability measure $\mu$ on ${\mathcal X}$ and with same constants $L_1, L_2, L_3$ and $R$. Moreover, we assume that $L_1^2< \frac{\sqrt 2 - 1}{2} L_2$.
Then
\begin{equation*}
	\E^{\pi_*(y)} M_p(\mu^y) < \infty \quad \text{and} \quad
	\E^{\pi(y)} M_p(\mu_*^y) < \infty.
\end{equation*}
\end{proposition}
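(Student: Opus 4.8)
The plan is to expand $\E^{\pi_*(y)} M_p(\mu^y)$ using the explicit posterior density together with the proportionality $\pi_*(y) = c\, Z_*(y)$ for a constant $c$ depending only on $\Gamma$ and $d$, noted just after \eqref{eq:Z}; then apply Tonelli to separate the $x$- and $y$-integrations; and finally close each of the two resulting factors with the pointwise and normalisation-constant estimates already at hand. It suffices to bound $\E^{\pi_*(y)} M_p(\mu^y)$, since the second claim follows verbatim by interchanging the roles of $\G$ and $\G_*$.

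Concretely, writing $\mu^y(dx) = Z(y)^{-1} e^{-\Phi(x,y)}\mu(dx)$ gives
\[
    \E^{\pi_*(y)} M_p(\mu^y) = c\int_{\R^d} \frac{Z_*(y)}{Z(y)}\left(\int_{\mathcal X} \norm{x}^p e^{-\Phi(x,y)}\,\mu(dx)\right)dy,
\]
and since the integrand is non-negative, Tonelli applies with no a priori integrability to check. I then substitute three bounds, each carrying a free parameter: the pointwise upper bound on $e^{-\Phi(x,y)}$ from Lemma~\ref{lem:Phi_basic}, of the form $e^{C}\exp\!\big(-\tfrac{1-\tau}{2}\norm{y}_\Gamma^2 + a(\tau)L_1^2\norm{x}^2\big)$ with $a(\tau)\to 0$ as $\tau\to 1^-$; the upper bound $Z_*(y)\le C'\exp\!\big(-\tfrac12\tfrac{L_2}{L_1^2+L_2}\norm{y}_\Gamma^2\big)$ from Lemma~\ref{lem:Z_lbound}; and the lower bound $Z(y)\ge C''\exp(-\kappa_1\norm{y}_\Gamma^2)$, valid for every $\kappa_1>\tfrac12$, again from Lemma~\ref{lem:Z_lbound}. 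After substitution the $x$- and $y$-dependence factorises, so that
\[
    \E^{\pi_*(y)} M_p(\mu^y) \le C\left(\int_{\mathcal X}\norm{x}^p\exp\!\big(a(\tau)L_1^2\norm{x}^2\big)\mu(dx)\right)\left(\int_{\R^d}\exp\!\big(b(\tau,\kappa_1)\norm{y}_\Gamma^2\big)dy\right),
\]
where $b(\tau,\kappa_1)=\kappa_1-\tfrac12\tfrac{L_2}{L_1^2+L_2}-\tfrac{1-\tau}{2}$.

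It remains to choose $\tau\in(0,1)$ and $\kappa_1>\tfrac12$ making both factors finite. The ${\mathcal X}$-integral is finite by Corollary~\ref{cor:direct_cor_assump}(ii) as soon as $a(\tau)L_1^2<L_2$, which forces $\tau$ above an explicit threshold tending to $0$ as $L_1^2/L_2\to 0$; the Gaussian $\R^d$-integral is finite precisely when $b(\tau,\kappa_1)<0$, which for $\kappa_1$ close to $\tfrac12$ reduces to $\tau<L_2/(L_1^2+L_2)$. These two constraints define a non-empty window for $\tau$ under the standing hypothesis $L_1^2<\tfrac{\sqrt2-1}{2}L_2$ (indeed a considerably weaker bound on $L_1^2/L_2$ would already suffice here), so picking any admissible pair $(\tau,\kappa_1)$ completes the proof, and swapping $\G\leftrightarrow\G_*$ yields $\E^{\pi(y)}M_p(\mu_*^y)<\infty$.

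The one genuinely delicate point is this last balancing: the bound on $e^{-\Phi}$ wants $\tau$ near $1$ so that the $\norm{x}^2$-growth stays integrable against the sub-Gaussian prior, whereas integrability of the $y$-tail wants $\tau$ small so that $-\tfrac{1-\tau}{2}\norm{y}_\Gamma^2$ dominates the $\kappa_1\norm{y}_\Gamma^2$ produced by the lower bound on $Z(y)$; verifying that these opposing demands leave room under the assumption on $L_1$ and $L_2$ is where one must carry along the precise constants from Lemmas~\ref{lem:Phi_basic} and~\ref{lem:Z_lbound} rather than treat them schematically.
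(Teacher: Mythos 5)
Your proof is correct, but it takes a genuinely different route from the paper's. The paper writes $\E^{\pi_*(y)} M_p(\mu^y) = M_p(\mu) + \int \norm{x}^p\,(h_*(x)-1)\,\mu(dx)$ with $h_*(x) = \E^{\pi_*(y)}[Z(y)^{-1}e^{-\Phi(x,y)}]$, observes that $h_*-1 = \E^{\pi_*(y)}[(Z(y)^{-1}-Z_*(y)^{-1})e^{-\Phi(x,y)}]$ because the unperturbed version integrates to one, and then invokes the likelihood-perturbation estimate of Corollary~\ref{cor:like_stab}; this yields the stronger quantitative statement that the moment deviates from $M_p(\mu)$ by at most a constant times $\Delta\G$, and it is there that the full strength of $L_1^2<\frac{\sqrt2-1}{2}L_2$ enters. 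You instead bypass the perturbation machinery entirely: Tonelli plus the three pointwise bounds (upper bound on $e^{-\Phi}$ from Lemma~\ref{lem:Phi_basic}, upper bound on $Z_*$ and lower bound on $Z$ from Lemma~\ref{lem:Z_lbound}) factorise the double integral, and your parameter window $\frac{L_1^2}{L_1^2+L_2}<\tau<\frac{L_2}{L_1^2+L_2}$ is indeed non-empty precisely when $L_1^2<L_2$, confirming your remark that a weaker hypothesis suffices for bare finiteness. Your argument is more self-contained and more elementary; the paper's buys a perturbation bound on the moments as a by-product, which is not needed for the statement as written.
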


\begin{proof}
Due to symmetry, it is sufficient to prove the first claim. We have that
\begin{eqnarray*}
	\left|\E^{\pi_*(y)} M_p(\mu^y)\right| & \leq & 
	\left|\int \norm{x}^p (h_*(x)-1) \mu(dx)\right| + M_p(\mu)\\
	& \leq & \int \norm{x}^p |h_*(x)-1|\mu(dx)+ M_p(\mu),
\end{eqnarray*}
where
\begin{equation*}
	h_*(x) = \E^{\pi_*(y)} \left[\frac 1{Z(y)} \exp(-\Phi(x,y))\right].
\end{equation*}
Now by Corollary \ref{cor:like_stab} we have that
\begin{eqnarray*}
	|h_*(x)-1| & = & \left|\E^{\pi_*(y)} \left[\left(\frac 1{Z(y)}-\frac 1{Z_*(y)}\right) \exp(-\Phi(x,y))\right]\right| \\
	& \leq & C \exp(L_2'\norm{x}^2) \int \exp(-\kappa_2\norm{y}_\Gamma^2) (1+\norm{y}_\Gamma)dy \cdot \Delta \G,
\end{eqnarray*}
for some $\kappa_2>0$ and $L_2'<L_2$. The result follows due to condition $(ii)$ in Assumption \ref{ass:prior_perturbation}.
\end{proof}

Now we are ready to prove Theorem \ref{thm:U1_like_stability}.
\begin{proof}[Proof of Theorem \ref{thm:U1_like_stability}]
\emph{Claim (i):} Let us first decompose the $W_1$ distance into two error components by writing
\begin{eqnarray*}
	W_1(\mu^y, \mu_*^y)  & =  & \sup_{\phi \in \text{Lip}_1^0({\mathcal X})} \left[\frac 1{Z(y)}\int \phi(x) \exp(-\Phi(x,y)) \mu(dx) - \frac 1{Z_*(y)}\int \phi(x) \exp(-\Phi_*(x,y)) \mu(dx)\right] \\
	& \leq & \frac 1{Z(y)} \sup_{\phi \in \text{Lip}_1^0({\mathcal X})} \left[\int \phi(x) \left\{\exp(-\Phi(x,y)) - \exp(-\Phi_*(x,y))\right\} \mu(dx)\right] \\
	& & +\left|\frac 1{Z(y)} - \frac 1{Z_*(y)}\right|\sup_{\phi \in \text{Lip}_1^0({\mathcal X})} \int \phi(x) \exp(-\Phi_*(x,y)) \mu(dx) \\
	& = & I_1(y) + I_2(y).
\end{eqnarray*}
For the first term, we have by Lemma \ref{lem:stability:likelihood_est2} and the Cauchy-Schwarz inequality that
\begin{eqnarray*}
    Z(y) I_1(y) & = & \sup_{\phi \in \text{Lip}_1^0({\mathcal X})} \left[\int \phi(x) \left\{\exp(-\Phi(x,y)) - \exp(-\Phi_*(x,y))\right\} \mu(dx)\right] \\
    & \leq & C \phi_\kappa^*(y) \E^\mu \left[\norm{x} \psi_\kappa^*(x) \norm{\G(x)-\G_*(x)}_\Gamma \right] \\
    & \leq & C \phi_\kappa^*(y) \left(\E^\mu \norm{x}^2 (\psi_\kappa^*(x))^2\right)^{\frac 12} \Delta \G
\end{eqnarray*}
The term $\E^\mu \norm{x}^2 (\psi_\kappa^*(x))^2$ is bounded for $\kappa < L_2/2$. This yields the bound
\begin{equation*}
    Z(y) I_1(y) \leq C \phi_{L_2'/2}^*(y) \Delta \G,
\end{equation*}
where $L_2'<L_2$ is arbitrary.

By Corollary \ref{cor:like_stab}, the second term satisfies 
\begin{equation*}
    I_2(y) \leq C \frac{\exp(-\kappa_2\norm{y}_\Gamma^2) (1+\norm{y}_\Gamma)}{\max\{Z(y), Z_*(y)\}} \E^\mu\left(\norm{x} \exp(L_2'\norm{x}^2)\right) \Delta \G,
\end{equation*}
where the exponential moment is bounded since $L_2'<L_2$.

In consequence, we have
\begin{equation*}
    \E^{\pi(y)} \left(I_1(y) + I_2(y)\right)
    \leq 
    C \int \left(\exp\left(\left(2\frac{L_1^2}{L_2'}- \frac 12\right) \norm{y}_\Gamma^2 + \exp(-\kappa_2\norm{y}_\Gamma^2)\right) (1+\norm{y}_\Gamma)\right) dy \Delta \G,
\end{equation*}
where the integral convergences since $L_2'$ can be chosen arbitrarily close to $L_2$ and
\begin{equation*}
    2\frac{L_1^2}{L_2}- \frac 12 < \sqrt 2 - 1 - \frac 12 < 0.
\end{equation*}
This concludes the claim.

\emph{Claim (ii):} We have by triangle inequality that
\begin{eqnarray*}
        \vert U_1 - U_1^*\vert & \leq & \left\vert \E^{\pi(y)}\left(W_1(\mu,\mu_*^{y}) - W_1(\mu,\mu_*^{y})\right)\right\vert\\
        & & + \left\vert \E^{\pi(y)}W_1(\mu,\mu_*^{y}) - \E^{\pi_*(y)}W_1(\mu,\mu_*^{y})\right\vert \\
        & \leq &  \E^{\pi(y)} W_1(\mu^y, \mu_*^y) + \left|(\E^{\pi(y)} - \E^{\pi_*(y)})W_1(\mu,\mu_*^{y})\right|.
\end{eqnarray*}
The first term is bounded directly by the posterior bound in $(i)$. For the second term, we observe
\begin{eqnarray}
    \label{eq:evidence_diff_like1}
	\left\vert \left[\E^{\pi(y)} - \E^{\pi_*(y)}\right] W_1(\mu,\mu_*^{y})\right\vert
    & = & \left|\int  W_1(\mu,\mu_*^{y}) \int (\pi(y | x) - \pi_*(y | x)) \mu(dx) dy \right|  \nonumber  \\
    & \leq & C \int W_1(\mu,\mu_*^{y})\left| \int \left(\exp(-\Phi(x,y)) - \exp(-\Phi_*(x,y)) \right)\mu(dx)  \right| dy \nonumber \\
    & \leq & C \int W_1(\mu,\mu_*^{y})\phi_{L_2'/2}^*(y) dy \cdot \int  \psi_{L_2'/2}^*(x) \norm{\G(x) - \G_*(x)}_\Gamma \mu(dx) \nonumber \\
    & \leq & C \int W_1(\mu,\mu_*^{y})\phi_{L_2'/2}^*(y) dy \cdot \left(\E^\mu \psi_{L_2'/2}^*(x)^2\right)^{\frac 12} \Delta \G,
\end{eqnarray}
where we applied Lemma \ref{lem:stability:likelihood_est2} and the Cauchy-Schwarz inequality.

Finally, we observe by Proposition \ref{prop:general_properties_Wp} claim (3) that $W_1(\mu,\mu_*^{y}) \leq M_1(\mu) + M_1(\mu_*^y)$ and
\begin{equation*}
    M_1(\mu_*^y) \phi_{L_2'/2}^*(y) \leq \E^\mu \left(\norm{x} \exp(L_2' \norm{x}^2)\right)
    \exp\left(\left(\kappa_1 - \frac 12 \frac{L_2'}{L_1^2+L_2
    } + 2\frac{L_1^2}{L_2'} - \frac 12\right)\norm{y}_\Gamma^2\right),
\end{equation*}
where we applied Lemmas \ref{lem:Phi_basic} (upper bound in \eqref{eq:young_joint} and \ref{lem:Z_lbound} (lower bound), and for $L_2'<L_2$ set $\tau = L_1^2/(L_1^2+L_2')$.
Since $\kappa_1>1/2$ can be set arbitrarily close to $1/2$, we have
\begin{equation*}
    \kappa_1 - \frac 12 \frac{L_2'}{L_1^2+L_2} + 2\frac{L_1^2}{L_2'} - \frac 12
    = \kappa_1 - \frac 12 + f\left(\frac{L_1^2}{L_2'}\right)<0
\end{equation*}
for some $L_2'<L_2$ as $f(L_1^2/L_2)<0$ by assumption. In consequence, the integral over $y$ in \eqref{eq:evidence_diff_like1} is finite, which proves the claim.
This concludes the proof.
\end{proof}

\subsection{Proofs for prior stability}
\label{subsec:prior_perturbation}

We now examine the prior stability of $U_p$ for the cases $p=1$ and $p=2$. To support this analysis, we first derive auxiliary results that will be applied in both cases. Throughout this section, we assume that $\G$ satisfies Assumption \ref{ass:prior_perturbation} for two probability measures $\mu$ and $\tmu$ on ${\mathcal X}$. Moreover, $\Phi$ is the corresponding log-likelihoods given by \eqref{eq:Phigauss} for $\G$.

In terms of the proof strategy, we follow a similar template for auxiliary results as with the likelihood perturbation before proceeding to the main proof.

\begin{lemma}
\label{lem:stability:likelihood_est}
For any $\kappa>0$, there exists a constant $C>0$ depending on $\kappa$ and $L_1$ such that
\begin{equation*}
	\left|\exp(-\Phi(x,y)) - \exp(-\Phi(x',y))\right|\leq C \phi_\kappa(y) \psi_\kappa(x) \psi_\kappa(x')\norm{x-x'}
\end{equation*}
for any $x,x'\in X$ and $y\in\R^d$, where
\begin{equation}
	\label{eq:phikappa}
	\phi_\kappa(y) = \exp\left( \frac{1}{2}\frac{L_1^2-\kappa}{\kappa} \norm{y}_\Gamma^2\right)(1+\norm{y}_\Gamma)
\end{equation}
and
\begin{equation}
	\label{eq:psikappa}
	\psi_\kappa(x) = \exp\left(\kappa\norm{x}^2 \right)(1+\norm{x}).
\end{equation}
\end{lemma}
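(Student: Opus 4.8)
The plan is to replay, almost verbatim, the argument behind Lemma~\ref{lem:stability:likelihood_est2}, but with the roles of ``two forward maps at one point'' and ``one forward map at two points'' interchanged. Concretely, I would apply Lemma~\ref{lem:basic_quad_zzy} with $z_1 = \G(x)$ and $z_2 = \G(x')$, which gives, for every $\tau > 0$,
\begin{multline*}
\left|\exp(-\Phi(x,y)) - \exp(-\Phi(x',y))\right|
\leq \frac 12\left(\norm{\G(x)}_\Gamma + \norm{\G(x')}_\Gamma + 2\norm{y}_\Gamma\right) \\
\times \exp\left(\frac{\tau-1}{2}\norm{y}_\Gamma^2 + \frac{1}{2\tau}\norm{\G(x)}_\Gamma^2 + \frac{1}{2\tau}\norm{\G(x')}_\Gamma^2\right)\norm{\G(x)-\G(x')}_\Gamma .
\end{multline*}
Here $\Phi(x,y) = \tfrac12\norm{\G(x)-y}_\Gamma^2$, so the left-hand side is exactly what we want to bound.

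Next I would dispose of the $\G$-dependent quantities using Assumption~\ref{ass:prior_perturbation}. The Lipschitz bound~(i) gives $\norm{\G(x)-\G(x')}_\Gamma \leq L_1\norm{x-x'}$, while Corollary~\ref{cor:direct_cor_assump}~(i) gives $\norm{\G(x)}_\Gamma \leq L_1\norm{x} + C_0$ with $C_0 = L_1 R + L_3$, hence $\norm{\G(x)}_\Gamma^2 \leq 2L_1^2\norm{x}^2 + 2C_0^2$ and likewise for $x'$. Substituting these, absorbing the fixed factor $\exp(2C_0^2/\tau)$ into the constant, and then choosing $\tau = L_1^2/\kappa$ turns $\tfrac{1}{2\tau}\cdot 2L_1^2 = \tfrac{L_1^2}{\tau}$ into $\kappa$ and $\tfrac{\tau-1}{2}$ into $\tfrac12\tfrac{L_1^2-\kappa}{\kappa}$, so the exponential becomes exactly $\exp\!\big(\tfrac12\tfrac{L_1^2-\kappa}{\kappa}\norm{y}_\Gamma^2 + \kappa\norm{x}^2 + \kappa\norm{x'}^2\big)$. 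The surviving polynomial prefactor is (up to a constant) $L_1\norm{x} + L_1\norm{x'} + 2\norm{y}_\Gamma + 2C_0$, which by the elementary inequality $1 + a + b + c \leq (1+a)(1+b)(1+c)$ for $a,b,c\geq 0$ is bounded by a constant multiple of $(1+\norm{x})(1+\norm{x'})(1+\norm{y}_\Gamma)$. Regrouping the exponential and polynomial factors as $\phi_\kappa(y)\,\psi_\kappa(x)\,\psi_\kappa(x')$ yields the stated inequality, with the constant depending only on $\kappa$ and $L_1$ (and on $R, L_3$ through $C_0$, exactly as in Lemma~\ref{lem:stability:likelihood_est2}).

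There is no real obstacle: the proof is a mechanical adaptation of the earlier lemma. The only point needing a moment's care is the bookkeeping around the single free parameter $\tau$ in Lemma~\ref{lem:basic_quad_zzy}: one must check that the one choice $\tau = L_1^2/\kappa$ simultaneously produces the $y$-exponent of $\phi_\kappa$ and the common $x$-exponent of $\psi_\kappa$, which it does because both are rigid functions of $\tau$ once $\kappa$ is fixed. Everything else — squaring the affine bound on $\norm{\G(x)}_\Gamma$, factoring the linear prefactor into a product, and tracking which constants depend on which parameters — is routine.
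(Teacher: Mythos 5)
Your proposal is correct and is essentially the paper's own proof: the paper likewise applies Lemma \ref{lem:basic_quad_zzy} with $z_1=\G(x)$, $z_2=\G(x')$, bounds $\norm{\G(x)}_\Gamma^2$ via Corollary \ref{cor:direct_cor_assump}~(i), chooses $\tau = L_1^2/\kappa$, and factors the linear prefactor into the product form. Your remark that the constant also depends on $R$ and $L_3$ through $C_0$ is accurate and applies equally to the paper's argument.
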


\begin{proof}
Applying Lemma \ref{lem:basic_quad_zzy} with $z_1 = \G(x)$ and $z_2 = \G(x')$ we obtain
\begin{multline*}
    \frac{\left|\exp(-\Phi(x,y)) - \exp(-\Phi(x',y))\right|}{\norm{\G(x)-\G(x')}_\Gamma} \\
    \leq \frac 12 \left(\norm{\G(x)}_\Gamma + \norm{\G(x')}_\Gamma + 2\norm{y}_\Gamma \right)  \exp\left(\frac{\tau-1}{2} \norm{y}_\Gamma^2 + \frac 1{2\tau}\norm{\G(x)}_\Gamma^2 + \frac 1{2\tau}\norm{\G(x')}_\Gamma^2\right)  \\
    \leq C \left(\norm{x} + \norm{x'} + \norm{y}_\Gamma + 1\right)
    \exp\left(\frac{\tau-1}{2} \norm{y}_\Gamma^2 + \frac{L_1^2}{\tau}\left(\norm{x}^2+\norm{x'}^2\right)\right)
\end{multline*}
Choosing $\tau = L_1^2/\kappa$ and observing that for $a,b\geq 0$ we have $a+b+1 \leq (a+1)(b+1)$, yields
\begin{equation*}
    \frac{\left|\exp(-\Phi(x,y)) - \exp(-\Phi(x',y))\right|}{\norm{\G(x)-\G(x')}_\Gamma}
    \leq C(\norm{x}+\norm{x'}+1)(\norm{y}_\Gamma +1) \exp\left(\frac{1}{2}\frac{L_1^2-\kappa}{\kappa}\norm{y}_\Gamma^2 + \kappa \norm{x}^2 + \kappa \norm{x'}^2\right) 
\end{equation*}
This yields the claim.
\end{proof}

\begin{lemma}
\label{lem:evidence_pert}
Let $Z$ and $\widetilde{Z}$ be the normalization constants defined by \eqref{eq:Z} for $\mu$ and $\tilde \mu$, respectively.
For any $L_2'<L_2$, we have that 
\begin{equation*}
	|Z(y) - \widetilde{Z}(y)|  \leq C \exp\left( \left(\frac{L_1^2}{L_2'}-\frac 12\right) \norm{y}_\Gamma^2\right)(1+\norm{y}_\Gamma) \cdot W_2(\mu,\tmu).
\end{equation*}
\end{lemma}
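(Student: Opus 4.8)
The plan is to exploit the Lipschitz estimate from Lemma~\ref{lem:stability:likelihood_est} together with the Kantorovich--Rubinstein duality, but since the test function $x \mapsto \exp(-\Phi(x,y))$ is only \emph{locally} Lipschitz (its Lipschitz constant grows like $\psi_\kappa(x)^2$), the clean $W_1$ bound is not directly available; instead one passes through an optimal $W_2$-coupling. First I would write
\begin{equation*}
  |Z(y) - \widetilde Z(y)| = \left|\int \exp(-\Phi(x,y))\,\mu(dx) - \int \exp(-\Phi(x',y))\,\tmu(dx')\right| \leq \int \left|\exp(-\Phi(x,y)) - \exp(-\Phi(x',y))\right| \gamma(dx,dx'),
\end{equation*}
where $\gamma \in \Gamma(\mu,\tmu)$ is the optimal $W_2$-coupling guaranteed by Theorem~\ref{Optimal coupling}. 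Applying Lemma~\ref{lem:stability:likelihood_est} with a parameter $\kappa>0$ to be fixed later gives the bound $C\phi_\kappa(y)\int \psi_\kappa(x)\psi_\kappa(x')\norm{x-x'}\,\gamma(dx,dx')$.

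The second step is a Cauchy--Schwarz split of the integral against $\gamma$:
\begin{equation*}
  \int \psi_\kappa(x)\psi_\kappa(x')\norm{x-x'}\,\gamma(dx,dx') \leq \left(\int \psi_\kappa(x)^2\psi_\kappa(x')^2\,\gamma(dx,dx')\right)^{1/2} \left(\int \norm{x-x'}^2\,\gamma(dx,dx')\right)^{1/2},
\end{equation*}
where the second factor is exactly $W_2(\mu,\tmu)$ by optimality of $\gamma$. For the first factor, a further Cauchy--Schwarz (or Young's inequality $\psi_\kappa(x)^2\psi_\kappa(x')^2 \le \tfrac12\psi_\kappa(x)^4 + \tfrac12\psi_\kappa(x')^4$, noting $\psi_\kappa(x)^4 = \exp(4\kappa\norm{x}^2)(1+\norm{x})^4$) reduces the bound to $\E^\mu \psi_{2\kappa}(x)^2 \cdot C$ plus the analogous $\tmu$-term, up to adjusting polynomial factors into the exponential. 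Choosing $\kappa = L_2'/4$ for any $L_2' < L_2$, these exponential moments are finite by Corollary~\ref{cor:direct_cor_assump}~$(ii)$ applied to both $\mu$ and $\tmu$ (which share the constant $L_2$ under Assumption~\ref{ass:prior_perturbation}), and collapse into a constant $C$ depending on $L_2, L_2'$. With this choice of $\kappa$, the prefactor $\phi_\kappa(y)$ from \eqref{eq:phikappa} becomes $\exp\!\big(\tfrac12\tfrac{L_1^2 - L_2'/4}{L_2'/4}\norm{y}_\Gamma^2\big)(1+\norm{y}_\Gamma)$; absorbing the bounded discrepancy between this exponent and $(L_1^2/L_2' - 1/2)\norm{y}_\Gamma^2$ into the constant (or, more directly, picking $\kappa = L_2'/2$ so that the exponent is exactly $\tfrac12\tfrac{L_1^2}{L_2'/2} - \tfrac12 = L_1^2/L_2' - 1/2$ up to the choice of $L_2'$) yields precisely the stated form.

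The main obstacle is purely bookkeeping: one must track how the polynomial weights $(1+\norm{x})$ hidden in $\psi_\kappa$ interact with the exponential moments so that the final constant is genuinely finite, and confirm that the exponential integrability hypothesis Assumption~\ref{ass:prior_perturbation}~$(ii)$ is assumed to hold for \emph{both} priors $\mu$ and $\tmu$ with the common constant $L_2$ (this is exactly the setup stated at the top of Section~\ref{subsec:prior_perturbation}). No delicate estimate is needed beyond Lemma~\ref{lem:stability:likelihood_est}, Cauchy--Schwarz, and the existence of the optimal $W_2$-coupling; the only subtlety is that $W_2$ rather than $W_1$ must appear, which is forced by the quadratic Cauchy--Schwarz split.
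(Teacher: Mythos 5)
Your argument is the same as the paper's: pick a coupling of $\mu$ and $\tmu$ (the paper takes an arbitrary one and passes to the infimum at the end, you take the optimal $W_2$-coupling --- immaterial), apply Lemma~\ref{lem:stability:likelihood_est} pointwise, and use Cauchy--Schwarz against the coupling to split off $\left(\int\norm{x-x'}^2\rho(dx,dx')\right)^{1/2}=W_2(\mu,\tmu)$, with the exponential moments of $\psi_\kappa$ under $\mu$ and $\tmu$ absorbed into the constant.

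One step of your write-up, as literally stated, is wrong: with $\kappa=L_2'/4$ the prefactor is $\phi_{L_2'/4}(y)=\exp\bigl((2L_1^2/L_2'-\tfrac12)\norm{y}_\Gamma^2\bigr)(1+\norm{y}_\Gamma)$, and the discrepancy between the exponents $2L_1^2/L_2'$ and $L_1^2/L_2'$ is a coefficient of $\norm{y}_\Gamma^2$ over all of $\R^d$; it is unbounded and cannot be ``absorbed into the constant.'' What that choice actually yields is the stated inequality with $L_2'$ replaced by $L_2'/2$, i.e.\ the lemma only for $L_2'<L_2/2$, which is weaker than ``for any $L_2'<L_2$'' (and the downstream corollaries do want $L_2'$ arbitrarily close to $L_2$). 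Your fallback $\kappa=L_2'/2$ gives the right exponent and is exactly the paper's choice, but then you are back to needing the fourth moment $\E\,\psi_{L_2'/2}(x)^4=\E\,\exp(2L_2'\norm{x}^2)(1+\norm{x})^4$ after the Cauchy--Schwarz split --- the very issue you flagged. (The paper's own display silently replaces $\bigl(\int\psi_\kappa(x)^2\psi_\kappa(x')^2\rho(dx,dx')\bigr)^{1/2}$ by $\bigl(\E^\mu\psi_\kappa^2\bigr)^{1/2}\bigl(\E^{\tmu}\psi_\kappa^2\bigr)^{1/2}$, which does not follow for a general coupling, so this second-versus-fourth-moment bookkeeping is a shared wrinkle rather than a defect unique to your proposal; but you should not paper over it with a false ``bounded discrepancy'' claim.)
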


\begin{proof}
Let $\rho$ be any coupling between $\mu$ and $\tmu$. We have
\begin{eqnarray*}
	|Z(y) - \widetilde{Z}(y)| & =  & \left|\int (\exp(-\Phi(x,y)) - \exp(-\Phi(x',y)) \rho(dx,dx') \right| \\
	& \leq & C \phi_\kappa(y) \int \psi_\kappa(x) \psi_\kappa(x') \norm{x-x'} \rho(dx,dx') \\
	& \leq & C \phi_\kappa(y) \cdot \sqrt{\E^\mu\left[\psi_\kappa(x)^2\right]} \cdot \sqrt{\E^{\tmu}\left[\psi_\kappa(x)^2\right]} \cdot \sqrt{\int \norm{x-x'}^2 \rho(dx, dx')}.
\end{eqnarray*}
The choice $\kappa = L_2'/2$ with $L_2'<L_2$ guarantees the boundedness of the exponential moments and yields
\begin{equation*}
	|Z(y) - \widetilde{Z}(y)| \leq C \exp\left( \left(\frac{L_1^2}{L_2'}-\frac 12\right) \norm{y}_\Gamma^2\right)(1+\norm{y}_\Gamma) \cdot \sqrt{\int \norm{x-x'}^2 \rho(dx, dx')}.
\end{equation*}
Since the coupling $\rho$ was arbitrary, this proves the claim.
\end{proof}

\begin{corollary}
\label{cor:prior_stab_aux1}
Let $L_1^2< \frac{\sqrt 3 -1}2 L_2$. There exists $L_2'<L_2$ and $\kappa_2>0$ such that
\begin{equation}
    \left|\frac 1{Z(y)}-\frac 1{\widetilde{Z}(y)}\right| \exp(-\Phi(x,y))
    \leq C \frac{\exp(L_2'\norm{x}^2) \exp(-\kappa_2\norm{y}_\Gamma^2) (1+\norm{y}_\Gamma)}{\max\{Z(y), \widetilde{Z}(y)\}}W_2(\mu,\tmu) .
\end{equation}
\end{corollary}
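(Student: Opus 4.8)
The plan is to mirror the structure already established for the likelihood-perturbation corollary (Corollary \ref{cor:like_stab}), adapting the bookkeeping to the prior-perturbation setting. First I would write
\[
\left|\frac 1{Z(y)}-\frac 1{\widetilde{Z}(y)}\right| \exp(-\Phi(x,y))
= \frac{|Z(y) - \widetilde Z(y)|}{\max\{Z(y),\widetilde Z(y)\}} \cdot \frac{\exp(-\Phi(x,y))}{\min\{Z(y),\widetilde Z(y)\}},
\]
so that the task reduces to bounding the numerator $|Z(y)-\widetilde Z(y)|$ (controlled by Lemma \ref{lem:evidence_pert}) together with the ratio $\exp(-\Phi(x,y))/\min\{Z(y),\widetilde Z(y)\}$. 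For the latter I would apply the upper bound on $\exp(-\Phi(x,y))$ from Lemma \ref{lem:Phi_basic} (the inequality \eqref{eq:young_joint} with $\tau = L_1^2/(L_1^2+L_2')$, yielding $\exp(-\Phi(x,y)) \leq C \exp(L_2'\|x\|^2) \exp(-\tfrac12\tfrac{L_2'}{L_1^2+L_2'}\|y\|_\Gamma^2)$), and the lower bound on $\min\{Z,\widetilde Z\}$ from Lemma \ref{lem:Z_lbound} (which gives $\min\{Z(y),\widetilde Z(y)\} \geq C\exp(-\kappa_1\|y\|_\Gamma^2)$ for any $\kappa_1 > 1/2$). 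Crucially, $\min\{Z,\widetilde Z\}$ is bounded below by the same expression because the lower bound in \eqref{eq:Z_equiv} depends only on $L_3, R, d$ and $\kappa_1$, all shared by $\mu$ and $\tmu$ under Assumption \ref{ass:prior_perturbation}.

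Collecting the exponents in $\|y\|_\Gamma^2$, the total will be
\[
-\kappa_2 := \kappa_1 - \frac 12 + \frac{L_1^2}{L_2'} - \frac 12 \frac{L_2'}{L_1^2 + L_2'},
\]
coming from: $+\kappa_1$ (from the $1/\min\{Z,\widetilde Z\}$ factor), $+(L_1^2/L_2' - 1/2)$ (the $\|y\|_\Gamma^2$-exponent in the Lemma \ref{lem:evidence_pert} bound on $|Z-\widetilde Z|$), and $-\tfrac12\tfrac{L_2'}{L_1^2+L_2'}$ (the $\|y\|_\Gamma^2$-decay of $\exp(-\Phi(x,y))$). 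The key step is then to verify that admissible $L_2' < L_2$ and $\kappa_1 > 1/2$ exist making this quantity negative. Reusing the function $f(t) = 2t - \tfrac12 \tfrac{1}{t+1}$ introduced in the proof of Corollary \ref{cor:like_stab} is \emph{not} quite right here, because the coefficient of $L_1^2/L_2'$ is $1$ rather than $2$; instead I would define $g(t) = t - \tfrac12\tfrac{1}{t+1}$ and observe that $-\kappa_2 = (\kappa_1 - 1/2) + g(L_1^2/L_2')$. Since $\kappa_1 - 1/2$ can be taken arbitrarily small, negativity is guaranteed as soon as $g(L_1^2/L_2') < 0$; $g$ is increasing on $(0,\infty)$ with $g(0) = -1/2$, and a direct computation shows $g\bigl((\sqrt 3 - 1)/2\bigr) = 0$ (indeed $g(t)=0 \iff 2t(t+1) = 1 \iff t = (\sqrt 3 - 1)/2$). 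Hence the hypothesis $L_1^2 < \tfrac{\sqrt 3 - 1}{2} L_2$ lets us pick $L_2'$ slightly below $L_2$ with $L_1^2/L_2' < (\sqrt 3-1)/2$, so that $g(L_1^2/L_2') < 0$ and, after shrinking $\kappa_1 - 1/2$, we get $\kappa_2 > 0$.

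Finally, the $(1+\|y\|_\Gamma)$ prefactor is carried along unchanged from Lemma \ref{lem:evidence_pert}, and the $\exp(L_2'\|x\|^2)$ factor is exactly the $x$-dependence claimed. The constant $C$ absorbs $C, C'$ from Lemmas \ref{lem:Z_lbound} and \ref{lem:evidence_pert}, and depends on $L_2, L_2', L_3, \kappa_2$ (hence $\kappa_1$) and $d$. Since the roles of $Z$ and $\widetilde Z$ are symmetric in the quantity $|1/Z - 1/\widetilde Z|$ and the lower bound of Lemma \ref{lem:Z_lbound} applies verbatim to $\widetilde Z$, the same bound follows with $\widetilde Z$ in the numerator of $\exp(-\Phi)/\min$, completing the proof. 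I expect the main obstacle to be purely bookkeeping: getting the constant $\tfrac{\sqrt 3 - 1}{2}$ to emerge correctly, which hinges on recognizing that the exponent algebra here differs from the likelihood case precisely in the coefficient of $L_1^2/L_2'$, so that the relevant threshold is the root of $2t(t+1)=1$ rather than of $4t(t+1)=1$.
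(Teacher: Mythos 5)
Your proof is correct and follows essentially the same route as the paper: combine Lemma \ref{lem:evidence_pert}, the upper bound of Lemma \ref{lem:Phi_basic} with $\tau = L_1^2/(L_1^2+L_2')$, and the lower bound of Lemma \ref{lem:Z_lbound}, then check that the exponent $\kappa_1 - \tfrac12 + g(L_1^2/L_2')$ can be made negative under the hypothesis $L_1^2 < \tfrac{\sqrt{3}-1}{2}L_2$. Your bookkeeping $g(t) = t - \tfrac12\tfrac{1}{t+1}$ is in fact the correct one: the paper's displayed $g(t) = t - \tfrac{1}{t+1}$ is a typo, since the properties it then asserts, $g(0) = -1/2$ and $g\bigl((\sqrt{3}-1)/2\bigr) = 0$, hold only for your version.
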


\begin{proof}
By applying Lemmas \ref{lem:Phi_basic} (upper bound with $\tau = L_1^2/(L_1^2 + L_2')$), \ref{lem:Z_lbound} (lower bound) and \ref{lem:evidence_pert} with $L_2'< L_2$, we obtain
\begin{multline}
    \frac{\left|Z(y)-\widetilde{Z}(y)\right|}{Z(y)} \exp(-\Phi(x,y)) \\
    \leq C \exp(L_2' \norm{x}^2)\exp\left(\left(\kappa_1 + \frac{L_1^2}{L_2'}-\frac 12-\frac 12 \frac {L_2'}{L_1^2 + L_2'}\right)\norm{y}_\Gamma^2\right)(1+\norm{y}_\Gamma) W_2(\mu,\tmu).
\end{multline}
Following the footsteps in the proof of Corollary \ref{cor:like_stab}, we observe that $\kappa_1 - 1/2>0$ can be chosen arbitrarily small and denote
\begin{equation}
    g(t) = t - \frac 1{t+1}.
\end{equation}
We note that $g$ is increasing for $t>0$ with $g(0)=-1/2$ and $g((\sqrt 3 -1)/2)) = 0$. Therefore, our assumption $L_1^2/L_2< \frac{\sqrt 3 -1}2$ guarantees that $\kappa_1$ and $L_2'<L_2$ can be chosen such that
\begin{equation*}
    -\kappa_2 = \kappa_1 - \frac 12 + g\left(\frac{L_1^2}{L_2'}\right) < 0.
\end{equation*}
As the same upper bound holds for $\frac{\left|Z(y)-\widetilde{Z}(y)\right|}{\widetilde{Z}(y)} \exp(-\Phi(x,y))$, we obtain the result.
\end{proof}

\begin{proposition}
\label{prop:stab_moment_bound}
Suppose $\G$ and $\G_*$ satisfy Assumption \ref{ass:prior_perturbation} with probability measure $\mu$ on ${\mathcal X}$ and with same constants $L_1, L_2, L_3$ and $R$. Moreover, we assume $L_1^2<\frac{\sqrt 3 -1}2 L_2$.
Then
\begin{equation*}
	\E^{\tpi(y)} M_p(\mu^y) < \infty \quad \text{and} \quad
	\E^{\pi(y)} M_p(\tmu^y) < \infty.
\end{equation*}
\end{proposition}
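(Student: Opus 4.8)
The plan is to mirror the argument used in the proof of Proposition \ref{prop:like_stab_moment_bound}, replacing the likelihood-perturbation estimate (Corollary \ref{cor:like_stab}) by its prior-perturbation counterpart (Corollary \ref{cor:prior_stab_aux1}). By symmetry it suffices to establish the first bound, $\E^{\tpi(y)} M_p(\mu^y) < \infty$. First I would write $M_p(\mu^y) = \int \norm{x}^p \frac{1}{Z(y)}\exp(-\Phi(x,y))\,\mu(dx)$ and integrate against the evidence $\tpi(y)$ (which, up to a universal constant depending on $\Gamma$ and $d$, equals $\widetilde Z(y)$). Interchanging the order of integration via Tonelli, this becomes $\int \norm{x}^p\, h(x)\,\mu(dx)$ where $h(x) = \E^{\tpi(y)}\bigl[\frac{1}{Z(y)}\exp(-\Phi(x,y))\bigr]$. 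I would then estimate $\left|\E^{\tpi(y)} M_p(\mu^y)\right| \leq \int \norm{x}^p |h(x) - 1|\,\mu(dx) + M_p(\mu)$, the last term being finite by Corollary \ref{cor:direct_cor_assump} $(ii)$ (since the sub-Gaussian prior has all polynomial moments finite).

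The core step is to bound $|h(x)-1|$. Since $\int \frac{1}{\widetilde Z(y)}\exp(-\Phi(x,y))\,\widetilde\pi(y)\,dy = 1$ (up to the same universal constant, the evidence-averaged posterior normalizes), I would write
\begin{equation*}
    |h(x) - 1| = \left|\E^{\tpi(y)}\left[\left(\frac{1}{Z(y)} - \frac{1}{\widetilde Z(y)}\right)\exp(-\Phi(x,y))\right]\right|
\end{equation*}
and apply Corollary \ref{cor:prior_stab_aux1}, which gives pointwise in $x$ and $y$ the bound $C \exp(L_2'\norm{x}^2)\exp(-\kappa_2\norm{y}_\Gamma^2)(1+\norm{y}_\Gamma)/\max\{Z(y),\widetilde Z(y)\} \cdot W_2(\mu,\tmu)$ for some $L_2' < L_2$ and $\kappa_2 > 0$, valid under the hypothesis $L_1^2 < \tfrac{\sqrt 3 - 1}{2} L_2$. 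Since the expectation $\E^{\tpi(y)}$ corresponds (up to constants) to integrating against $\widetilde Z(y)\,dy$, and $\widetilde Z(y) \leq \max\{Z(y),\widetilde Z(y)\}$, the factor $\widetilde Z(y)/\max\{Z(y),\widetilde Z(y)\}$ is at most $1$, so we are left with $|h(x)-1| \leq C\exp(L_2'\norm{x}^2)\,W_2(\mu,\tmu)\int \exp(-\kappa_2\norm{y}_\Gamma^2)(1+\norm{y}_\Gamma)\,dy$, and the Gaussian-type integral in $y$ over $\R^d$ converges since $\kappa_2 > 0$.

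Finally, substituting back, $\int \norm{x}^p |h(x)-1|\,\mu(dx) \leq C\, W_2(\mu,\tmu) \int \norm{x}^p \exp(L_2'\norm{x}^2)\,\mu(dx)$, which is finite by Corollary \ref{cor:direct_cor_assump} $(ii)$ because $L_2' < L_2$. This completes the bound, and the second statement follows by exchanging the roles of $\mu$ and $\tmu$ (note the Lipschitz, sub-Gaussian and properness constants are shared). The main obstacle I anticipate is purely bookkeeping: carefully tracking the universal constants relating $Z$, $\widetilde Z$ and the evidence densities $\pi$, $\tpi$, and verifying that the cancellation of $\widetilde Z(y)$ against $\max\{Z(y),\widetilde Z(y)\}$ indeed leaves an integrable expression in $y$ — there is no genuinely hard estimate here, since all the delicate exponent arithmetic was already done in Corollary \ref{cor:prior_stab_aux1}.
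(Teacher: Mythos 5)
Your proposal is correct and follows essentially the same route as the paper: decompose $\E^{\tpi(y)} M_p(\mu^y)$ via the function $h(x)=\E^{\tpi(y)}[Z(y)^{-1}\exp(-\Phi(x,y))]$, bound $|h(x)-1|$ pointwise with Corollary \ref{cor:prior_stab_aux1} (absorbing the $\widetilde Z(y)$ from the evidence into $\max\{Z(y),\widetilde Z(y)\}$), and conclude with the exponential moment bound of Corollary \ref{cor:direct_cor_assump}~$(ii)$. In fact your bookkeeping is slightly more careful than the paper's: since $\mu^y$ has density with respect to $\mu$, the residual integral is indeed $\int \norm{x}^p|h(x)-1|\,\mu(dx)+M_p(\mu)$ as you write, whereas the paper's displayed inequality uses $\tmu$ there (an inconsequential slip, as both measures satisfy the same moment assumptions).
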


\begin{proof}
Here, we follow reiterate the proof of Proposition \ref{prop:stab_moment_bound}. Again, due to symmetry, it is sufficient to prove the first claim. We have that
\begin{equation*}
	\left|\E^{\tpi(y)} M_p(\mu^y)\right| \leq  \int \norm{x}^p |\tilde h(x)-1|\tmu(dx)+ M_p(\tmu),
\end{equation*}
where
\begin{equation*}
	\tilde h(x) = \E^{\tpi(y)} \left[\frac 1{Z(y)} \exp(-\Phi(x,y))\right].
\end{equation*}
Now by Corollary \ref{cor:prior_stab_aux1} we have that
\begin{eqnarray*}
	|\tilde h(x)-1| & = & \left|\E^{\tpi(y)} \left[\left(\frac 1{Z(y)}-\frac 1{\widetilde{Z}(y)}\right) \exp(-\Phi(x,y))\right]\right| \\
	& \leq & C \exp\left(L_2' \norm{x}^2\right) \int \exp\left(-\kappa_2\norm{y}_\Gamma^2\right)(1+\norm{y}_\Gamma)dy \cdot W_2(\mu,\tmu),
\end{eqnarray*}
for some $\kappa_2>0$ and $L_2'<L_2$. The result follows due to condition $(ii)$ in Assumption \ref{ass:prior_perturbation}.
\end{proof}

\subsubsection{Prior perturbations for $p=1$}

Recall that by Kantorovich duality we can write
\begin{equation*}
	W_1(\mu_1, \mu_2)  = \sup_{\phi \in \text{Lip}_1^0({\mathcal X})} \int \phi(x) \left[\mu_1(dx) - \mu_2(dx)\right],
\end{equation*}
where $\text{Lip}_1^0({\mathcal X})$ stands for the Lipschitz-$1$ (i.e. Lipschitz constant is bounded by $1$) that vanish at the origin.  Below, we use in particular that $|\phi(x)| = |\phi(x) - \phi(0)| \leq \norm{x}$ for any $\phi \in \text{Lip}_1^0({\mathcal X})$.

Next theorem is based on the ideas in \cite[Thm. 3.8]{garbuno2023bayesian} and is modified for the purpose of this section.

\begin{proof}[Proof of Theorem \ref{thm:U1_prior_stability}]
\emph{Claim $(i)$:}
Let us first decompose the $W_1$ distance into two error components by writing
\begin{eqnarray*}
	W_1(\mu^y, \tmu^y)  & =  & \sup_{\phi \in \text{Lip}_1^0({\mathcal X})} \left[\frac 1{Z(y)}\int \phi(x) \exp(-\Phi(x,y)) \mu(dx) - \frac 1{\widetilde{Z}(y)}\int \phi(x) \exp(-\Phi(x,y)) \tmu(dx)\right] \\
	& \leq & \frac 1{Z(y)} \sup_{\phi \in \text{Lip}_1^0({\mathcal X})} \left[\int \phi(x) \exp(-\Phi(x,y)) \mu(dx) - \int \phi(x) \exp(-\Phi(x,y)) \tmu(dx)\right] \\
	& & +\left|\frac 1{Z(y)} - \frac 1{\widetilde{Z}(y)}\right|\sup_{\phi \in \text{Lip}_1^0({\mathcal X})} \int \phi(x) \exp(-\Phi(x,y)) \tmu(dx) \\
	& = & I_1(y) + I_2(y).
\end{eqnarray*}
Considering the error first term, let $\rho$ be a coupling of $\mu$ and $\tmu$. Now we obtain
\begin{eqnarray*}
		Z(y) I_1(y) & = & \sup_{\phi \in \text{Lip}_1^0({\mathcal X})} \int \left[\phi(x)\exp(-\Phi(x,y)) - \phi(x')\exp(-\Phi(x',y))\right] \rho(dx,dx') \\
        & \leq &  \sup_{\phi \in \text{Lip}_1^0({\mathcal X})} \int |\phi(x)| \left|\exp(-\Phi(x,y)) - \exp(-\Phi(x',y))\right| \rho(dx,dx')\\
        & & + \sup_{\phi \in \text{Lip}_1^0({\mathcal X})} \int  \exp(-\Phi(x,y)) |\phi(x) - \phi(x')| \rho(dx,dx') \\
	& \leq & \int \left(\norm{x} |\exp(-\Phi(x,y)) - \exp(-\Phi(x',y))| + \exp(-\Phi(x,y)) \norm{x-x'}\right) \rho(dx,dx') \\
    & \leq & C \int \bigg( \norm{x} \phi_\kappa(y) \psi_\kappa(x) \psi_\kappa(x') 
	 \\
    & & \quad\quad\quad + \exp\left(\frac{1-\tau}{\tau} L_1^2\norm{x}^2\right) \exp\left(-\frac{1-\tau}2 \norm{y}_\Gamma^2\right) \bigg)\norm{x-x'} \rho(dx,dx')
\end{eqnarray*}
for $\kappa, \tau>0$, where we utilized Lemmas \ref{lem:stability:likelihood_est} and \ref{lem:Phi_basic}. Setting $\kappa = L_2'/2$ with $L_2'<L_2$ and $\tau = 2L_1^2/(2L_1^2+L_2)$, and applying Cauchy-Schwarz inequality, and taking into account that $\rho$ was arbitrary, we obtain
\begin{multline*}
    Z(y) I_1(y) \\
    \leq C \exp\left(\frac 12 \frac{2L_1^2 - L_2'}{L_2'}\norm{y}_\Gamma^2\right)(1+\norm{y}_\Gamma) \left(\E^\mu(\norm{x}^2 \psi_{L_2'/2}(x)^2)\right)^{\frac 12}\left(\E^{\tmu} \psi_{L_2'/2}(x)^2\right)^{\frac 12} W_2(\mu, \tmu) \\
    + \exp\left(\left(-\frac 12 + \frac{L_1^2}{2L_1^2 + L_2'}\right)\norm{y}_\Gamma^2\right)\left(\E^\mu \exp(L_2' \norm{x}^2)\right)^{\frac 12} W_2(\mu,\tmu)
\end{multline*}
Clearly, our assumption on $L_1$ and $L_2$ implies $L_1^2 < L_2'/2$ and, consequently, $(2L_1^2-L_2')/L_2'<0$. Moreover, for any $L_2'>0$ we have $-1/2 + L_1^2/(2L_1^2 + L_2') < 0$, guaranteeing the exponential decay of $Z(y)I_1(y)$ and, therefore, finite expectation
\begin{equation}
    \E^{\pi(y)} I_1(y) \leq C W_2(\mu,\tmu).
\end{equation}

Consider now the second error term $I_2$. We first note that by Corollary \ref{cor:prior_stab_aux1} there exists $L_2'<L_2$ and $\kappa_2>0$ such that
\begin{eqnarray*}
    I_2(y) & \leq &  \int \norm{x} \left|\frac 1{Z(y)} - \frac 1{\widetilde{Z}(y)}\right|\exp(-\Phi(x,y)) \tmu(dx) \\
    & \leq & C \E^{\tmu} \left(\norm{x} \exp(L_2'\norm{x}^2)\right) \frac{\exp(-\kappa_2\norm{y}_\Gamma^2) (1+\norm{y}_\Gamma)}{\max\{Z(y), \widetilde{Z}(y)\}} \cdot W_2(\mu,\tmu)
\end{eqnarray*}
Now we observe that the expectation over $\pi(y)$ is bounded, which concludes the proof.

\emph{Claim $(ii)$:}
We have by triangle inequality and properties of the Wasserstein distance that
\begin{eqnarray}
    \label{eq:U1prior_stability_proof_ineq1}
        \vert U_1 - \widetilde U_1\vert & \leq & \left\vert \E^{\pi(y)}W_1(\mu,\mu^{y}) - \E^{\pi(y)}W_1(\tmu,\mu^{y})\right\vert\\
        && + \left\vert \E^{\pi(y)}W_1(\tmu,\mu^{y}) - \E^{\pi(y)}W_1(\tmu,\tmu^{y})\right\vert \nonumber\\
        & & + \left\vert \E^{\pi(y)}W_1(\tmu,\tmu^{y}) - \E^{\tpi(y)}W_1(\tmu,\tmu^{y})\right\vert \nonumber \\
        & \leq &  W_1(\mu, \tmu) + \E^{\pi(y)} W_1(\mu^y, \tmu^y) + \left|(\E^{\pi(y)} - \E^{\tpi(y)})W_1(\tmu,\tmu^{y})\right|
\end{eqnarray}
By the first claim, it holds that $\E^{\pi(y)} W_1(\mu^y, \tmu^y) \leq C W_2(\mu,\tmu)$. 

Consider now the third term and let $\rho$ be any coupling between $\mu$ and $\tmu$. By Cauchy-Schwarz and Lemma \ref{lem:stability:likelihood_est}, we obtain
\begin{align}
    \label{eq:U1prior_stability_proof_ineq3}
	\bigg\vert \big[\E^{\pi(y)} & - \E^{\tpi(y)}\big] W_1(\tmu,\tmu^{y})\bigg\vert \nonumber\\
	& = \left|\int  W_1(\tmu,\tmu^{y})\int (\pi(y | x) - \pi(y | x')) \rho(dx,dx')   dy\right| \nonumber \\
	& \leq \int W_1(\tmu,\tmu^{y}) \int \frac{|\pi(y | x) - \pi(y | x')|}{\norm{x-x'}} \cdot \norm{x-x'} \rho(dx,dx')  dy \nonumber \\
	& \leq \int W_1(\tmu,\tmu^{y})  \left(\int \frac{|\pi(y | x) - \pi(y | x')|^2}{\norm{x-x'}^2}\rho(dx,dx')\right)^{\frac 12} dy\left(\int \norm{x-x'}^2 \rho(dx,dx')\right)^{\frac 12} \nonumber \\
	& \leq \int W_1(\tmu,\tmu^{y}) \phi_{L_2'/2}(y) dy \cdot (\E^\mu \psi_{L_2'/2}(x)^2)^{\frac 12} (\E^{\tmu} \psi_{L_2'/2}(x)^2)^{\frac 12} \left(\int \norm{x-x'}^2 \rho(dx,dx')\right)^{\frac 12},
\end{align}
where $\phi_\kappa$ and $\psi_\kappa$ are given by \eqref{eq:phikappa} and \eqref{eq:psikappa}, respectively.
Similar to the proof of Theorem \ref{thm:U1_like_stability} $(ii)$, we observe by Proposition \ref{prop:general_properties_Wp} claim (3) that $W_1(\tmu, \tmu^y) \leq M_1(\tmu) + M_1(\tmu^y)$ and
\begin{equation}
    \label{eq:U1priorstability_proof_ineq4}
    M_1(\tmu^y) \phi_{L_2'/2}(y) \leq \E^\mu \left(\norm{x} \exp(L_2' \norm{x}^2)\right)
    \exp\left(\left(\kappa_1 - \frac 12 \frac{L_2'}{L_1^2+L_2
    } + \frac{L_1^2}{L_2'} - \frac 12\right)\norm{y}_\Gamma^2\right),
\end{equation}
where we applied Lemmas \ref{lem:Phi_basic} (upper bound in \eqref{eq:young_joint} and \ref{lem:Z_lbound} (lower bound), and for $L_2'<L_2$ set $\tau = L_1^2/(L_1^2+L_2')$. Since $\kappa_1>1/2$ can be set arbitrarily close to $1/2$, we have
\begin{equation*}
    \kappa_1 - \frac 12 \frac{L_2'}{L_1^2+L_2
    } + \frac{L_1^2}{L_2'} - \frac 12 = \kappa_1 - \frac 12 + g\left(\frac{L_1^2}{L_2'}\right) < 0
\end{equation*}
for some $L_2'<L_2$ as $g(L_1^2/L_2) < 0$ by assumption.

Finally, since the coupling $\rho$ was arbitrary, it follows that
\begin{equation}
    \label{eq:U1prior_stability_proof_ineq2}
	\left\vert \left[\E^{\pi(y)} - \E^{\tpi(y)}\right] W_1(\tmu,\tmu^{y})\right\vert \leq C W_2(\mu, \tmu).
\end{equation}
Combining inequality \eqref{eq:U1prior_stability_proof_ineq2} with \eqref{eq:U1prior_stability_proof_ineq1} yields the claim.
\end{proof}

\subsubsection{Prior perturbations for $p=2$}

We follow the same proof strategy for Theorem \ref{thm:stability:main} as for Theorem \ref{thm:U1_prior_stability}, with only minor modifications.

\begin{proof}[Proof of Theorem \ref{thm:stability:main}]
Let us decompose the error term into three terms
\begin{eqnarray*}
        \vert U_2 - \widetilde U_2\vert & \leq & \left\vert \E^{\pi(y)}W_{2}^{2}(\mu,\mu^{y}) - \E^{\pi(y)}W_{2}^{2}(\tmu,\mu^{y})\right\vert\\
        && + \left\vert \E^{\pi(y)}W_{2}^{2}(\tmu,\mu^{y}) - \E^{\pi(y)}W_{2}^{2}(\tmu,\tmu^{y})\right\vert\\
        & & + \left\vert \E^{\pi(y)}W_{2}^{2}(\tmu,\tmu^{y}) - \E^{\tpi(y)}W_{2}^{2}(\tmu,\tmu^{y})\right\vert\\
& \leq & \E^{\pi(y)}\left|W_{2}(\mu,\mu^{y}) + W_{2}(\tmu,\mu^{y})\right|\left|W_{2}(\mu,\mu^{y}) - W_{2}(\tmu,\mu^{y})\right|\\
        && + \E^{\pi(y)}\left|W_{2}(\tmu,\mu^{y}) + W_{2}(\tmu,\tmu^{y})\right|\left|W_{2}(\tmu,\mu^{y}) - W_{2}(\tmu,\tmu^{y})\right|\\
        & & + \left\vert \left[\E^{\pi(y)} - \E^{\tpi(y)}\right] W_2^2(\tmu,\tmu^{y})\right\vert\\        
        &\leq & \widetilde{K}_1 W_2(\mu,\tmu) + K_2 \sqrt{\E^{\pi(y)} W_2^2(\mu^y, \tmu^y)} + \left\vert \left[\E^{\pi(y)} - \E^{\tpi(y)}\right] W_2^2(\tmu,\tmu^{y})\right\vert,
\end{eqnarray*}
where we applied the triangle inequality and the Cauchy-Schwartz inequality in first and second terms, respectively. Moreover, by Proposition \ref{prop:general_properties_Wp} the constants $K_1$ and $K_2$ satisfy
\begin{eqnarray*}
	\widetilde{K}_1 & \leq & \sqrt 2 \E^{\pi(y)} \left[\sqrt{M_2(\mu)+M_2(\mu^y)}+\sqrt{M_2(\tmu)+M_2(\mu^y)}\right] \\
	& \leq & 2 \sqrt{M_2(\mu)} + \sqrt{2} \sqrt{M_2(\tmu)+M_2(\mu)} < \infty
\end{eqnarray*}
and
\begin{eqnarray*}
	K_2 & \leq & \sqrt{\E^{\pi(y)} (W_{2}(\tmu,\mu^{y}) + W_{2}(\tmu,\tmu^{y}))^2 } \\
	& \leq & 2 \sqrt{\E^{\pi(y)} (2 M_2(\tmu) + M_2(\mu^y) + M_2(\tmu^y)} < \infty
\end{eqnarray*}
due to Proposition \ref{prop:stab_moment_bound}.

Consider now the third term. Following the same deduction as in inequality \eqref{eq:U1prior_stability_proof_ineq3} we have
\begin{multline*}
	\left\vert \left[\E^{\pi(y)} - \E^{\tpi(y)}\right] W_2^2(\tmu,\tmu^{y})\right\vert \\
	\leq \int W_2^2(\tmu,\tmu^{y}) \phi_{L_2'/2}(y) dy \cdot (\E^\mu \psi_{L_2'/2}(x)^2)^{\frac 12} (\E^{\tmu} \psi_{L_2'/2}(x)^2)^{\frac 12} \left(\int \norm{x-x'}^2 \rho(dx,dx')\right)^{\frac 12},
\end{multline*}
By Proposition \ref{prop:general_properties_Wp} claim (3) that $W_2(\tmu, \tmu^y) \leq 2(M_2(\tmu) + M_2(\tmu^y))$ and
similar to inequality \ref{eq:U1priorstability_proof_ineq4} we have
\begin{equation*}
    M_2(\tmu^y) \phi_{L_2'/2}(y) \leq \E^\mu \left(\norm{x}^2 \exp(L_2' \norm{x}^2)\right)
    \exp\left(\left(\kappa_1 - \frac 12 \frac{L_2'}{L_1^2+L_2
    } + \frac{L_1^2}{L_2'} - \frac 12\right)\norm{y}_\Gamma^2\right),
\end{equation*}
where the exponent is negative for some $\kappa_1>1/2$ and $L_2'<L_2$. In consequence, we have
\begin{equation*}
    \left\vert \left[\E^{\pi(y)} - \E^{\tpi(y)}\right] W_2^2(\tmu,\tmu^{y})\right\vert
    \leq C W_2(\mu, \tmu).
\end{equation*}
Combining the arguments above, we obtain the claim.
\end{proof}

\section{Algorithms and simulations}
\label{sec:algorithms}

In this section, we demonstrate computability of the Wasserstein criterion and the predicted numerical rates through simplified examples. Our computations below focus mainly on the $p=2$ case and connections to optimal transport (see e.g., \cite{villani2008optimal}), but we also demonstrate the predicted convergence rates of empirical measure approximations in the Wasserstein-$1$ distance.

\subsection{Prior stability and empirical measures}

Let us consider approximations of a prior measure $\mu$ by an empirical measure $\mu_{M} = \frac{1}{M}\sum_{m=1}^{M}\delta(x-x^m)$, where $x^m \sim \mu$ i.i.d. In such a case, the posterior measure follows the formula
\begin{equation}
    \mu_M^y = \frac 1M\sum_{m=1}^{M}w_{m}^{y}\delta(x-x^m),
\end{equation}
where 
\begin{equation*}
    w_m^y = \frac{1}{Z_{M}(y)}\exp(-\Phi(x^m,y))\quad \text{and} \quad    
    Z_{M}(y)=\frac{1}{M}\sum_{k=1}^M\exp(-\Phi(x^{k},y))
\end{equation*}
Now, suppose our observation emerges from an inverse problem \eqref{eq:BIP} with $y \sim {\mathcal N}\left({\mathcal G}(x; \theta), \Gamma\right)$. Given the prior $\mu_M$, the evidence follows a Gaussian mixture model, and the expected utility for the approximate model satisfies
\begin{equation*}
    U_1^M = \E^{\pi_M} W_1(\mu_M, \mu_M^y) = \frac{1}{M}\sum_{m=1}^{M}\mathbb{E}^{\mathcal{N}({\mathcal G}(x^m),\Gamma)} 
    W_1(\mu_{M},\mu_{M}^{y}).
\end{equation*}
For general computational perspective, we note that the Wasserstein-$1$ distance between discrete measures is reduced to a linear programming task (see e.g., \cite{peyre2020computational}). 

To demonstrate the approximation rates of the expected utility in Theorem \ref{thm:U1_prior_stability}, let us consider a one-dimensional toy example, where ${\mathcal G} : \R\times [-1,1] \to \R$, where ${\mathcal G}(x;\theta) = 5 \theta^{6} x$. Moreover, we assume a normal prior distribution $\mu = \mathcal{N}(0,1)$. With normally distributed additive measurement noise $\epsilon \sim {\mathcal N}(0,0.05^2)$, the posterior also has Gaussian statistics and enables straightforward means to evaluate the exact expected utility.

It is well-known (see e.g. Corollaries 6.10 and 6.14 in \cite{bobkov2019one}) that empirical approximations of one-dimensional Gaussian distributions satisfy
\begin{equation}
    \E^{\otimes \mu} W_1(\mu, \mu_M) \lesssim \frac 1{\sqrt M} \quad \text{and} \quad
    \E^{\otimes \mu} W_2(\mu, \mu_M) \lesssim \sqrt{\frac{\log \log M}{M}},
\end{equation}
where $\E^{\otimes \mu}$ stands for the ensemble average and the proportionality constants are universal. In light of Theorem \ref{thm:U1_prior_stability}, we now expect to observe
\begin{equation}
    \label{eq:U1_expected_rate_for_empirical}
    \E^{\otimes \mu}|U_1(\theta) - U_1^M(\theta)| \lesssim \sqrt{\frac{\log \log M}{M}},
\end{equation}
where $U_1$ and $U_1^M$ correspond to the true and approximated expected utility, respectively.

Let us now briefly outline the evaluation of the two expected utilities. We utilize identities $W_{1}(\mu_1,\mu_2) = \int_{\mathbb{R}}\vert F_{1}(x) - F_{2}(x) \vert dx$ and $W_{1}(\mu_1,\mu_2) = \int_{[0,1]}\vert F_{1}^{-1}(x) - F_{2}^{-1}(x) \vert dx$,
where $F_{i}$ is the cumulative distribution function of $\mu_i$. 
First, the inverse cumulative distributions of the Gaussian prior and posterior can be expressed in terms of the inverse error function $\text{erf}^{-1}$, allowing direct computation of the $W_1$ distance. Second, for empirical measures, the cumulative distributions can be replaced with their empirical counterparts, enabling efficient evaluation via the first formula. The expectation over the corresponding evidence distributions is estimated as a combination of Gaussian integrals. Here, each integral was approximated using Gauss-Hermite Smolyak quadratures, with 33 nodes (see e.g., \cite{le2010spectral,xiu2010numerical}).

The true expected utility and the numerical convergence is plotted in Figure \ref{fig:U1_combined}.
We pick 3 design values $\theta \in \lbrace A,B,C\rbrace$ and estimate the ensemble average $\E^{\otimes \mu}\vert U_1(\theta) - U_1^M(\theta) \vert$ in each node over varying $M$ ranging from 10 to 39810 with ensemble sizes of 200 samples. We observe convergence rates approximately proportional to $1/\sqrt{M}$.

\begin{figure}[htp]
\centering
\begin{subfigure}[b]{0.48\textwidth}
    \centering
    \includegraphics[width=\textwidth]{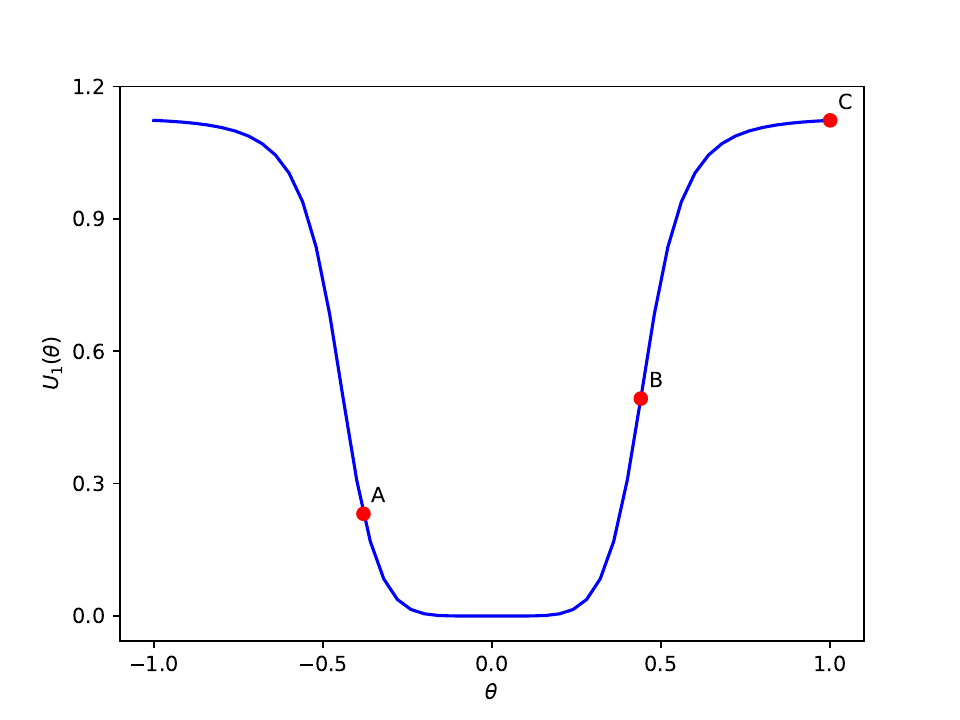}
    \label{U1_nodes}
\end{subfigure}
\hfill
\begin{subfigure}[b]{0.48\textwidth}
    \centering
    \includegraphics[width=\textwidth]{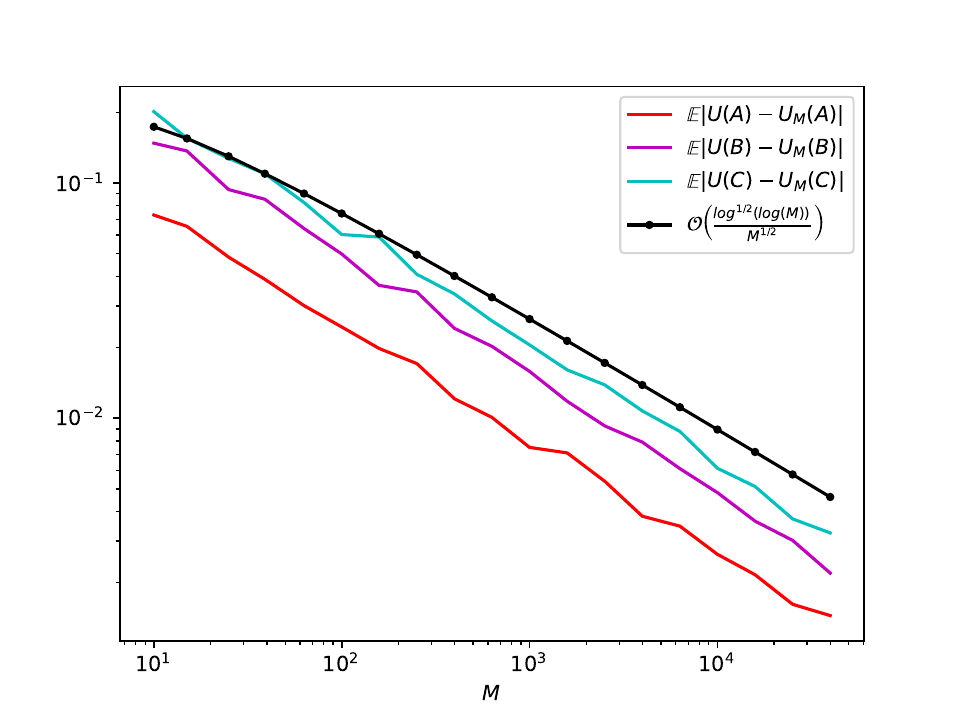}
    \label{U1_convergence}
\end{subfigure}
\caption{Expected utility in the one-dimensional linear problem: (a) true expected utility $U_{1}(\theta)$; (b) convergence behavior under empirical prior approximations.}
\label{fig:U1_combined}
\end{figure}

\subsection{Wasserstein-$2$ criterion and optimal transport}

For what follows, recall the Brenier's Theorem \ref{Optimal coupling}, which provides the theoretical backbone for computational approaches to evaluating the $U_2$ utility function in Euclidean spaces. When measures $\mu_1$ and $\mu_2$ have densities $\rho_1$ and $\rho_2$, respectively, with supports in an open set $\Omega\subset \R^n$, the potential function $\varphi$ from Theorem \ref{Optimal coupling} satisfies the Monge--Amp\'{e}re equation
\begin{equation}
\operatorname{det}\left(D^2 \varphi(x)\right)=\frac{\rho_{1}(x)}{\rho_{2}\left(\nabla \varphi(x)\right)}, \quad x \in \Omega.
\end{equation}
The Monge--Amp\`{e}re equations have been extensively studied in various settings (see e.g., \cite{cafarelli1992boundary,caffarelli1991some,caffarelli1992regularity,caffarelli1996boundary,gigli2011holder,philippis}). For optimal transport problems specifically, the choice of appropriate boundary conditions has been an active area of research \cite{pogorelov1994generalized}.
In particular, when $\Omega$ is a rectangle, the boundary conditions can be replaced by Neumann conditions and an average zero condition, leading to the boundary value problem \cite{froese2012numerical,froese2012_thesis}
\begin{equation}
\label{Monge-Ampere-BVP}
\begin{cases}
\operatorname{det}\left(D^2 \varphi \right)= \rho_{1}(x) / \rho_{2}(\nabla \varphi(x)), & x \in \Omega \\
\nabla \varphi(x) \cdot \mathbf{n}(x)=x\cdot \mathbf{n}(x), & x \in \partial \Omega \\
\varphi \text{ is convex}, & \\
\int_{\Omega} \varphi(x) dx=0 &
\end{cases}
\end{equation}
For more general domains, the transport boundary conditions are more challenging to implement numerically but can be replaced by Hamilton-Jacobi equations over the boundary (see \cite{benamou_britanny} for more details).

In what follows, we estimate the transport map between the prior and the posterior distribution through solving the system \eqref{Monge-Ampere-BVP} and use Brenier's theorem in concert with sampling schemes to estimate the expected utility.
Notice that due to symmetry of the Wasserstein distance, we can estimate transports $T^{y,\theta}_\sharp \mu_0 = \mu^y(\cdot; \theta)$ and $S^{y,\theta}_\sharp \mu^y(\cdot; \theta) = \mu_0$ by switching the role of $\rho_1$ and $\rho_2$. 
We now have
\begin{equation}
    \label{eq:U2_comp}
    U_2(\theta) = \E^{\pi(\cdot; \theta)} \E^\mu \norm{x-T^{y,\theta}(x)}^2
    = \E^{\nu(\cdot, \cdot;\theta)} \norm{x-S^{y,\theta}(x)}^2.
\end{equation}
Below, we provide two examples illustrating how the second identity with $S^{y,\theta}$ in \eqref{eq:U2_comp} can be utilized.
The expectations are approximated numerically via Monte-Carlo, or with Smolyak quadratures. In the case of the expectation respect to $\nu$, the term $\pi(x,y)$ can be replaced by the product $\pi(x)\pi(y\mid x)$ and both integrals can be estimated numerically.

\subsubsection{Example 1: Nonlinear forward mapping}

We consider the study case presented in \cite{huan2013simulation}, where the forward map of $y = {\mathcal G}(x;\theta) + \eta$ satisfies
\begin{equation}
{\mathcal G} : [0,1] \times [0,1]^2 \to \R^2; \quad (x, \theta) \mapsto \left[\begin{array}{l}
x^3 \theta_1^2+x \exp \left(-\left|0.2-\theta_1\right|\right) \\
x^3 \theta_2^2+x \exp \left(-\left|0.2-\theta_2\right|\right)
\end{array}\right].
\end{equation}
Moreover, the prior $\mu\sim\mathcal{U}([0, 1])$ is uniform and the additive noise has zero-mean Gaussian statistics with covariance $\Gamma = 10^{-4}{\rm Id}$. 

Notice that in the one-dimensional setup, the system \eqref{Monge-Ampere-BVP} simplifies and one has an explicit solution
\begin{equation*}
    S^{y,\theta}(x) = \frac 1{Z(y;\theta)}\int_0^1 \exp(-\Phi(x,y; \theta)) dx.
\end{equation*}
Here, the integration is performed with standard quadratures. 
The expectation over the joint distribution is $\nu(\cdot,\cdot; \theta)$ is then carried out using Smolyak quadratures. For the prior distribution and the likelihood, we used the Clenshaw-Curtis quadrature with 33 nodes and the Gauss-Hermite quadrature with 143 nodes, respectively. The expected utility was evaluated on a grid with $51\times 51$ design points. 

In Figure \ref{U_1D_figure}, the expected utility $U_2$ is plotted alongside the classical expected information gain criterion. The latter was generated using the same method as reported in \cite{duong2023stability}.
In EIG criterion the optimal design points are located in the points $A$ and $B$, followed by the corner $C$. Meanwhile, our W-OED criterion promotes the point $C$ in the corner as the optimal design. In order to compare the optimal design criteria, we generated synthetic data with ground truth $x = 0.8$ and with design nodes $A,B,C$. The corresponding posterior densities are showed on Figure \ref{U2_Samples} with the posterior being most concentrates at the point $C$.

\begin{figure}[htp]
\centering\includegraphics[scale = 0.7]{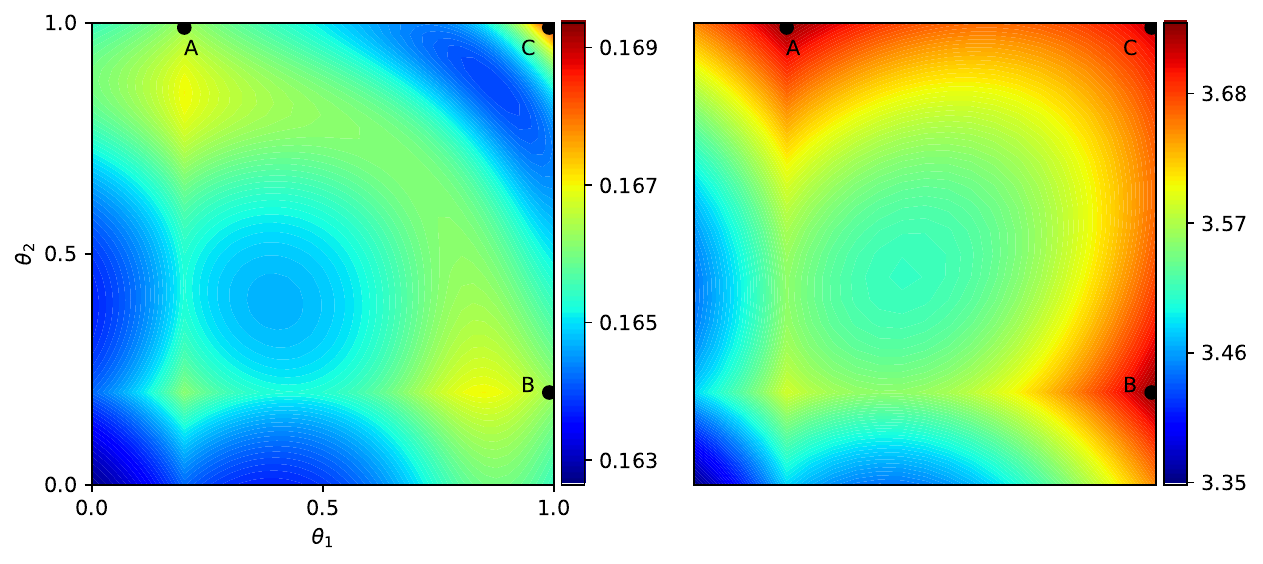}
\caption{Expected utility functions evaluated on the design domain for Example 1. Left: Wasserstein-$2$ information criterion. Right: Expected information gain criteria.}
\label{U_1D_figure}
\end{figure}

\begin{figure}[htp]
\centering\includegraphics[scale = 0.7]{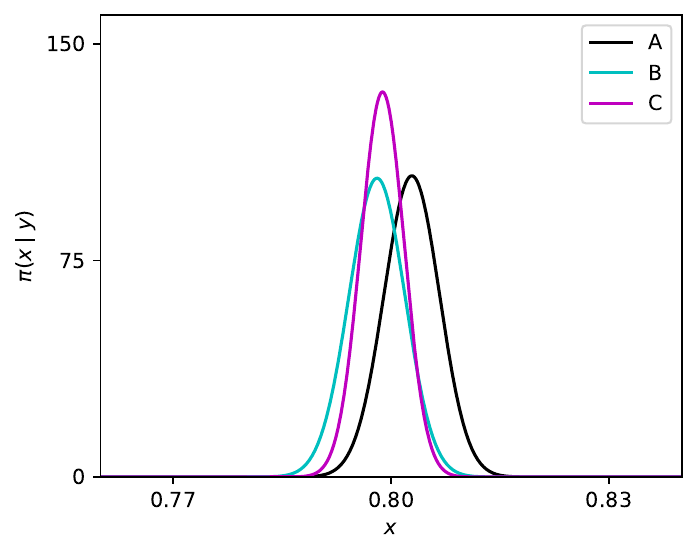}
\caption{Posterior densities for Example 1 based on synthetic data with ground truth $x = 0.8$ and with design nodes $A,B,C$.}
\label{U2_Samples}
\end{figure}

\subsubsection{Example 2: Heat diffusion}

Let us consider heat diffusion with a source $S$ and Neumann boundary conditions
\begin{equation}
\begin{aligned}
\frac{\partial v}{\partial t} & =\Delta v+S(\cdot, x), & & (z, t) \in \Omega \times [0,0.4] \\
\nabla v \cdot n & =0, & & (z, t) \in \partial \Omega \times [0,0.4] \\
v(z, 0) & =0, & & z \in \Omega,
\end{aligned}
\end{equation}
where $\Omega$ is the unit square.
Here, $n$ denotes the normal vector to the boundary and the source term satisfies
\begin{equation}
S(z, t, x)= \begin{cases}\frac{1}{\pi h^2} \exp \left(-\frac{\|z-x\|^2}{2 h^2}\right), & 0 \leq t<\tau \\ 0, & t \geq \tau\end{cases}
\end{equation}
with parameters $h = 0.05$ and $\tau = 0.3$.
We note that the same example was considered in \cite{huan2014gradient}.

Here, we consider the design task of identifying optimal sensor location $\theta\in\Omega$ for recovering the source location $x$
by observations performed at times $t \in \{ 0.08, 0.16, 0.24, 0.32, 0.40\}$. In other words, the inverse problem is to recover $x$ based on the data vector $y = [ v(\theta,t_1) ... v(\theta,t_5)]\in \R^{5}$. Consequently, the forward mapping ${\mathcal G} : \Omega \times \Omega \to \R^5$ with $y = {\mathcal G}(x; \theta)$. In our numerical implementation, we reduce the computational cost of evaluating ${\mathcal G}$ by approximating it with a surrogate model based on polynomial chaos expansions with degree 8 following \cite{huan2014gradient}. 

In our implementation, we construct the mapping $S^{y,\theta}$ in \eqref{eq:U2_comp} and evaluate the expectation over the joint distribution $\nu$ using Smolyak quadratures. In particular, we apply a Clenshaw--Curtis configuration of 34 nodes and Gauss-Hermite with 117 nodes for $\mu$ and $\pi(y\mid x)$ respectively. Moreover, the expected utility was constructed on a grid of $23\times 23$ nodes on the design domain $\Omega$. 

The mapping $S^{y,\theta}$ or more precisely, an approximation of its potential is obtained by solving the system \eqref{Monge-Ampere-BVP} with Radial Basis Function (RBF) approximations. RBF based methods for the Monge--Ampere problem have been an active topic of study 
(see e.g. \cite{jianyu2003numerical, liu2013solving, liu2014iterative,liu2020multiscale,liu2023meshfree}).
In particular, we focus on to a finite difference approach with convexity restrictions (see, e.g., \cite{benamou2012viscosity,engquist2014application,froese2012numerical,froese2012_thesis}).

Following Kansa's asymmetric collocation approach \cite{kansa1990multiquadrics_1,kansa1990multiquadrics_2}, we select two collocation point sets $\lbrace x^{k} \rbrace_{k=1}^{M_I}\subset \Omega$ and $\lbrace x^{k} \rbrace_{k=M_{I}+1}^{M_r}\subset \partial\Omega$, and an ansatz of the form:
\begin{equation}
\label{ansatz}
    \hat \phi^{y,\theta}(x) = \hat \phi(x) = \sum_{k=1}^{M_{r}}\lambda_{k}\psi_{k}(x) + \sum_{j=1}^{M_p}\alpha_{j}p_{j}(x),
\end{equation}
where $\psi_k(x) := \norm{x-x^k}^{4}\log(\norm{x-x^k})$ are the second order thin plate splines, and $\lbrace p_j \rbrace_{j=1}^{M_p}$ are a second-order polynomial basis $\lbrace 1, x_{1}, x_{2}, x_{1}^{2}, x_{1}x_{2},x_{2}^{2}\rbrace$ on $\Omega$.
Moreover, in the two-dimensional case, we can ensure convexity of $\hat \phi$ by imposing the constraint:
\begin{equation}
\label{Kansa_3}
    \nabla^{2}\hat \phi(x) > 0, \quad x\in\partial\Omega.
\end{equation}
Here, we used $M_I = 144$ and $M_r = 192$ the number of nodes in the Monge-Ampere solution with RBFs. 

The resulting utility function $U_2$ was plotted on the figure \ref{U_2D_figure}, next to the EIG criterion. The last one was simulated using the methods from \cite{duong2023stability}. Both cases are symmetric about the midpoint $(0.5, 0.5)$ in the diagonal, horizontal, and vertical directions, as anticipated from the problem setup. Both expected utilities increase when approaching boundaries. However, the Wasserstein criterion prefers the boundary mid-points $\theta \in \{(0.5,0), (1,0.5), (0.5,1), (0,0.5)\}$, while the EIG criterion prefers the corner points.

In order to compare both design criteria, we generated synthetic data with ground truth values $x\in\lbrace (0.09,0.22), (0.25,0.75), (0.75,0.25), (0.75,0.75) \rbrace$ comparing design values $\theta = (0,0)$ (preferred by EIG) and $\theta = (0.5,0)$ (preferred by WOED). The corresponding posterior are shown in Figure \ref{fig:posterior_all}.

While the examples cannot provide a complete picture of the different preferences by the two design criteria, it is still evident that EIG prefers high precision reconstruction from a subset of points (close to the corner at origin) with the cost of reduced precision from other corners (close to $(0, 1)$ and $(1,0)$), while WOED prefers an averaged performance between the neighbouring corners ($0,0$ and $(1,0)$).

\begin{figure}[htp]
\centering\includegraphics[scale = 0.7]{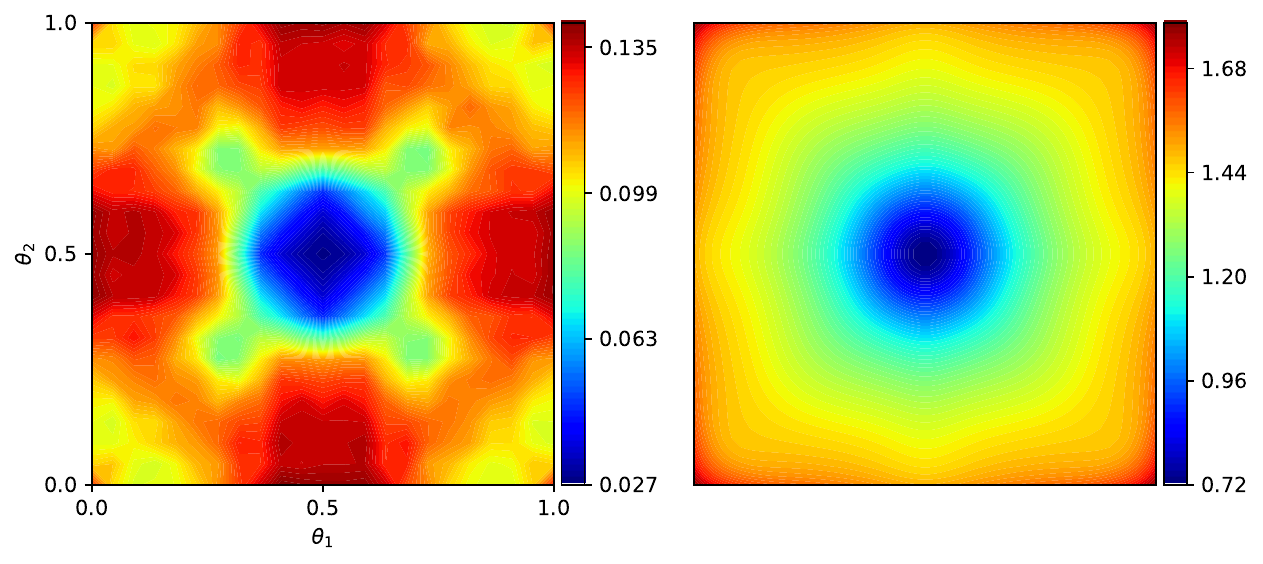}
\caption{Expected utility functions $U(\theta)$ for Example 2. Left: Wasserstein-2 distance criteria. Right: D-OED criteria.}
\label{U_2D_figure}
\end{figure}

\begin{figure}[htp]
\centering
\includegraphics[scale=0.5]{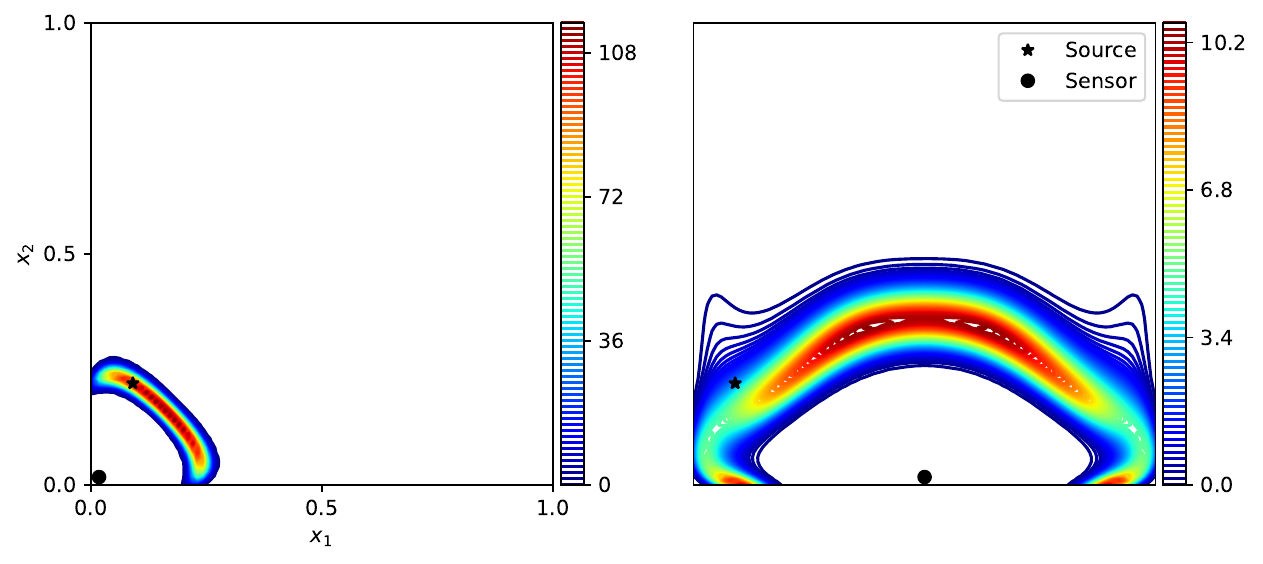}
\vspace{0.5em}
\includegraphics[scale=0.5]{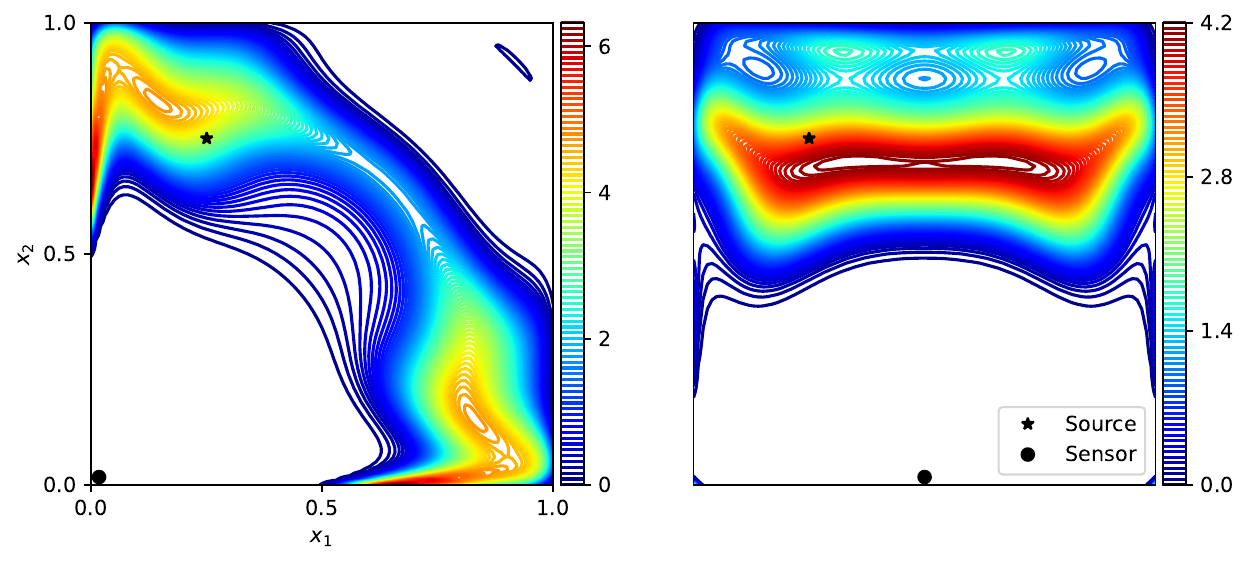}
\vspace{0.5em}
\includegraphics[scale=0.5]{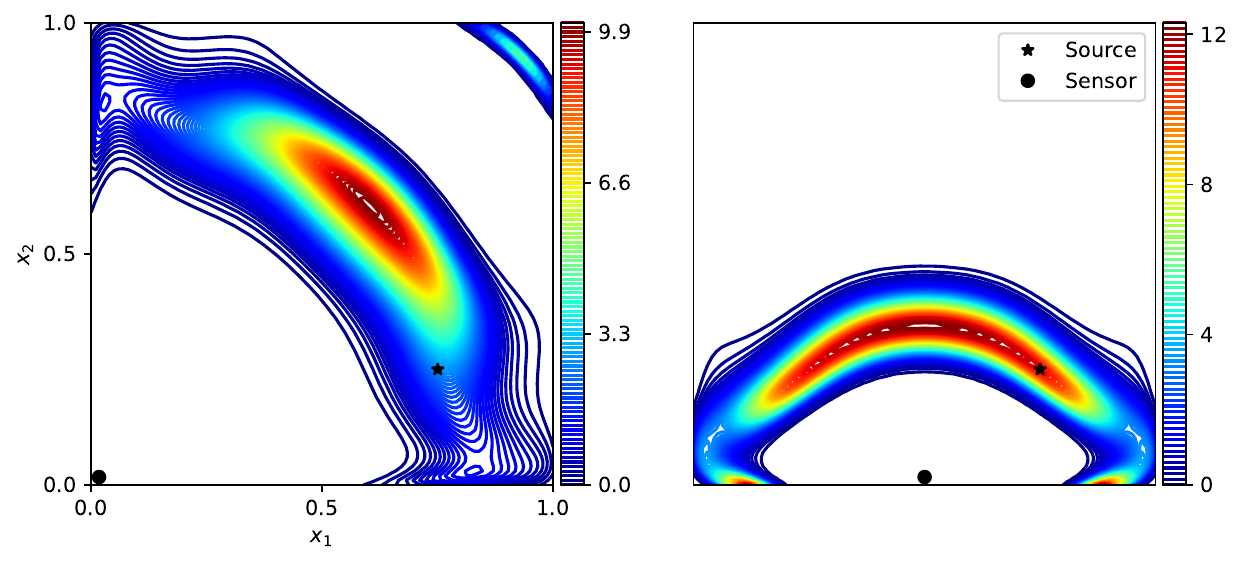}
\vspace{0.5em}
\includegraphics[scale=0.5]{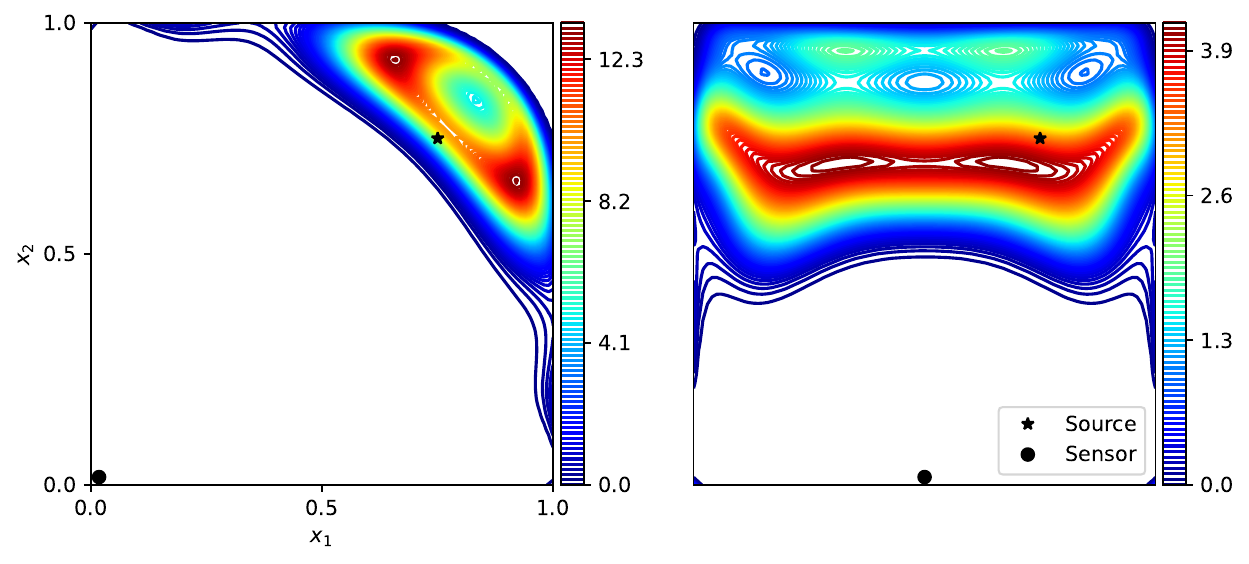}
\caption{Posterior densities generated at optimal design nodes. Each row shows posteriors for different values of $x$: (a) $x = (0.09, 0.22)$, (b) $x = (0.25, 0.75)$, (c) $x = (0.75, 0.25)$, and (d) $x = (0.75, 0.75)$. In each subfigure, the left panel corresponds to the D-OED optimal design and the right to the Wasserstein-2 distance optimal design.}
\label{fig:posterior_all}
\end{figure}

\section{Conclusions}
\label{sec:conclusions}

In this paper, we have proposed a novel Wasserstein distance based information criterion in the context of Bayesian optimal experimental design. We highlighted several key properties that make this utility particularly appealing for large-scale inference problems, including its scalability to infinite-dimensional Hilbert space settings and its validity as an information criterion in the sense of Ginebra \cite{ginebra2007}. In the case of linear-Gaussian models, we showed that the Wasserstein-$2$ criterion admits a closed-form expression that is closely related to, but distinct from, Bayesian A-optimality and its weighted variants. This expression nonetheless enables efficient computation in the linear-Gaussian setting.
%
Our stability analysis further supports the robustness of the proposed approach for nonlinear inverse problems with a Gaussian likelihood model. Finally, we demonstrated the practical computability of the Wasserstein criterion using standard algorithms from the optimal transport literature.

The paper opens several promising directions for future research. First, it would be useful to understand the stability of the method for more general likelihood models. Furthermore, the coupling of the Lipschitz constant $L_1$ and the exponential moment of the prior with constant $L_2$ (Section~\ref{sec:stability}) is restrictive, and more work is needed to relax this assumption. Second, our computational methods so far address only low-dimensional examples. Future work should develop efficient methods for high-dimensional problems. 
To this end, we point to recent papers on learning \textit{conditional optimal transport maps} from samples, which represent such maps as gradients of partially input-convex neural networks \cite{huang2020convex,bunne2022supervised,baptista2024conditional,wang2025efficient}. Conditional OT maps are directly suited to the Wasserstein OED setting: rather than solving the Monge--Amp\`ere equation many times, to compute the Wasserstein utility for many distinct values of the data $y$, these maps instead encode $y$-parameterized OT maps from prior to posterior. As explained in \cite[Section 2.3]{baptista2024conditional}, the weighted $L^2$ norm of such a map corresponds precisely to the expected Wasserstein utility proposed here, at least for $p=2$, with the possibility of extensions to other values of $p$.

\bibliographystyle{plain} 
\bibliography{bibliography}       %

\end{document}